\newtheorem{theorem}{Theorem}[section]
\newtheorem{lemma}{Lemma}[section]
\newtheorem{proposition}{Proposition}[section]
\newtheorem{corollary}{Corollary}[section]
\newenvironment{proof}{\textbf{Proof.}}{$\Box$}
\newtheorem{definition}{Definition}[section]
\newtheorem{example}{Example}[section]
\newtheorem{conjecture}{Conjecture}[section]
\newtheorem{problem}{Problem}[section]
\newtheorem{algo}{Algorithm}[section]
\def\cP{{\mathcal{P}}}
\def\CC{{\mathbb{C}}}
\def\NN{{\mathbb{N}}}
\def\RR{{\mathbb{R}}}
\title{Determinantal sampling designs}
\author{V. Loonis\footnote{Insee, Division des M\'ethodes et 
des R\'ef\'erentiels G\'eographiques, Paris, France. Email:vincent.loonis@insee.fr}, X. Mary\footnote{Modal'X, UPL, Univ Paris Nanterre, F92000 Nanterre France. Email:xavier.mary@u-paris10.fr}}
\date{}
\begin{document}
\maketitle

\begin{abstract}
In this article, recent results about point processes are used in sampling theory. Precisely, we define and study a new class of sampling designs: determinantal sampling designs. The law of such designs is known, and there exists a simple selection algorithm. We compute exactly the variance of linear estimators constructed upon these designs by using the first and second order inclusion probabilities. Moreover, we obtain asymptotic and finite sample theorems. We construct explicitly fixed size determinantal sampling designs with given first order inclusion probabilities. We also address the search of optimal determinantal sampling designs. 
\end{abstract}



\section{Introduction}
The goal of sampling theory is to acquire knowledge of a parameter of interest $\theta$ using only partial information. The parameter $\theta$ is a function of $\{y_{k}, k\in U\}$, usually the sum or the mean of the $y_k$'s. This is done by means of a sampling design, through which a random subset $\{y_k,k\in U\}$ is observed, and the construction of an estimator $\hat{\theta}$ of $\theta$ based on this random sample. The properties of the sampling design are thus of crucial importance to get ``good'' estimators. In practice, the following issues are fundamental:
\begin{itemize}
\item simplicity of the design,
\item knowledge of the first and, possibly, second order inclusion probabilities, 
\item control of the size of the sample,
\item effective construction, in particular with prescribed unequal probabilities,
\item statistical amenability (consistency, central limit theorem,...),
\item low Mean Square Error (MSE)/Variance of the estimator. 
\end{itemize}

In this article, we introduce a new parametric family of sampling designs indexed by Hermitian contracting matrices, \textit{determinantal sampling designs}, that addresses all theses issues. Section \ref{secDet} gives their definition and probabilistic properties. In particular, it is shown that for this family, inclusion probabilities are known for any order. Section \ref{secDet} also provides a selection algorithm. Section \ref{secEst} studies the statistical properties of linear estimators of a total. It gives algebraic and geometric formulas for the $\mathrm{MSE}$ which provide necessary and sufficient conditions for obtaining a perfectly balanced determinantal sampling design. In addition, we give asymptotic theorems and concentration inequalities. Sections \ref{Sec:ConstructingEqual} and \ref{Sec:ConstructingUnequal} provide effective constructions of fixed size determinantal sampling designs with fixed first order inclusion probabilities. Optimal properties are discussed in Section \ref{SecOpt}. Finally Section \ref{SecSim} shows simulation studies. In particular, we explore the empirical optimality of the sampling design based upon the matrix constructed in Section \ref{Sec:ConstructingUnequal}.



\section{Definition and general properties}\label{secDet}

\subsection{Definition}
According to its definition, an \textit{unordered sampling design without replacement} (simply called \textit{sampling design} afterwards) is a \textit{simple point process} on a finite set $U$, that is to say a probability on $2^U$, set of parts of $U$ (\citet{borodin2009determinantal}, \citet{tille2011sampling}). 

Among simple point processes, the general structure and properties of \textit{determinantal point processes} have attracted a lot of attention recently (\citet{borodin2009determinantal}, \citet{hough2006determinantal}, \citet{hough2009zeros}, \citet{lyons2003determinantal}, \citet{soshnikov2000determinantal}). This is (in part) due to the ubiquity of determinantal point processes in probability theory. They appear for instance in the study of random structures such as uniform spanning trees, zeros of random polynomials and spectra of random matrices. In the case of a finite set $U$, determinantal point processes are defined through associated matrices called kernels (such a kernel is however not unique). Many probabilistic properties of these processes therefore depend on algebraic properties of their kernels, but most of the results concern Hermitian matrices only. For this reason, and though there exist many interesting examples of determinantal point processes associated to non-Hermitian matrices, \textbf{we restrict our attention to the Hermitian case}. 

Unless specifically stated, matrices will be complex matrices. For a complex number $z$, $\overline{z}$ is its conjugate and $|z|=\sqrt{z\overline{z}}$ its modulus. We introduce the following notation. For any square matrix $K$ indexed by $U$ and $s\subseteq U$, $K_{|s}$ denotes the submatrix of $K$ whose rows and columns are indexed by $s$. We will also use the following convention: the determinant of the empty matrix is $1$, as is a product over the empty set $(\prod_{k \in \emptyset} \alpha_k=1)$. From the definition of determinantal point processes we derive the following definition of \textit{determinantal sampling designs}:
 
\begin{definition}[Determinantal sampling design]\label{defDSD}
A sampling design $\mathcal{P}$ on a finite set $U$ is a determinantal sampling design if there exists a Hermitian matrix $K$ indexed by $U$, called kernel, such that for all $s \in 2^U$, $ \sum_{s'\supseteq s} \cP(s')=\det(K_{|s}).$ This sampling design is denoted by $DSD(K)$.\\
A random variable $\mathbb{S}$ with values in $2^U$ and law $DSD(K)$ is called a determinantal random sample (with kernel $K$). It satisfies, for all $s \in 2^U$,
\[pr(s \subseteq \mathbb{S})=\det(K_{|s}).\]
We will also write $\mathbb{S}\sim DSD(K)$.
\end{definition}


In the following we will always identify the finite population $U$ of size $N$ with $\{1,\ldots,N\}$. It follows from the definition that determinantal sampling designs are unordered and without replacement. \citet{macchi1975coincidence} and \citet{soshnikov2000determinantal} proved that a Hermitian matrix $K$ defines a determinantal point process, and as a consequence a $DSD(K)$, iff (if and only if) $K$ is a \textit{contracting matrix}, that is a matrix whose eigenvalues are in $\left[0,1\right]$. It follows from this fundamental result that determinantal sampling designs form a parametric family of sampling designs, parametrized by contracting matrices. 

\begin{example}[Poisson sampling]\label{exPoisson}
Consider a diagonal matrix $K^{\Pi}$ with diagonal elements $K^{\Pi}_{kk}=\Pi_k$ with values in $[0,1]$. It is a contracting matrix and the corresponding determinantal sampling design satisfies, for all $s\in 2^U$, 
\begin{equation*}
pr(s \subseteq \mathbb{S} )=\underset{k \in s}\prod \Pi_k.\end{equation*} 
The inclusion-exclusion principle implies that
\begin{equation*}
pr(\mathbb{S}=s)=\underset{k \in s}\prod \Pi_k\underset{k \notin s}\prod (1-\Pi_k) \label{poisson}.\end{equation*} This is precisely the equation of the Poisson sampling design (with first order inclusion probabilities $pr(k\in \mathbb{S})=\Pi_k$), which therefore belongs to the family of determinantal sampling designs.
\end{example}

Let $K$ be a Hermitian projection matrix. Then $K=\overline{K}\,^T$ and $K^2=K$, hence $K$ is an orthogonal projection matrix. Therefore, we will make no distinction between projections and orthogonal projections. As the eigenvalues of $K$ are $0$ or $1$, then $K$ is a contracting matrix. We can thus associate to $K$ a determinantal sampling design $DSD(K)$. We will see that $DSD(K)$ enjoys interesting statistical and computational properties. Such determinantal point processes are sometimes called determinantal projection processes (\citet{hough2006determinantal}) or elementary determinantal point processes (\citet{kulesza2011learning}) in the literature. \\
We will usually write the projection matrix $K$ as $K=V\overline{V}\,^T$, where $V$ is the $(N\times n)$ matrix of an $n$ orthonormal basis of the range of $K$.

Among these sampling designs,we single out three particular cases.
\begin{example}[Projection]\label{exProj}
Let $J_N$ be the square matrix of size $N$ with all terms equal to $1$.
\begin{enumerate}
\item $DSD(\frac{1}{N}J_N)$ is the simple random sampling (SRS) of size $1$.
\item $DSD(I_N- \frac{1}{N}J_N)$ is the SRS of size $N-1$. 
\item If $K$ is a diagonal projection matrix, $DSD(K)$ is a degenerated (non-random) sampling design.
\end{enumerate} 
\end{example}

Apart from the cases $n=N-1$ and $n=1$, \citet{kulesza2012learning} proved that the SRS is not a determinantal sampling design.

\subsection{Inclusion probabilities}

The following formulas for the inclusion probabilities of order $1$ and $2$ follow from Definition \ref{defDSD}. As usual in sampling theory, we denote them by $\pi_k$ and $\pi_{kl}$. In matrix formulation, for all $k,l\in U,$ setting 

\begin{eqnarray}
\pi_k &=& pr(k\in \mathbb{S})= K_{kk}, \label{pikd}\\
\pi_{kl} &=& pr(k,l\in \mathbb{S})=K_{kk}K_{ll}-\mid K_{kl} \mid ^2 \, (k\neq l), \label{pikld}
\\ 
\label{deltakld}
\Delta_{kl} &=& \left \{ \begin{array}{l} 
\pi_{kl}-\pi_k\pi_l=-\mid K_{kl} \mid ^2 \,(k \neq l), \\
\pi_k(1-\pi_k)=K_{kk}(1-K_{kk}) \,(k = l).\\
\end{array} \right. 
\end{eqnarray} 
it holds that
\begin{equation}\label{deltad}
\Delta =\overline{(I_N-K)}*K= (I_N-K)*\overline{K}, \end{equation} 
where $*$ is the Schur-Hadamard (entrywise) matrix product.

\begin{proposition}\label{senyates}
From \eqref{deltakld} a determinantal sampling design satisfies the so-called \textit{Sen-Yates-Grundy conditions}: 
\begin{equation}\label{EqSYG}
\pi_{kl}\leq \pi_k\pi_l \,(k\neq l).
\end{equation}
\end{proposition}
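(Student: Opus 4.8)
The plan is to read off the inequality directly from the second-order formula \eqref{deltakld}, which has already been established from Definition \ref{defDSD}. Fix $k,l\in U$ with $k\neq l$. According to \eqref{deltakld}, the covariance-type quantity $\Delta_{kl}$ equals $\pi_{kl}-\pi_k\pi_l$, and this is in turn equal to $-\mid K_{kl}\mid^2$. So the whole content of the proposition is that $-\mid K_{kl}\mid^2\le 0$.

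The single substantive step is then the observation that $\mid K_{kl}\mid^2$ is the squared modulus of a complex number, hence a nonnegative real: $\mid K_{kl}\mid^2 = K_{kl}\overline{K_{kl}}\ge 0$. Rearranging $-\mid K_{kl}\mid^2\le 0$ gives $\pi_{kl}-\pi_k\pi_l\le 0$, i.e. $\pi_{kl}\le\pi_k\pi_l$, which is exactly \eqref{EqSYG}.

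There is essentially no obstacle here; the work was done when \eqref{pikld} and \eqref{deltakld} were derived from the determinantal structure (the formula $\pi_{kl}=K_{kk}K_{ll}-\mid K_{kl}\mid^2$ is itself just the $2\times 2$ determinant $\det(K_{|\{k,l\}})$, which equals $pr(k,l\in\mathbb{S})$). The only point worth stating explicitly is that Hermitianity is not even needed for this particular inequality — it is purely the $\ge 0$ sign of a squared modulus — although it is of course what guarantees $K_{kk},K_{ll}$ are real and that the design is well-defined. I would therefore present the argument in two or three lines, citing \eqref{deltakld}.
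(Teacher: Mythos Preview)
Your argument is correct and matches the paper's own treatment: the proposition is stated as an immediate consequence of \eqref{deltakld}, and you spell out precisely that $\pi_{kl}-\pi_k\pi_l=-\mid K_{kl}\mid^2\le 0$ because a squared modulus is nonnegative. There is nothing to add or change.
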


More generally, a determinantal sampling design has \textit{negative associations} (\citet{lyons2003determinantal}). In particular, for disjoint subsets $A$ and $B$ it holds that
\begin{equation*}\label{EqNegAssoc}
pr(A\cup B \subseteq \mathbb{S})\leq pr(A \subseteq \mathbb{S})pr(B \subseteq \mathbb{S})
\end{equation*}

It was shown recently that determinantal point processes actually enjoy the \textit{strong Rayleigh property} (\citet{borcea2009negative}, \citet{pemantle2014concentration}), a technical property stronger than negative association. This property can be defined in terms of the localization of the zeros of the generating function of the process. These two properties (negative association, strong Rayleigh property) proved very useful for the study of statistics of determinantal processes (\citet{yuan2003central}, \citet{branden2012negative}, \citet{pemantle2014concentration}). Some results will be used in Section \ref{secEst}.

\subsection{Sample size}

Of major importance to statisticians is the sample size of the random sample. It is for instance very common in practice to work with fixed size samples, that is with samples whose size is non-random and given. The sample size of a determinantal random sample follows from Theorem $7$ in \citet{hough2006determinantal}. For a set $A$, let $\sharp A$ denotes its cardinal and for a Hermitian matrix $K$, let $Sp(K)=\{\lambda_i, i\in N\}$ be the set of eigenvalues of $K$ (with their multiplicities).

\begin{theorem}[Sample size]\label{thtaille}
Let $\mathbb{S}\sim DSD(K)$. Then the random variable $\sharp \mathbb{S}$ has the law of a sum of $N$ independent Bernoulli variables $B_1,\cdots,B_N$ of parameters $\lambda_1,\cdots,\lambda_N,$ the elements of $Sp(K)$.
\end{theorem}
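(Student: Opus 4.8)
The plan is to compute the probability generating function of $\sharp\mathbb{S}$ and recognize it as the pgf of a sum of independent Bernoulli variables with parameters equal to the eigenvalues of $K$. The starting point is the inclusion-exclusion identity that inverts the relation in Definition \ref{defDSD}: from $\sum_{s'\supseteq s}\cP(s')=\det(K_{|s})$ one recovers, for each $t\in 2^U$,
\[
pr(\mathbb{S}=t)=\sum_{s\supseteq t}(-1)^{\sharp s-\sharp t}\det(K_{|s}).
\]
Equivalently, and more conveniently, one uses the known generating-function identity for determinantal processes: for a formal (or complex) variable $z$,
\[
\EE\!\left(z^{\sharp\mathbb{S}}\right)=\sum_{t\in 2^U} pr(\mathbb{S}=t)\,z^{\sharp t}=\det\!\bigl(I_N-K+zK\bigr).
\]
This last equality is the crux, and proving it is the main step; everything else is immediate.

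First I would establish the generating-function identity. One clean route: write $\EE(z^{\sharp\mathbb{S}})=\EE\bigl(\prod_{k\in U}(1+(z-1)\mathbf{1}_{k\in\mathbb{S}})\bigr)$, expand the product over subsets $s\subseteq U$ to get $\sum_{s}(z-1)^{\sharp s}\,pr(s\subseteq\mathbb{S})=\sum_{s}(z-1)^{\sharp s}\det(K_{|s})$, and then invoke the classical fact that $\sum_{s\subseteq U}w^{\sharp s}\det(K_{|s})$ is the characteristic-type polynomial $\det(I_N+wK)$ (the sum of principal minors of $K$ of size $j$ is the $j$-th elementary symmetric function of the eigenvalues, i.e. the coefficient of $w^j$ in $\det(I_N+wK)=\prod_i(1+w\lambda_i)$). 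Taking $w=z-1$ gives $\EE(z^{\sharp\mathbb{S}})=\det(I_N+(z-1)K)=\det(I_N-K+zK)$. Alternatively, since the statement is attributed to Theorem 7 of \citet{hough2006determinantal}, I would be content to cite that result for the generating-function identity and focus the write-up on the diagonalization step.

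Second, I would diagonalize. Since $K$ is Hermitian it is unitarily diagonalizable, $K=U\,\mathrm{diag}(\lambda_1,\dots,\lambda_N)\,\overline{U}\,^T$ with $\lambda_i\in[0,1]$ (contracting), so
\[
\EE\!\left(z^{\sharp\mathbb{S}}\right)=\det\!\bigl(I_N-K+zK\bigr)=\prod_{i=1}^{N}\bigl(1-\lambda_i+z\lambda_i\bigr)=\prod_{i=1}^{N}\bigl((1-\lambda_i)+\lambda_i z\bigr).
\]
The right-hand side is exactly the probability generating function of $\sum_{i=1}^N B_i$ where the $B_i$ are independent with $B_i\sim\mathrm{Bernoulli}(\lambda_i)$. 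Since a probability distribution on $\{0,1,\dots,N\}$ is determined by its generating function, $\sharp\mathbb{S}$ has the claimed law, which proves the theorem.

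The only real obstacle is justifying the identity $\sum_{s\subseteq U}w^{\sharp s}\det(K_{|s})=\det(I_N+wK)$ together with the interchange of expectation and the finite expansion — both are elementary (the former is the standard principal-minor/elementary-symmetric-function identity, the latter is a finite sum so no convergence issue arises), so in fact the proof is short once the generating-function identity is in hand; the substantive content has already been done upstream in \citet{hough2006determinantal}, and I would lean on that citation while still recording the one-line diagonalization that converts it into the Bernoulli statement.
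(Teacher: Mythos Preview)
Your argument is correct: the generating-function identity $\EE(z^{\sharp\mathbb{S}})=\det(I_N+(z-1)K)$ follows from expanding $\prod_k(1+(z-1)\mathbf{1}_{k\in\mathbb{S}})$, using $pr(s\subseteq\mathbb{S})=\det(K_{|s})$, and the principal-minor identity; diagonalizing then gives $\prod_i((1-\lambda_i)+\lambda_i z)$, which is the pgf of $\sum_i B_i$.

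As for comparison with the paper: there is nothing to compare. The paper does not supply its own proof of this theorem; it simply attributes the result to Theorem~7 of \citet{hough2006determinantal} and states it. Your write-up is therefore strictly more detailed than what the paper does, and in fact reproduces the standard argument behind the cited result. Your own remark that one could ``be content to cite that result'' is exactly the route the paper takes.
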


\begin{corollary}[Sample size (2)]\label{corEspTaille}
Let $\mathbb{S}\sim DSD(K)$. Then
\begin{enumerate}
\item $E\left(\sharp\mathbb{S}\right)=tr(K)$. 
\item $var(\sharp\mathbb{S})=tr(K-K^2)=\displaystyle\sum_{k \in Sp(K)} \lambda_k(1-\lambda_k)=\displaystyle\sum_{k,l\in U} \Delta_{kl}$.
\item $pr(\mathbb{S}=\emptyset)=0$ iff $1\in Sp(K)$.
\item The total number of points of $DSD(K)$ is less than or equal to $rank(K)$.
\item $DSD(K)$ is a \textit{fixed size determinantal sampling design} iff $K$ is a projection matrix.
\end{enumerate}
\end{corollary}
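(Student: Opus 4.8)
The whole corollary follows from Theorem \ref{thtaille}, which identifies $\sharp\mathbb{S}$ in law with $B_1+\cdots+B_N$ where $B_i\sim\mathrm{Bernoulli}(\lambda_i)$ and $\lambda_1,\ldots,\lambda_N$ are the eigenvalues of $K$ (with multiplicity). So the plan is to read off each of the five assertions from this representation, using elementary facts about sums of independent Bernoulli variables together with the spectral identities $\mathrm{tr}(K)=\sum_i\lambda_i$ and $\mathrm{tr}(K^2)=\sum_i\lambda_i^2$.

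For item 1, $E(\sharp\mathbb{S})=\sum_i E(B_i)=\sum_i\lambda_i=\mathrm{tr}(K)$. For item 2, independence gives $\mathrm{var}(\sharp\mathbb{S})=\sum_i\mathrm{var}(B_i)=\sum_i\lambda_i(1-\lambda_i)=\mathrm{tr}(K)-\mathrm{tr}(K^2)=\mathrm{tr}(K-K^2)$; the last equality with $\sum_{k,l\in U}\Delta_{kl}$ follows from \eqref{deltad}, since $\sum_{k,l}\Delta_{kl}=\sum_{k,l}(I_N-K)_{kl}\overline{K_{kl}}=\sum_k(1-K_{kk})K_{kk}-\sum_{k\neq l}|K_{kl}|^2$, and for a Hermitian matrix $\sum_{k,l}|K_{kl}|^2=\mathrm{tr}(K\overline{K}^T)=\mathrm{tr}(K^2)$, which reconciles the two expressions. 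For item 3, $pr(\mathbb{S}=\emptyset)=pr(\sharp\mathbb{S}=0)=\prod_i(1-\lambda_i)$, and since every $\lambda_i\in[0,1]$ this product is $0$ iff some factor vanishes, i.e. iff $1\in Sp(K)$. For item 4, $\sharp\mathbb{S}=\sum_i B_i$ is almost surely at most the number of indices $i$ with $\lambda_i>0$, which is $\mathrm{rank}(K)$ (using that $K$ is Hermitian, hence diagonalizable with the number of nonzero eigenvalues equal to the rank).

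Item 5 is the one requiring a genuine (though short) argument in both directions. If $K$ is a projection, its eigenvalues are all $0$ or $1$, so each $B_i$ is deterministic and $\sharp\mathbb{S}=\#\{i:\lambda_i=1\}=\mathrm{rank}(K)=\mathrm{tr}(K)$ almost surely, a fixed size. Conversely, if $\sharp\mathbb{S}$ is almost surely constant, then $\mathrm{var}(\sharp\mathbb{S})=0$, so by item 2 every $\lambda_i(1-\lambda_i)=0$, forcing each $\lambda_i\in\{0,1\}$; a Hermitian matrix with spectrum in $\{0,1\}$ is an orthogonal projection. I expect the main (minor) obstacle to be bookkeeping in item 2 — matching the three expressions for the variance and in particular justifying $\sum_{k\neq l}|K_{kl}|^2=\mathrm{tr}(K^2)-\sum_k K_{kk}^2$ via Hermitianity — rather than anything conceptually deep, since Theorem \ref{thtaille} does the heavy lifting.
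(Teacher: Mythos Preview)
Your proof is correct and follows essentially the same approach as the paper: everything is read off from Theorem~\ref{thtaille} together with the spectral decomposition of the Hermitian kernel $K$. The only cosmetic difference is that for the identity $\mathrm{var}(\sharp\mathbb{S})=\sum_{k,l\in U}\Delta_{kl}$ the paper simply invokes the general sampling-theory fact (citing S\"arndal et al.), whereas you derive it directly from \eqref{deltad} and Hermitianity; both routes are valid and yield the same conclusion.
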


\begin{proof}
Let $\pi=(\pi_1,\ldots,\pi_N)\,^T$ be the vector of first inclusion probabilities. Then $var(\sharp\mathbb{S})=\sum_{k,l\in U}  \Delta_{kl}$  (see  \citet{sarndal2003model}).
The other results follow directly from Theorem \ref{thtaille} and the spectral decomposition of Hermitian matrices.
\end{proof}

Recall that in case of fixed size sampling designs we have the formula (see \citet{sarndal2003model}): for any $k\in U$, 
\begin{equation}\label{eqSDtaillefixe}
\sum_{l\in U} \Delta_{kl}=0.
\end{equation}

\subsection{Additional properties}

We give here some other general probabilistic results on determinantal sampling designs and their interpretation in terms of sampling theory. We refer to \citet{lyons2003determinantal} and \citet{hough2006determinantal} for their probabilistic versions. 

\begin{proposition}[Complementary sample]\label{propComplement}
Let $\mathbb{S}\sim DSD(K)$. The complementary random sample $\mathbb{S}^c$ is a determinantal random sample with kernel $I_N-K$.
\end{proposition}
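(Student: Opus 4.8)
The plan is to show that $\mathbb{S}^c$ is a determinantal sampling design with kernel $I_N - K$ by verifying the defining identity of Definition \ref{defDSD}, namely that for every $s \in 2^U$, $pr(s \subseteq \mathbb{S}^c) = \det\bigl((I_N-K)_{|s}\bigr)$. First I would note that $I_N - K$ is Hermitian and, since $K$ is contracting, its eigenvalues are $1-\lambda_i \in [0,1]$, so $I_N-K$ is itself a contracting matrix and a legitimate kernel. Next I would rewrite the event $\{s \subseteq \mathbb{S}^c\}$ as $\{s \cap \mathbb{S} = \emptyset\}$, i.e. the event that none of the points of $s$ is selected.

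The key computation is to express $pr(s \cap \mathbb{S} = \emptyset)$ via inclusion–exclusion over the subsets of $s$:
\[
pr(s \cap \mathbb{S} = \emptyset) = \sum_{t \subseteq s} (-1)^{\sharp t}\, pr(t \subseteq \mathbb{S}) = \sum_{t \subseteq s} (-1)^{\sharp t} \det(K_{|t}).
\]
So the claim reduces to the purely algebraic identity
\[
\det\bigl((I_N - K)_{|s}\bigr) = \sum_{t \subseteq s} (-1)^{\sharp t} \det(K_{|t}),
\]
which is exactly the multilinear expansion of the determinant of $(I - K)_{|s} = I_{|s} - K_{|s}$: expanding $\det(I_{|s} - K_{|s})$ by multilinearity in the columns (or rows), each term corresponds to choosing a subset $t \subseteq s$ of columns where one takes the $-K$ contribution and the complementary columns where one takes the identity contribution; the identity columns force the corresponding rows, and what survives is $(-1)^{\sharp t}\det(K_{|t})$. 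This is the standard principal-minor expansion of a characteristic-polynomial-type determinant, and I would either cite it or prove it in one line by induction on $\sharp s$.

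The only mildly delicate point — and the one I would be most careful about — is the justification of the inclusion–exclusion step: one must make sure the events $\{t \subseteq \mathbb{S}\}$ are handled with the correct signs, which is routine since $pr(s \cap \mathbb{S} = \emptyset) = pr\bigl(\bigcap_{k \in s}\{k \notin \mathbb{S}\}\bigr)$ and one applies inclusion–exclusion to the complementary events $\{k \in \mathbb{S}\}$ together with the hypothesis that $pr(t \subseteq \mathbb{S})$ is multiplicative-free, i.e. simply given by $\det(K_{|t})$. Once the two displays above are in place, combining them gives $pr(s \subseteq \mathbb{S}^c) = \det\bigl((I_N-K)_{|s}\bigr)$ for all $s$, which is precisely the statement that $\mathbb{S}^c \sim DSD(I_N - K)$. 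I expect no real obstacle here; the result is essentially the finite-set shadow of the classical complementation property of determinantal processes, and the whole argument is a short combination of inclusion–exclusion with the multilinear expansion of a determinant.
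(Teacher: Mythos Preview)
Your argument is correct. The inclusion--exclusion step and the principal-minor expansion $\det(I_{|s}-K_{|s})=\sum_{t\subseteq s}(-1)^{\sharp t}\det(K_{|t})$ combine exactly as you describe, and the observation that $I_N-K$ is again a Hermitian contraction is the right way to check that the resulting kernel is legitimate.

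As for comparison with the paper: the paper does not give its own proof of this proposition. It simply states the result and refers the reader to \citet{lyons2003determinantal} and \citet{hough2006determinantal} for the general determinantal-point-process versions. Your proof is essentially the standard finite-population argument one finds in that literature (inclusion--exclusion plus the multilinear expansion of $\det(I-K)$), so you have supplied precisely what the paper outsources. One minor stylistic remark: the phrase ``multiplicative-free'' in your last paragraph is unnecessary and potentially confusing; all you use is that $pr(t\subseteq\mathbb{S})=\det(K_{|t})$, which is the defining property of a $DSD$, so you can drop that aside.
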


\begin{proposition}[Domain]\label{propDomain}
Let $\cP_K$ be a determinantal sampling design on $U$ with kernel $K$, and let $A\subseteq U$ be a subpopulation (or domain). Then the restriction $DSD(K)_{|A}$ of $DSD(K)$ to $A$ is determinantal sampling design on $A$ with kernel $K_{|A}$, the submatrix of $K$ whose rows and columns are indexed by $A$:
\[DSD(K)_{|A}=DSD(K_{|A}).\]
\end{proposition}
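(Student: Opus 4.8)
The plan is to verify the defining property of a determinantal sampling design directly from Definition \ref{defDSD}, using the fact that the inclusion probabilities of order $p$ of $DSD(K)$ are given by principal minors of $K$. Recall that Definition \ref{defDSD} characterizes $DSD(K)$ through the relations $pr(s\subseteq \mathbb{S})=\det(K_{|s})$ for all $s\in 2^U$; by the inclusion–exclusion principle these relations determine the law $\cP$ uniquely, so it suffices to show that the restricted design $DSD(K)_{|A}$ satisfies the analogous relations with kernel $K_{|A}$.

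First I would make precise what the restriction means: if $\mathbb{S}\sim DSD(K)$, then $DSD(K)_{|A}$ is the law of the random subset $\mathbb{S}\cap A$ of $A$. So I must show that for every $s\subseteq A$,
\[
pr(s\subseteq \mathbb{S}\cap A)=\det\bigl((K_{|A})_{|s}\bigr).
\]
The key observation is that for $s\subseteq A$ we have $s\subseteq \mathbb{S}\cap A$ if and only if $s\subseteq \mathbb{S}$ (since $s\subseteq A$ already), so $pr(s\subseteq \mathbb{S}\cap A)=pr(s\subseteq \mathbb{S})=\det(K_{|s})$ by Definition \ref{defDSD}. The second key observation is the purely notational identity $(K_{|A})_{|s}=K_{|s}$ for $s\subseteq A$: extracting the rows and columns indexed by $A$ and then those indexed by $s\subseteq A$ is the same as extracting those indexed by $s$ directly. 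Combining these two observations gives $pr(s\subseteq \mathbb{S}\cap A)=\det(K_{|s})=\det\bigl((K_{|A})_{|s}\bigr)$, which is exactly the defining relation for $DSD(K_{|A})$ on the population $A$. Finally I would note that $K_{|A}$ is itself Hermitian (a principal submatrix of a Hermitian matrix), so it is a legitimate kernel, and that by Cauchy interlacing its eigenvalues lie in $[0,1]$, so $K_{|A}$ is contracting and $DSD(K_{|A})$ is indeed well defined — though this last point is not strictly needed, since the existence of $DSD(K)_{|A}$ as a sampling design on $A$ is automatic and we have just identified its inclusion probabilities.

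There is essentially no obstacle here: the statement is almost a tautology once one writes down the two notational identities above. The only thing requiring a word of care is the reduction from "the law is determined by all the $pr(s\subseteq\cdot)$" to "it suffices to check these" — i.e.\ the injectivity of the map from laws on $2^A$ to their collections of upper inclusion probabilities, which is the inclusion–exclusion argument already invoked in Example \ref{exPoisson}. I would state this reduction in one sentence and then present the two-line computation.
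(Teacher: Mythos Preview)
Your proof is correct and is exactly the natural argument: the inclusion probabilities of the restricted process coincide with the principal minors of $K_{|A}$ by the two identities you wrote, and these determine the law. The paper does not actually give its own proof of this proposition---it simply refers to \citet{lyons2003determinantal} and \citet{hough2006determinantal} for the probabilistic versions of these facts---so your direct verification from Definition~\ref{defDSD} is entirely appropriate and is essentially what one finds in those references.
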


\begin{proposition}[Stratification]
Let $\{U_1,\ldots, U_H\}$ be a partition of $U$ into $H$ strata. $DSD(K)$ is stratified iff the matrix $K$ admits a block diagonal decomposition relative to these strata, that is $k\in U_h, l\in U_{h'}, h\neq h'$ implies $K_{kl}=0$.
\end{proposition}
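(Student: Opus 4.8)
The plan is to prove the two implications separately, using the domain/restriction Proposition~\ref{propDomain} to control the within-stratum behaviour and the second-order formula \eqref{deltakld} for the ``only if'' part. Throughout I take ``$DSD(K)$ is stratified'' to mean, as is usual in sampling theory, that the sub-samples $\mathbb{S}\cap U_1,\dots,\mathbb{S}\cap U_H$ are mutually independent. The ``only if'' direction is then immediate: assume the $\mathbb{S}\cap U_h$ are independent, and fix $k\in U_h$, $l\in U_{h'}$ with $h\ne h'$. Since $k$ and $l$ lie in different blocks, $\{k,l\}\subseteq\mathbb{S}$ is the event $\{k\in\mathbb{S}\cap U_h\}\cap\{l\in\mathbb{S}\cap U_{h'}\}$, so independence gives $\pi_{kl}=\pi_k\pi_l$, that is $\Delta_{kl}=0$. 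By \eqref{deltakld} this forces $\abs{K_{kl}}^2=0$, hence $K_{kl}=0$, so $K$ is block diagonal relative to the partition.

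For the ``if'' direction, assume $K$ is block diagonal. Fix $s\subseteq U$ and set $s_h=s\cap U_h$. Up to a simultaneous permutation of rows and columns (which changes neither the determinants nor the design) the blocks are contiguous, so $K_{|s}$ is block diagonal with diagonal blocks $K_{|s_h}=(K_{|U_h})_{|s_h}$, and therefore $\det(K_{|s})=\prod_{h=1}^H\det\!\big((K_{|U_h})_{|s_h}\big)$. By Proposition~\ref{propDomain}, $\mathbb{S}_h:=\mathbb{S}\cap U_h$ follows $DSD(K_{|U_h})$, so Definition~\ref{defDSD} applied to the left-hand side and blockwise to the right-hand side yields
\[
pr(s\subseteq\mathbb{S})=\prod_{h=1}^H pr(s_h\subseteq\mathbb{S}_h)\qquad\text{for every }s\subseteq U.
\]

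It then remains to upgrade this factorization of the upper (inclusion) probabilities to genuine mutual independence of $\mathbb{S}_1,\dots,\mathbb{S}_H$, which I would do by Möbius inversion on the product of the subset lattices: for $s_h\subseteq U_h$,
\[
pr(\mathbb{S}_1=s_1,\dots,\mathbb{S}_H=s_H)=\sum_{t_1\supseteq s_1}\cdots\sum_{t_H\supseteq s_H}(-1)^{\sum_h(\abs{t_h}-\abs{s_h})}\;pr\!\Big(\textstyle\bigcup_h t_h\subseteq\mathbb{S}\Big),
\]
and since $pr(\bigcup_h t_h\subseteq\mathbb{S})=\prod_h pr(t_h\subseteq\mathbb{S}_h)$ the multiple sum factors as $\prod_h\big[\sum_{t_h\supseteq s_h}(-1)^{\abs{t_h}-\abs{s_h}}pr(t_h\subseteq\mathbb{S}_h)\big]=\prod_h pr(\mathbb{S}_h=s_h)$, which is exactly mutual independence. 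The argument is short, and the one place where any care is needed is precisely this last lattice-Möbius step, i.e.\ passing from factorized inclusion probabilities of all orders to independence of the stratum sub-samples; everything else is just multiplicativity of determinants over block-diagonal matrices, Proposition~\ref{propDomain}, and formula \eqref{deltakld}.
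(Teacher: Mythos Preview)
Your argument is correct. The ``only if'' direction via \eqref{deltakld} is clean, and the ``if'' direction---block-diagonal determinant factorization, then M\"obius inversion on the product lattice to pass from factorized inclusion probabilities to joint independence of the $\mathbb{S}_h$---is a complete and standard route.

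The paper itself does not give a proof of this proposition: it lists it in Section~\ref{secDet} together with Propositions~\ref{propComplement} and~\ref{propDomain}, and refers the reader to \citet{lyons2003determinantal} and \citet{hough2006determinantal} for the underlying probabilistic facts. So there is no in-paper argument to compare against; your self-contained elementary proof is a genuine addition rather than a duplication. The only point worth flagging is that you explicitly adopt the interpretation ``stratified $=$ the $\mathbb{S}\cap U_h$ are mutually independent,'' which the paper leaves implicit; this is indeed how the term is used later (e.g.\ in Theorem~\ref{thmQDeville} and Corollary~\ref{corovin}), so your reading is the intended one.
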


By using the inclusion-exclusion principle, \citet{lyons2003determinantal} shows that the probabilities of disjunction are also given by a determinant (Theorem 5.1 Equation (5.2) for fixed size designs and Equation (8.1) for random size designs).

\subsection{Algorithm}

A general algorithm for simulating a determinantal sampling design is provided in \citet{hough2006determinantal},
including a proof of its validity in a very general setup. Other implementations of this algorithm can be found in \citet{scardicchio2009statistical} and \citet{lavancier2015determinantal}. We consider the latter since it is more suitable and efficient when $N$ is large and $K$ can be written as $V\overline{V}\,^T$, a situation that we will often encounter in Sections \ref{Sec:ConstructingEqual} and \ref{Sec:ConstructingUnequal}.

The first algorithm samples from fixed size determinantal sampling designs.
Let $K=V\overline{V}\,^T$ be a projection matrix, and $v_k^T$ be the $k^{th}$ line of $V$.

\begin{algo}\label{alg1} $\text{ }$\\ 
\begin{itemize}
\item Sample one element $k_n$ of $U$ with probabilities $\Pi_k^n=||v_k||^2/n$, $k \in U$.
\item Set $e_1=v_{k_n}/||v_{k_n}||.$
\item For i = (n-1) to 1 do:
\begin{itemize}
\item sample one $k_i$ of $U$ with probabilities 
$\Pi_k^i=\frac{1}{i}[||v_k||^2-\sum_{j=1}^{j=n-i}|\overline{e_j}^Tv_k|^2]$, $k \in U$,
\item set $w_i=v_{k_i}-\sum_{j=1}^{j=n-i}|\overline{e_j}^Tv_{k_i}e_j|$ and $e_{n-i+1}=w_i/||w_i||$.
\end{itemize}
\item End for.
\item Return $\{ k_1,\cdots,k_n\}.$
\end{itemize}
The resulting sample is a realization of $DSD(K)$.
\end{algo}

The next algorithm describes a procedure to sample from any determinantal sampling design, by expressing it as a mixture of fixed size sampling designs (Theorem 7 in \citet{hough2006determinantal}).

Let $K$ be a contracting matrix with rank one decomposition $K=\sum_{i=1}^{N} \lambda_i v_i\overline{v}_i^T.$
\begin{algo}\label{alg2}$\text{ }$\\
\begin{enumerate}
\item Simulate a vector $b$ whose components are independant Bernoulli variables with parameter $\lambda_1,\cdots,\lambda_N,$.the elements of $Sp(K).$
\item Construct the projection matrix $K_b=\sum_{i=1}^{N} b_i v_i\overline{v}_i^T$.
\item Sample from $DSD(K_b)$ by Algorithm \ref{alg1}. 
\end{enumerate}
The resulting sample is a realization of $DSD(K)$.
\end{algo}

\section{Estimation of a total}\label{secEst}

\subsection{Linear estimators and the Horvitz-Thompson estimator}
Let $y=(y_1,\cdots, y_N)\,^T$ be a variable of interest on the population $U=\{1,\cdots,N\}$. Typical parameters to estimate are the total $t_y=\sum_{k\in U} y_k$, sum of the values of the variable of interest $y$ over the whole population, or the mean value $m_y=t_y/N$. An estimator of $t_y$ is called linear and homogeneous if there exist weights $w_{k}(\mathbb{S}), k\in U$ (that may depend on the sample) such that the estimator writes
\begin{equation*}
\hat{t}_{yw}=\displaystyle\sum_{k\in \mathbb{S}} w_k(\mathbb{S})y_k.\end{equation*} 
When the weights do not depend on the sample, the Mean Square Error ($\mathrm{MSE}$) decomposes as: 

\begin{eqnarray}\label{EqMSE1}
\mathrm{MSE}(\hat{t}_{yw})&=& \overbrace{\underset{k\in U}\sum\underset{l\in U}\sum w_kw_ly_ky_l\Delta_{kl}}^{\mathrm{Variance}} + \left[\overbrace{ \underset{k \in U}\sum(w_k\pi_k-1)y_k}^{\mathrm{Bias}}\right]^2\label{eqVarDelta}\\
&=&\underset{k\in U}\sum w_kw_ly_ky_l(K_{kk}(1-K_{kk}))-\underset{k\in U}\sum\underset{l\neq k}\sum w_kw_ly_ky_l|K_{kl}|^2\nonumber\\
& &+ \left[\underset{k \in U}\sum(w_k\pi_k-1)y_k\right]^2\label{eqVarDelta2}
\end{eqnarray} 

where $\Delta_{kl}$ is defined by Equation \eqref{deltakld}.

Obviously, the only unbiaised estimator should satisfy $w_k=\pi_k^{-1}$, for all $k\in U$. The corresponding estimator, 
\begin{equation*}
\hat{t}_{yHT}=\displaystyle\sum_{k\in \mathbb{S}} \pi_k^{-1}y_k,\end{equation*} is known as the  Horvitz-Thompson estimator (\citet{horvitz1952generalization}). 

\subsection{Mean Square Error}
In the case of a determinantal sampling design, 
the $\mathrm{MSE}$ of an homogeneous linear estimator of the total $t_y$ of a variable of interest $y$ admits algebraic and geometric formulations. They enable us to provide necessary and sufficient conditions for a perfect estimation of the total of auxiliary variables. 

We introduce the following notations. For a vector $x$, $D_x$ denotes the diagonal matrix with diagonal~$x$. For any two matrices $A,B\in\mathcal{M}_{N}(\mathbb{\CC})$, $\langle A,B\rangle= tr(\overline{A}\,^T B)=\sum_{k,l} \overline{a_{k,l}}b_{k,l}$ denotes the canonical scalar product on $\mathcal{M}_{N}(\mathbb{\CC})$. The associated Frobenius norm is denoted by $|A|$.
We also define $z=w*y$ (Schur-Hadamard product) and diagonal matrices $Z=D_{w*y}$, $Z^{1/2}=D_{\sqrt{w*y}}$ where the square root is taken in the complex sense for negative $y$.
Finally, we pose $\langle\langle A,B\rangle\rangle=\langle \overline{Z^{1/2}}\,^TAZ^{1/2}, Z^{1/2}B\overline{Z^{1/2}}\,^T\rangle$. Note that $Z^{1/2}=(Z^{1/2})\,^T$ and $\overline{Z}=Z$, two equalities that we will use thoroughly in the rest of this section. 

\begin{proposition}[Algebraic and Geometric forms of the $\mathrm{MSE}$]\label{PropAlg}
The $\mathrm{MSE}$ of $\hat{t}_{yw}$ satisfies
\begin{eqnarray}
\mathrm{MSE}(\hat{t}_{yw})&=& (\omega*y)\,^T \Delta
(\omega*y) + [e\,^T(K*I_N)(\omega*y)-e\,^Ty]^2 \label{eqMSE}\\
&=&\langle\langle I_N-K,K\rangle\rangle + [\langle D_y, K D_w-I_N \rangle]^2 \label{eqMSEG}
\end{eqnarray} 
and, in the case of the Horvitz-Thompson estimator,
\begin{eqnarray*}
\mathrm{MSE}(\hat{t}_{yHT})=var(\hat{t}_{yHT})&=& (\pi^{-1}*y)\,^T((I_N-K)*\overline{K}
)(\pi^{-1}*y)\label{eqMSEHT}\\
&=&\langle\langle I_N-K,K\rangle\rangle. \label{eqMSEGHT}
\end{eqnarray*} 
\end{proposition}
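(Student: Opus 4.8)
The plan is to start from the general MSE decomposition already recorded in Equations \eqref{eqVarDelta}--\eqref{eqVarDelta2}, specialize the various matrices to the determinantal case, and then recognize the resulting quadratic form as the ``double bracket'' $\langle\langle\cdot,\cdot\rangle\rangle$. First I would rewrite the variance term $\sum_{k,l} w_k w_l y_k y_l \Delta_{kl}$ in matrix form. Using $\Delta = \overline{(I_N-K)}*K$ from Equation \eqref{deltad} and the vector $z = w*y$, the variance becomes $z^T \Delta z = z^T\big(\overline{(I_N-K)}*K\big) z$. This is exactly $(\omega*y)^T\Delta(\omega*y)$, giving the first summand of \eqref{eqMSE}. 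For the bias term, note that $\pi_k = K_{kk}$, so $(K*I_N)$ is the diagonal matrix $D_\pi$, and $w_k\pi_k = [D_\pi D_w]_{kk} = [(K*I_N)D_w]_{kk}$; hence $\sum_k(w_k\pi_k-1)y_k = e^T\big[(K*I_N)(\omega*y) - y\big]$, which matches the bracketed term in \eqref{eqMSE}. The Horvitz--Thompson case then follows immediately by putting $w_k = \pi_k^{-1}$: the bias vanishes and, since $K$ is Hermitian so that $\overline{(I_N-K)}*K = (I_N-K)*\overline{K}$, the variance is $(\pi^{-1}*y)^T\big((I_N-K)*\overline{K}\big)(\pi^{-1}*y)$.

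The second, genuinely new, part is the geometric identity: the scalar $z^T\big(\overline{(I_N-K)}*K\big)z$ equals $\langle\langle I_N-K, K\rangle\rangle = \langle \overline{Z^{1/2}}^T(I_N-K)Z^{1/2},\, Z^{1/2}K\overline{Z^{1/2}}^T\rangle$. The key computational lemma is a general fact about the Schur product and diagonal conjugation: for any matrices $A, B$ and any diagonal matrix $Z^{1/2} = D_{\sqrt{z}}$, one has $z^T(A*B)z = \sum_{k,l} z_k z_l a_{kl} b_{kl} = \sum_{k,l}\big(\sqrt{z_k}\,a_{kl}\,\sqrt{z_l}\big)\overline{\big(\overline{\sqrt{z_k}}\,\overline{b_{kl}}\,\overline{\sqrt{z_l}}\big)}$ — wait, I need to be a bit careful here because the entries of $A*B$ are real in our application but $Z^{1/2}$ may have complex entries when $y_k<0$. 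The cleaner route is to observe that conjugating by the diagonal $Z^{1/2}$ on both sides rescales the $(k,l)$ entry of any matrix $M$ to $\sqrt{z_k}\,M_{kl}\,\sqrt{z_l}$, so that $\langle \overline{Z^{1/2}}^T A Z^{1/2},\, Z^{1/2} B \overline{Z^{1/2}}^T\rangle = \sum_{k,l} \overline{\sqrt{z_k}}\,\overline{a_{kl}}\,\overline{\sqrt{z_l}}\cdot\sqrt{z_k}\,b_{kl}\,\sqrt{z_l}$, and since $\overline{\sqrt{z_k}}\sqrt{z_k} = |z_k|$... this is not quite $z_k$ unless $z_k\ge 0$. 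So I would instead define things so that the bilinear (not sesquilinear) pairing appears: spelling out $\langle\langle A,B\rangle\rangle = tr\big(\overline{(\overline{Z^{1/2}}^T A Z^{1/2})}^T\, Z^{1/2}B\overline{Z^{1/2}}^T\big) = tr\big(\overline{Z^{1/2}}\,\overline{A}^T Z^{1/2}\, Z^{1/2} B \overline{Z^{1/2}}^T\big)$ and then using $\overline{Z}=Z$, $Z^{1/2}=(Z^{1/2})^T$, cyclicity of the trace, and $Z^{1/2}Z^{1/2}=Z=D_z$, I would reduce this to $tr\big(\overline{Z^{1/2}}^2\,\overline{A}^T\, D_z\, B\big)$... and I want the conjugations to cancel. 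Here I would invoke $\overline{Z^{1/2}}^2 = \overline{Z} = Z = D_z$ as well (since $(\overline{\sqrt{z_k}})^2 = \overline{z_k} = z_k$ when $z_k$ is real, which it is: $z_k = w_k y_k\in\RR$). Then $\langle\langle A,B\rangle\rangle = tr(D_z \overline{A}^T D_z B) = \sum_{k,l} z_k \overline{a_{lk}} z_l b_{lk}$, and with $A = I_N-K$, $B=K$ both Hermitian this is $\sum_{k,l} z_k z_l \overline{(I_N-K)_{kl}} K_{kl} = \sum_{k,l} z_k z_l (I_N-\overline K)_{kl} K_{kl} = z^T\big((I_N-\overline K)*K\big)z = z^T\big(\overline{(I_N-K)}*K\big)z$, which is the variance term. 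So the algebraic and geometric variance terms agree.

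For the bias term in geometric form, I would compute $\langle D_y, K D_w - I_N\rangle = tr\big(\overline{D_y}^T(KD_w - I_N)\big) = tr(D_y K D_w) - tr(D_y) = \sum_k y_k (KD_w)_{kk} - \sum_k y_k = \sum_k y_k K_{kk} w_k - \sum_k y_k = \sum_k (w_k\pi_k - 1)y_k$, using $\overline{D_y} = D_y$ (the $y_k$ are real) and $(KD_w)_{kk} = K_{kk}w_k = \pi_k w_k$. Squaring gives the bracketed term of \eqref{eqMSEG}, completing the derivation of \eqref{eqMSEG}, and setting $w = \pi^{-1}$ then yields $\langle\langle I_N-K,K\rangle\rangle$ for the Horvitz--Thompson variance. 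The main obstacle, and the step deserving the most care in the writeup, is the bookkeeping of complex conjugates in the double-bracket computation: one must use repeatedly that $Z$, $Z^{1/2}$ satisfy $\overline{Z}=Z$, $(Z^{1/2})^T = Z^{1/2}$ and crucially that $z = w*y$ has real entries so that $\overline{Z^{1/2}}\,Z^{1/2} = |Z| $ coincides with $Z$ up to signs that are absorbed correctly — equivalently, that $(\overline{Z^{1/2}}^T A Z^{1/2})$ and $Z^{1/2}B\overline{Z^{1/2}}^T$ are conjugate-transposes of each other precisely when $A,B$ are Hermitian, which is what makes the scalar product collapse to the real bilinear form $z^T(\overline A * B)z$. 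Everything else is routine manipulation of traces, diagonal matrices, and the Schur product identity $x^T(A*B)y = tr(D_x A D_y B^T)$.
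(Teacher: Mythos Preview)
Your proof is correct and follows essentially the same route as the paper: both rely on the cyclicity of the trace and the Schur--Hadamard identity $\overline{x}^{\,T}(A*B)y=tr(\overline{D_x}AD_yB^{\,T})$ (which you state at the end). The paper's own proof is a two-line reference to these identities, whereas you spell out the conjugate bookkeeping for $\langle\langle I_N-K,K\rangle\rangle$ explicitly; apart from some unnecessary detours (the appeal to Hermiticity of $A,B$ is not needed once you relabel $k\leftrightarrow l$, and the worry about $|z_k|$ versus $z_k$ is resolved, as you note, by $\overline{Z^{1/2}}^{\,2}=\overline{Z}=Z$), the argument is the same.
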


\begin{proof}
These formulas follow from the classical equality $tr(AB)=tr(BA)$ and the following equality relating the trace on the Schur-Hadamard product \citet{horn1991topics}: for any two vectors $x,y$ and any two matrices $A,B$ it holds that
$$\overline{x}\,^T A*By=tr(\overline{D_x}AD_yB\,^T).$$ 
\end{proof}

The bilinear form $\langle\langle .,.\rangle\rangle$ is indefinite in general. However, it holds by Moutard-Fejer's Theorem (\citet{de2006aspects} Appendix A) that for any two positive semidefinite matrices $A$ and $B$, \begin{equation*}
\langle\langle A,B\rangle\rangle=\langle \overline{Z^{1/2}}\,^TAZ^{1/2}, Z^{1/2}B\overline{Z^{1/2}}\,^T \rangle\geq 0,\end{equation*}
since $ \overline{Z^{1/2}}\,^TAZ^{1/2}$ and $Z^{1/2}B\overline{Z^{1/2}}\,^T$ are positive semidefinite.

Recently, \citet{devillecomment} raised the following question. For a given vector $y$, when can we estimate perfectly (without error, $\mathrm{MSE}=0$) the total $y$, using a sampling design with fixed first order inclusion probabilties (and an homogeneous linear estimator)? Using the previous equations, we provide a necessary and sufficient condition within determinantal sampling designs. Obviously, the estimator must be unbiaised. Therefore we consider the Horvitz-Thompson estimator only ($w_k=\pi_k^{-1}, \, k=1,\ldots,N$), and positive first order inclusion probabilities.

\begin{theorem}[Perfect Estimation]\label{thmQDeville}
Assume $y$ takes only non-zero values, and let $DSD(K)$ be a determinantal sampling design. Let $\alpha_1, \ldots, \alpha_q,$ be the distinct values of $\{\frac{y_k}{\Pi_k},\, k=1,\ldots, N\}$, and $A_j, j=1, \ldots, q$ be the associated sets of indices $k$ such that $\frac{y_k}{\Pi_k}=\alpha_j$. \\
Then the following statements are equivalent:
\begin{enumerate}
\item The total $t_y$ is perfectly estimated ($\mathrm{MSE}=0$) by $\hat{t}^{HT}_y$,
\item $K$ is a projection with positive diagonal that commutes with $Z$,
\item $DSD(K)$ is a stratified determinantal sampling design with strata $A_j,\, j=1,\ldots, q$, of fixed size within each stratum. 
\end{enumerate}
\end{theorem}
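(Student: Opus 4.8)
The plan is to establish the cycle of implications $(1)\Rightarrow(2)\Rightarrow(3)\Rightarrow(1)$, using the geometric form of the MSE from Proposition \ref{PropAlg}. Since we restrict to the Horvitz-Thompson estimator, the bias term vanishes and $\mathrm{MSE}(\hat{t}_{yHT})=\langle\langle I_N-K,K\rangle\rangle$. The key observation is that $\langle\langle A,B\rangle\rangle=\langle \overline{Z^{1/2}}\,^TAZ^{1/2}, Z^{1/2}B\overline{Z^{1/2}}\,^T\rangle$, and that $\langle\langle I_N-K,K\rangle\rangle\geq 0$ by the Moutard-Fejer argument already given, since $I_N-K$ and $K$ are both positive semidefinite (as $K$ is contracting). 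So the MSE is a sum of nonnegative terms which we can force to vanish.

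For $(1)\Rightarrow(2)$: writing $M=Z^{1/2}K\overline{Z^{1/2}}\,^T$ (which is Hermitian positive semidefinite), and $\overline{Z^{1/2}}\,^T(I_N-K)Z^{1/2}$, I would expand $\langle\langle I_N-K,K\rangle\rangle$ entrywise. The vanishing of this quantity forces, for each pair $(k,l)$, a relation of the form $z_k \overline{z_l}\,\overline{K_{kl}}\,(\text{something})=0$; more precisely $\mathrm{MSE}=0$ should be equivalent to $\overline{Z^{1/2}}\,^T(I_N-K)Z^{1/2}$ and $Z^{1/2}K\overline{Z^{1/2}}\,^T$ being ``orthogonal'' in the Frobenius inner product, and since both are PSD this means their product is zero: $(I_N-K)Z K = 0$ up to conjugation by $Z^{1/2}$. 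Because $y$ (hence $z=w*y=\pi^{-1}*y$) has no zero entries, $Z$ is invertible, so this is equivalent to $(I_N-K)ZK=0$, i.e. $ZK=KZK$. Taking adjoints (using $K=\overline K\,^T$, $Z=\overline Z$) gives $KZ=KZK$ as well, hence $ZK=KZ$, so $K$ commutes with $Z$; and then $ZK=KZK$ becomes $ZK=ZK^2$, so $Z(K-K^2)=0$, and invertibility of $Z$ gives $K=K^2$. Positivity of the diagonal of $K$ is just $\pi_k=K_{kk}>0$, forced because the Horvitz-Thompson weights $\pi_k^{-1}$ must be defined (one can also note $\pi_k=0$ would make $y_k/\pi_k$ undefined, contradicting the hypothesis that $y$ has nonzero values with the $\alpha_j$ well-defined). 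This gives $(2)$.

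For $(2)\Rightarrow(3)$: since $K$ commutes with $Z=D_z$, and the distinct eigenvalues of $Z$ are exactly the $\alpha_j$ with eigenspaces $\mathrm{span}\{e_k: k\in A_j\}$, $K$ must be block-diagonal with respect to the partition $\{A_j\}$: $K_{kl}=0$ whenever $k\in A_j$, $l\in A_{j'}$, $j\neq j'$. By the Stratification proposition this means $DSD(K)$ is stratified with strata $A_j$. On each block, $K_{|A_j}$ is itself a Hermitian projection (a principal submatrix of $K$ that is the full $A_j\times A_j$ block of a block-diagonal projection), so by Corollary \ref{corEspTaille}(5) and Proposition \ref{propDomain}, $DSD(K)_{|A_j}=DSD(K_{|A_j})$ is of fixed size. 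That is $(3)$.

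For $(3)\Rightarrow(1)$: if $DSD(K)$ is stratified with strata $A_j$ and fixed size within each stratum, then $K$ is block-diagonal and each block is a projection, hence $K$ itself is a projection commuting with $Z$ (any matrix block-diagonal over the $\alpha_j$-eigenspaces of $Z$ commutes with $Z$). Then $(I_N-K)ZK=0$, so running the computation of $(1)\Rightarrow(2)$ backwards — or directly plugging into $\langle\langle I_N-K,K\rangle\rangle=\langle \overline{Z^{1/2}}\,^T(I_N-K)Z^{1/2}, Z^{1/2}K\overline{Z^{1/2}}\,^T\rangle$ and using $(I_N-K)ZK=0$ together with invertibility of $Z^{1/2}$ to see the inner product vanishes — gives $\mathrm{MSE}=0$. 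Alternatively one argues combinatorially: within a fixed-size stratum the Horvitz-Thompson estimator of $\sum_{k\in A_j} y_k = \alpha_j \sum_{k\in A_j}\Pi_k$ reduces to estimating a constant multiple of $\sum_{k\in A_j}\Pi_k\pi_k^{-1}\mathbf{1}_{k\in\mathbb S}$, whose value is deterministic precisely because $\sum_{l\in A_j}\Delta_{kl}=0$ by \eqref{eqSDtaillefixe} applied to the stratum; summing over $j$ gives zero variance.

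The main obstacle is the algebra in $(1)\Rightarrow(2)$: correctly unwinding the condition $\langle\langle I_N-K,K\rangle\rangle=0$ into the operator identity $(I_N-K)ZK=0$. This uses crucially that the Frobenius inner product of two positive semidefinite matrices is zero iff their product is zero, applied to the two PSD matrices $\overline{Z^{1/2}}\,^T(I_N-K)Z^{1/2}$ and $Z^{1/2}K\overline{Z^{1/2}}\,^T$; one must then carefully strip the conjugations by $Z^{1/2}$ (legitimate since $Z$ is invertible, as $y$ has no zero entries) and use the Hermitian symmetry of $K$ to pass from $ZK=KZK$ to commutation. Everything else is bookkeeping with the Stratification and Domain propositions.
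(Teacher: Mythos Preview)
Your proof is correct and follows essentially the same route as the paper's. Both argue $(1)\Rightarrow(2)$ via the Moutard--Fejer equality case (two PSD matrices with vanishing Frobenius inner product must have zero product), strip the $Z^{1/2}$-conjugation using invertibility of $Z$, and deduce $ZK=KZK$, then $KZ=KZK$ by taking adjoints, hence commutation and idempotence; $(2)\Rightarrow(3)$ is the commutant-of-a-diagonal-matrix argument in both; and your $(3)\Rightarrow(1)$ spells out what the paper simply declares ``straightforward'' (your combinatorial alternative via \eqref{eqSDtaillefixe} is a nice bonus).
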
 

\begin{proof}
$\,$\\
\begin{enumerate}
\item [$1\Rightarrow 2$]
By Moutard-Fejer's Theorem, it holds that for any two semidefinite matrices $A$ and $B$, $tr(AB)\geq 0$ with equality iff $AB=0$.
Assume $\mathrm{MSE}(\hat{t}_{yHT})=0$. Then $tr\left(\overline{Z^{1/2}}\,^T(I_N-K)Z^{1/2} Z^{1/2}K\overline{Z^{1/2}}\,^T\right)=0$. As $\overline{Z^{1/2}}\,^T(I_N-K)Z^{1/2}$ and $Z^{1/2}K\overline{Z^{1/2}}\,^T$ are semidefinite, then $\overline{Z^{1/2}}\,^T(I_N-K)ZK\overline{Z^{1/2}}\,^T=0$. Multiplying on the left and on the right by $\overline{Z^{-1/2}}$ yields $ZK=KZK$ and taking the conjugate transpose gives $ZK=KZK=KZ$. Thus $K$ and $Z$ commute It also follows that $ZK^2=ZK$. By multiplying the equality on the left by $Z^{-1}$ we get $K^2=K$, and $K$ is a projection. 
\item [$2\Rightarrow 3$] Reorder the population by strata. Then the commutant of $Z$ is the set of block diagonal matrices with respect to these stratas, and $K$ is block diagonal. As $K$ is also a projection, each block is actually a projection, and $DSD(K)$ is of fixed size within each stratum. 
\end{enumerate}
$3\Rightarrow 1$ is straightforward.
\end{proof}

Finally, we provide an alternative view on the variance that comes from the general theory of point processes and spatial statistics. The quantity $\sum_{k\neq l} w_kw_ly_ky_l|K_{kl}|^2$ can be interpreted as a ponderated measure of global repulsiveness for point processes on a discrete space (\citet{biscio2016quantifying} and \citet{lavancier2015determinantal} in the continuous setting). As determinantal point processes are repulsive, we then expect DSDs to achieve small variance within all sampling designs. This is validated by our empirical studies in Section \ref{SecSim}. In the next section, we consider the problem of minimization of this variance.

\subsection{Statistical properties of the estimator}

The classical settings for the study of asymptotic properties are either the superpopulation models (\citet{deming1941interpretation}, \citet{cassel1977foundations} chapter 4), or the models of nested (finite) populations as described by Isaki and Fuller \citet{isaki1982survey}. We consider this second setting here. In particular, $(U_N, N\in \NN)$ is a nested sequence of finite populations ($U_N\subseteq U_{N+1}$). The variable of interest $y^N$ may depend on $N$, $(y^N, N\in \NN)$ is a sequence of vectors of size $N$. Also $(w^N, N\in \NN)$ is a sequence of positive vectors of size $N$. In all this section, $(\cP_N, N\in \NN)$ is a sequence of determinantal sampling designs on the populations $U_N$ with kernel $(K^N, N\in \NN)$, whose diagonal terms are positive, and $(\hat{t}^N_{yw}, N\in \NN)$ is the sequence of associated linear estimators of $t_{y^N}$ with weights $w^N$. To simplify notations, we consider as before $U_N=\{1,\ldots, N\}$, and omit the superscript $(.)^N$, writing $y$, $w$, $K$ and $\hat{t}_{yw}$ instead of $y^N, w^N, K^N$ and $\hat{t}^N_{yw}(=\hat{t}^N_{y^Nw^N})$.

We focus successively on consistency, central limit theorems and concentration/deviation inequalities.

In this setting, most results about consistency concern the mean square convergence of the Horvitz-Thompson estimator of the mean $m_y=t_y/N$, see \citet{isaki1982survey}, \citet{robinson1982convergence}, \citet{dol1996matrix} in the case of fixed size sampling designs and \citet{cardot2010properties}, \citet{chauvet2014note} in the general case. A classical condition within these references is that the sequence $\frac{1}{N}\sum_{k=1}^N (\pi_{k})^{-2}y_k^2$ is bounded. Using Schur's Theorem \citet{schur1911bemerkungen} on semidefinite matrices we improve the previous condition for determinantal sampling designs. The theorem also applies to other linear homogeneous estimators than the Horvitz-Thompson one. We pose $\hat{m}_{yw}=\hat{t}_{yw}/N$.

\begin{theorem}[Mean-square convergence]\label{ThCVMQ} 
If
\begin{enumerate}
\item $\displaystyle\sum_{k=1}^N K_{kk} \left(1-\frac{1}{K_{kk}w_k}\right)^2=O(1)$,
\item $\frac{1}{N^2}\displaystyle\sum_{k=1}^N K_{kk}(w_ky_k)^2 \underset{N\to \infty}{\longrightarrow} 0$,
\end{enumerate}
then $(\hat{m}_{yw}-m_y)$ tends to $0$ in mean square.\\
In particular a sufficient condition for the convergence of $(\hat{m}_{y}^{HT}-m_y)$ towards $0$ in mean square is 
$$\frac{1}{N^2}\displaystyle\sum_{k=1}^N \frac{y_k^2}{K_{kk}} \underset{N\to \infty}{\longrightarrow} 0.$$
\end{theorem}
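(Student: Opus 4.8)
The plan is to start from the bias-variance decomposition of the MSE given in \eqref{eqVarDelta2} (or equivalently the algebraic form \eqref{eqMSE}), specialized to the mean estimator $\hat m_{yw}=\hat t_{yw}/N$, so that $\mathrm{MSE}(\hat m_{yw})=\frac{1}{N^2}\mathrm{MSE}(\hat t_{yw})$. The goal is then to bound separately the variance term and the squared bias term, and show each tends to $0$ under hypotheses (1) and (2). For the bias term, note that $e^T(K*I_N)(w*y)-e^Ty=\sum_{k}(K_{kk}w_k-1)y_k$; the natural move is to write each summand as $K_{kk}^{1/2}\big(1-\tfrac{1}{K_{kk}w_k}\big)\cdot K_{kk}^{1/2}w_k y_k$ and apply Cauchy--Schwarz, which produces exactly the product of the two sums appearing in hypotheses (1) and (2) after dividing by $N^2$. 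Since (1) is $O(1)$ and (2) $\to 0$, the squared bias (divided by $N^2$) tends to $0$.

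For the variance term, the key inequality is that $\Delta=(I_N-K)*\overline K$ is positive semidefinite: this follows from \eqref{deltad} together with Schur's product theorem applied to the two positive semidefinite matrices $I_N-K$ (here $K$ contracting is used, so $I_N-K\succeq 0$) and $\overline K$. Positive semidefiniteness of $\Delta$ gives the quadratic-form bound $(w*y)^T\Delta(w*y)\le (\text{something diagonal})$; concretely, for a PSD matrix $\Delta$ one has $x^T\Delta x\le \big(\sum_k \Delta_{kk}^{1/2}|x_k|\big)^2$ is too weak, so instead I would use the sharper route: $x^T\Delta x=\langle D_x x, \Delta D_x\text{-type}\rangle$ — more cleanly, bound via $x^T\Delta x \le \mathrm{tr}(\Delta)\max_k x_k^2$? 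That is also too crude. The right estimate is: since $\Delta\preceq D_{\Delta_{kk}}+(\text{off-diagonal})$ is not automatic, I would instead argue directly that for PSD $\Delta$, $|x^T\Delta x|\le \sum_{k}\Delta_{kk}x_k^2 + \sum_{k\ne l}|\Delta_{kl}||x_k||x_l|$ and use $|\Delta_{kl}|\le \Delta_{kk}^{1/2}\Delta_{ll}^{1/2}$ (PSD $2\times2$ minors) to get $x^T\Delta x\le \big(\sum_k \Delta_{kk}^{1/2}|x_k|\big)^2$. Hmm, that reintroduces the weak bound. The cleanest correct path: $0\le (w*y)^T\Delta(w*y)\le (w*y)^T D_{\Delta_{kk}}(w*y)\cdot\!$? — not valid either.

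Let me restate the intended argument cleanly. Because $\Delta\succeq 0$, we have $0\le (w*y)^T\Delta (w*y)$, and because $\Delta_{kk}=K_{kk}(1-K_{kk})\le K_{kk}$ while the off-diagonal entries $\Delta_{kl}=-|K_{kl}|^2$ are $\le 0$ (Sen--Yates--Grundy, Proposition \ref{senyates}), the matrix $D_{K_{kk}}-\Delta$ has nonnegative diagonal and nonnegative off-diagonal; more to the point, I would bound the variance by dropping the nonpositive off-diagonal contribution only after controlling its sign via the full PSD structure. The honest statement is: $(w*y)^T\Delta(w*y)=\sum_k K_{kk}(1-K_{kk})(w_ky_k)^2-\sum_{k\ne l}|K_{kl}|^2 w_kw_ly_ky_l$, and since $\Delta\succeq 0$ the whole expression is nonnegative, while it is also $\le \sum_k K_{kk}(w_ky_k)^2$ because the subtracted cross term together with the $-K_{kk}^2(w_ky_k)^2$ diagonal correction forms exactly $(w*y)^T K*\overline K\,(w*y)=|\sqrt{D_{w*y}}\,\overline{V}|^4$-type quantity which is $\ge 0$ (again Schur: $K*\overline K\succeq 0$). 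Dividing by $N^2$, the variance is at most $\frac{1}{N^2}\sum_k K_{kk}(w_ky_k)^2$, which is hypothesis (2) and tends to $0$. Combining the two bounds gives mean-square convergence.

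For the Horvitz--Thompson specialization, set $w_k=\pi_k^{-1}=K_{kk}^{-1}$: then hypothesis (1) reads $\sum_k K_{kk}(1-1)^2=0=O(1)$ trivially (the estimator is unbiased), and hypothesis (2) becomes $\frac{1}{N^2}\sum_k K_{kk}(K_{kk}^{-1}y_k)^2=\frac{1}{N^2}\sum_k y_k^2/K_{kk}\to 0$, which is the stated sufficient condition. The main obstacle I anticipate is making the variance bound rigorous: one must invoke Schur's product theorem twice — once to get $I_N-K$ and $\overline K$ PSD hence $\Delta\succeq 0$, and once to get $K*\overline K\succeq 0$ — and then assemble $\sum_k K_{kk}(w_ky_k)^2=(w*y)^T\Delta(w*y)+(w*y)^T(K*\overline K)(w*y)$ from \eqref{deltad}, each summand nonnegative, to conclude the variance is squeezed between $0$ and hypothesis (2). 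Once that algebraic identity is in place the rest is routine.
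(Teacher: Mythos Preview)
Your proposal is correct and, once you cut through the exploratory false starts, it is exactly the paper's argument: bound the squared bias by Cauchy--Schwarz after factoring $(K_{kk}w_k-1)y_k=K_{kk}^{1/2}(1-\tfrac{1}{K_{kk}w_k})\cdot K_{kk}^{1/2}w_ky_k$, and bound the variance via Schur's product theorem applied to $K*\overline{K}\succeq 0$ to obtain $(w*y)^T\Delta(w*y)\le (w*y)^T(I*\overline{K})(w*y)=\sum_k K_{kk}(w_ky_k)^2$. The only cosmetic difference is that the paper phrases the variance inequality as $(I-K)*\overline{K}\preceq I*\overline{K}$ in the PSD order, which is the same statement as your additive decomposition.
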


\begin{proof}
By Proposition \ref{PropAlg} 
\begin{equation*}
\mathrm{MSE}(\hat{t}_{yw})= (w*y)\,^T((I_N-K)*\overline{K}
)(w*y) + [e\,^T(I_N*K)(w*y)-e\,^Ty]^2 \end{equation*}
As the matrices $I$, $K$, $I-K$ and $\overline{K}$ are positive semidefinite, it holds that $(I-K)*\overline{K}$, $I*\overline{K}$ and $K*\overline{K}$ are positive semidefinite by Schur Theorem. Since $(I-K)*\overline{K}=I*\overline{K}-K*\overline{K}$ then it also holds that $(I-K)*\overline{K}\leq I*\overline{K}$ for the partial order on positive semidefinite matrices. It follows that 
\begin{eqnarray*}
(w*y)\,^T(I-K)*\overline{K})(w*y) &\leq & (w*y)\,^T (I*\overline{K}) (w*y)\\
&\leq & \sum_{k\in U}(w_ky_k)^2 K_{kk},
\end{eqnarray*} 
Moreover the bias satisfies 
\begin{eqnarray*} [e\,^T(I_N*K)(w*y)-e\,^Ty]^2 &=& \left(\sum_{k\in U} (K_{kk}-\frac{1}{w_k}) (w_ky_k)\right)^2\\
&=& \left(\sum_{k\in U} (\sqrt{K_{kk}}-\frac{1}{\sqrt{K_{kk}}w_k}) (\sqrt{K_{kk}}w_ky_k)\right)^2\\
&\leq & \left(\sum_{k\in U} (\sqrt{K_{kk}}-\frac{1}{\sqrt{K_{kk}}w_k})^2 \right) \left(\sum_{k\in U} (K_{kk}(w_ky_k)^2\right)
\end{eqnarray*} 
by Cauchy-Schwartz-inequality. From these inequalities we get 
\begin{equation*}
E\left((\frac{\hat{t}_{yw}-t_y}{N})^2\right)\leq \left(1+\sum_{k=1}^N K_{kk}(1-\frac{1}{K_{kk} w_k})^2 \right) \frac{1}{N^2}\left(\sum_{k=1}^N K_{kk}(w_ky_k)^2\right)
\end{equation*}
which goes to $0$ by assumptions. This completes the proof.
\end{proof}

Regarding equal probability determinantal sampling designs with expected size $\mu$ ($\pi_k=\mu/N$ for all $k$) and a bounded variable $y$, a sufficient condition for convergence of the Horvitz-Thompson estimator of the mean is simply $\mu\to \infty$. More generally

\begin{corollary}
Set $\mu=trace(K)$. If
\begin{enumerate}
\item there exists $c>0$, such tht for all $N\in \NN$ and all $k\leq N$, $c\frac{\mu}{N}\leq K_{kk}$,
\item the sequence $(\frac{1}{N}\displaystyle\sum_{k=1}^N y_k^2, N\in \NN)$ is bounded,
\item the expected size of the samples $\mu\to \infty$.
\end{enumerate}
Then $(\hat{m}_{y}^{HT}-m_y)\to 0$ in mean square.
\end{corollary}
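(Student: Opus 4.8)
The plan is to derive this corollary as a direct consequence of Theorem \ref{ThCVMQ} applied to the Horvitz-Thompson estimator ($w_k=\pi_k^{-1}=K_{kk}^{-1}$), by verifying its two hypotheses under the three assumptions listed here. First I would observe that for the Horvitz-Thompson weights, condition (1) of Theorem \ref{ThCVMQ} reads $\sum_{k=1}^N K_{kk}(1-1)^2=0=O(1)$, so it holds trivially; thus only the second hypothesis needs checking, which for $w_k=K_{kk}^{-1}$ becomes $\frac{1}{N^2}\sum_{k=1}^N y_k^2/K_{kk}\to 0$ (this is precisely the ``in particular'' clause of Theorem \ref{ThCVMQ}).

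Next I would use assumption (1) here, $K_{kk}\geq c\mu/N$, to bound $\frac{1}{K_{kk}}\leq \frac{N}{c\mu}$ uniformly in $k$. Substituting,
\begin{equation*}
\frac{1}{N^2}\sum_{k=1}^N \frac{y_k^2}{K_{kk}}\leq \frac{1}{N^2}\cdot\frac{N}{c\mu}\sum_{k=1}^N y_k^2=\frac{1}{c\mu}\cdot\frac{1}{N}\sum_{k=1}^N y_k^2.
\end{equation*}
By assumption (2) the factor $\frac{1}{N}\sum_{k=1}^N y_k^2$ is bounded by some constant $M$, so the right-hand side is at most $M/(c\mu)$, which tends to $0$ by assumption (3) since $\mu\to\infty$. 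Hence the sufficient condition of Theorem \ref{ThCVMQ} is met, and $(\hat{m}_y^{HT}-m_y)\to 0$ in mean square.

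There is essentially no obstacle here; the only point requiring a word of care is the consistency of notation, namely that $\mu=\mathrm{trace}(K)=E(\sharp\mathbb{S})$ (Corollary \ref{corEspTaille}) plays the role of the expected sample size, and that assumption (1) is a mild regularity condition forcing the first-order inclusion probabilities $\pi_k=K_{kk}$ to be, up to the constant $c$, at least as large as the ``average'' value $\mu/N$ one would get under equal probabilities. I would also remark in passing that the equal-probability case $\pi_k=\mu/N$ satisfies assumption (1) with $c=1$, recovering the statement made in the text just before the corollary for bounded $y$.
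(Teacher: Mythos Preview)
Your proof is correct and matches the paper's intent: the corollary is stated without proof, as an immediate consequence of the ``in particular'' clause of Theorem \ref{ThCVMQ}, and your verification of that sufficient condition via the bound $K_{kk}^{-1}\leq N/(c\mu)$ is exactly the intended argument.
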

The second assumption appears for instance in \citet{robinson1982convergence}.\\


Apart consistency, some authors have considered the existence of central limit theorems for sampling designs. However, this proves generally a difficult task even for means or totals, and existing results either focus on a particular class of sampling designs (equal probability sampling designs: \citet{erdos1959central}, \citet{hajek1960limiting}, rejective Poisson sampling: \citet{hajek1964asymptotic}), or assume entropy conditions (\citet{Berger1998rate}). Assuming only that the determinantal sampling design is ``random enough'', we obtain a central limit theorem by applying the results of Soshnikov (\citet{soshnikov2000determinantal},\citet {soshnikov2002gaussian}). These articles contain several theorems on the asymptotic normality of functionals of determinantal point processes. Theorem $1$ on linear statistics of bounded measurable functions in \citet{soshnikov2002gaussian} can be applied straightforwardly to the study of determinantal sampling designs and their associated linear homogeneous estimators (whose weight do not depend on the random sample).

\begin{theorem}[Central Limit Theorem]\label{TCL}
Define for all $N\in \NN$ the homogeneous linear estimators
\begin{equation*}
\hat{t}_{yw}=\displaystyle\sum_{k\in \mathbb{S}} w_k y_k \text{ and } \hat{t}_{|y|w}=\displaystyle\sum_{k\in \mathbb{S}} w_k|y_k|
\end{equation*}
If the variance $var(\hat{t}_{yw})\to +\infty$ as $N\to \infty$ and if 
\begin{equation*}
\sup_{1\leq k\leq N}|w_ky_k|=\textit{o}\left(var(\hat{t}_{yw})\right)^{\epsilon} \text{ and } E(\hat{t}_{|y|w})=\textit{O}\left(var(\hat{t}_{yw})\right)^{\delta} 
\end{equation*}
for any $\epsilon>0$ and some $\delta>0$, then
\begin{equation*}
\frac{\hat{t}_{yw}-E(\hat{t}_{yw})}{\sqrt{var(\hat{t}_{yw})}}\stackrel{law}{\rightarrow}\mathcal{N}(0,1).
\end{equation*}
\end{theorem}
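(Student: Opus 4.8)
The plan is to reduce the statement to a direct application of Soshnikov's central limit theorem for linear statistics of determinantal point processes (\citet{soshnikov2002gaussian}, Theorem 1). First I would observe that a homogeneous linear estimator whose weights do not depend on the sample is exactly a linear statistic of the determinantal random sample $\mathbb{S}\sim DSD(K)$: writing $f(k)=w_ky_k$, we have $\hat{t}_{yw}=\sum_{k\in\mathbb{S}}f(k)=\sum_{k\in\mathbb{S}}f(k)$, and since $U_N$ is finite (hence trivially a space on which bounded measurable functions are just bounded functions), $f$ is automatically bounded on $U_N$. The quantities appearing in Soshnikov's hypotheses translate as follows: the ``span'' or sup-norm of the test function is $\sup_{1\le k\le N}|f(k)|=\sup_{1\le k\le N}|w_ky_k|$; the variance of the linear statistic is $var(\hat{t}_{yw})=(w*y)^T\Delta(w*y)$, which by Proposition \ref{PropAlg} is the determinantal variance and is exactly the normalizing quantity in Soshnikov's theorem; and the auxiliary control term $E(\hat{t}_{|y|w})=\sum_{k}w_k|y_k|\pi_k$ plays the role of the first-moment bound (the ``$L^1$ norm'' of the test function against the intensity measure) that Soshnikov uses to control higher cumulants.

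Next I would recall the precise form of Soshnikov's result. For a determinantal point process with Hermitian kernel having spectrum in $[0,1]$, and a sequence of bounded test functions $f_N$, the centered and normalized linear statistic converges in distribution to $\mathcal{N}(0,1)$ provided the variance tends to infinity and the test function is, in a suitable sense, ``small'' compared to the variance — quantitatively, $\|f_N\|_\infty=o\big(var(\cdot)^{\epsilon}\big)$ for every $\epsilon>0$, together with a polynomial bound $E$(first absolute moment)$\,=O\big(var(\cdot)^{\delta}\big)$ for some $\delta>0$. The proof in \citet{soshnikov2002gaussian} proceeds by the method of moments/cumulants: one shows that all cumulants of order $\ge 3$ of the normalized statistic vanish in the limit, using the determinantal structure (cumulants are expressed via traces of products of the kernel composed with multiplication by $f$) and exactly the two hypotheses above to dominate them. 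Since our $\Delta$, $\pi_k$, and $var(\hat{t}_{yw})$ are precisely the combinatorial objects in that computation (see Equations \eqref{deltad}, \eqref{pikd}, \eqref{eqMSEHT}), nothing new needs to be proved: I would simply verify that the dictionary between survey-sampling notation and point-process notation matches term by term, then invoke the theorem.

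Concretely the steps are: (i) identify $DSD(K)$ with a determinantal point process on the finite set $U_N$ with Hermitian contracting kernel $K^N$ (legitimate by the Macchi--Soshnikov characterization already cited, and by the finiteness of $U_N$ the integrability/measurability conditions in \citet{soshnikov2002gaussian} are vacuous); (ii) set $f_N(k)=w^N_ky^N_k$ and note $\hat{t}_{yw}=\sum_{k\in\mathbb{S}}f_N(k)$; (iii) match $var(\hat{t}_{yw})\to+\infty$ with Soshnikov's variance-growth hypothesis, using Corollary \ref{corEspTaille}/Proposition \ref{PropAlg} to write it in closed form; (iv) match $\sup_k|w_ky_k|=o(var^\epsilon)$ and $E(\hat{t}_{|y|w})=\sum_k w_k|y_k|\pi_k=O(var^\delta)$ with the two remaining hypotheses; (v) conclude $\big(\hat{t}_{yw}-E(\hat{t}_{yw})\big)/\sqrt{var(\hat{t}_{yw})}\xrightarrow{law}\mathcal{N}(0,1)$ by Theorem 1 of \citet{soshnikov2002gaussian}.

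The main obstacle is not a computation but a matter of careful cross-referencing: Soshnikov's theorem is stated in the continuum (linear statistics of the form $\sum_{x\in\mathbb{S}}f(x)$ for a point process on $\RR^d$ or a Polish space with a reference measure), so one must check that the hypotheses specialize correctly to the finite/atomic case — in particular that ``bounded measurable $f$'' imposes no restriction here, that the relevant ``variance'' in his statement is literally $var(\sum_{k\in\mathbb{S}}f(k))$ and not some normalized variant, and that the cumulant bounds he derives do not hide an implicit assumption (e.g. translation invariance, or a kernel that is a projection) that a general contracting $K^N$ might violate. Once it is confirmed that his Theorem 1 genuinely requires only a Hermitian kernel with spectrum in $[0,1]$ plus the three displayed asymptotic conditions, the proof is essentially a one-line citation. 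I would therefore devote the bulk of the write-up to this verification and present the rest telegraphically.
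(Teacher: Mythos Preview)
Your proposal is correct and matches the paper's approach exactly: the paper does not give a separate proof but states just before the theorem that ``Theorem $1$ on linear statistics of bounded measurable functions in \citet{soshnikov2002gaussian} can be applied straightforwardly'' to homogeneous linear estimators under a determinantal sampling design. Your dictionary $f_N(k)=w_ky_k$, $\|f_N\|_\infty=\sup_k|w_ky_k|$, $E(\hat{t}_{|y|w})=\sum_k w_k|y_k|\pi_k$, together with the variance-growth condition, is precisely the specialization of Soshnikov's hypotheses to the finite atomic setting, so the result is indeed a one-line citation.
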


The assumptions of the theorem call for some comments. The assumption $var(\hat{t}_{yw})\to +\infty$ is natural to get a CLT, but a lower bound on the variance is given by the smallest eigenvalue of $(I-K)*K$, that is $0$ for instance for fixed size sampling designs. The two other assumptions are more technical. We present a specific case where they are met.

\begin{corollary}\label{CorTCL} 
If for some $a,b>0$, $\sup_{1\leq k\leq N}|w_k y_k|=\textit{O}\left(\log(N)^a\right)$ and $N^b =\textit{O}\left(var(\hat{t}_{yw})\right)$ then
\begin{equation*}
\frac{\hat{t}_{yw}-E(\hat{t}_{yw})}{\sqrt{var(\hat{t}_{yw})}}\stackrel{law}{\rightarrow}\mathcal{N}(0,1).
\end{equation*}
\end{corollary}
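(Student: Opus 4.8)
The plan is to derive Corollary \ref{CorTCL} directly from Theorem \ref{TCL} by verifying that its two technical hypotheses follow from the stated growth assumptions. So the whole argument is a short substitution check; there is no new probabilistic content.

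First I would handle the condition $\sup_{1\leq k\leq N}|w_ky_k|=o\bigl(var(\hat t_{yw})\bigr)^{\epsilon}$ for every $\epsilon>0$. By assumption there is $a>0$ with $\sup_k|w_ky_k|=O(\log(N)^a)$, and there is $b>0$ with $N^b=O\bigl(var(\hat t_{yw})\bigr)$, equivalently $var(\hat t_{yw})\geq c\,N^b$ for large $N$. Hence for any fixed $\epsilon>0$ we have $\bigl(var(\hat t_{yw})\bigr)^{\epsilon}\geq c^{\epsilon}N^{b\epsilon}$, and since $\log(N)^a = o(N^{b\epsilon})$ for every $\epsilon>0$ (polynomial beats polylogarithmic), the first hypothesis of Theorem \ref{TCL} holds.

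Next I would check $E(\hat t_{|y|w})=O\bigl(var(\hat t_{yw})\bigr)^{\delta}$ for some $\delta>0$. Here $E(\hat t_{|y|w})=\sum_{k\in U}\pi_k w_k|y_k|=\sum_{k} K_{kk}w_k|y_k|$, which is bounded by $N\cdot\sup_k|w_ky_k|\cdot\sup_k K_{kk}\leq N\cdot O(\log(N)^a)$ since $K_{kk}\leq 1$. Thus $E(\hat t_{|y|w})=O\bigl(N\log(N)^a\bigr)=O(N^{2})$ say, while $var(\hat t_{yw})\geq c\,N^{b}$; choosing $\delta$ large enough — concretely any $\delta>2/b$ works, since then $\bigl(var(\hat t_{yw})\bigr)^{\delta}\geq c^{\delta}N^{b\delta}$ dominates $N^2\log(N)^a$ — gives the second hypothesis. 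Also $var(\hat t_{yw})\geq cN^b\to\infty$, so the remaining assumption $var(\hat t_{yw})\to+\infty$ of Theorem \ref{TCL} is automatic. Applying Theorem \ref{TCL} then yields the conclusion.

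The only mild subtlety, and the one place I would be careful, is the direction of the exponent in the second hypothesis: one must pick a single $\delta>0$ that simultaneously beats the (polynomial times polylog) upper bound on $E(\hat t_{|y|w})$ — here $\delta$ must be taken \emph{large}, not small, which is fine because Theorem \ref{TCL} quantifies that hypothesis as ``some $\delta>0$''. Everything else is routine asymptotic comparison of $\log(N)^a$, powers of $N$, and the lower bound $N^b$ on the variance; no additional structural property of determinantal designs is needed beyond $K_{kk}\in[0,1]$ and the formula $E(\hat t_{|y|w})=\sum_k K_{kk}w_k|y_k|$ from \eqref{pikd}.
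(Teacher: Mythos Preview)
Your proof is correct and is exactly the routine verification of the hypotheses of Theorem \ref{TCL} that the paper implicitly has in mind; the paper itself gives no proof for this corollary, treating it as immediate. One minor remark: in your third step any $\delta>1/b$ already suffices (you take $\delta>2/b$, which is harmless but slightly more than needed).
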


This applies in particular to the Horvitz-Thompson estimation of a bounded variable with $\pi_k>c$.

As usual, we can replace the true variance $var(\hat{t}^N_{yw})$ by any weakly consistent estimator of this variance, using Slutsky theorem. A classical variance estimator is the Horvitz-Thompson estimator of the variance (\citet{horvitz1952generalization}): 
\begin{equation*} 
\widehat{var}_{HT}(\hat{t}_{yw})= \underset{k\in \mathbb{S}}\sum\underset{l\in \mathbb{S}}\sum \frac{w_ky_k w_l y_l}{\pi_{kl}}\Delta_{kl}.
\end{equation*}
In case of fixed size sampling designs, an alternative formula for the variance can be used and we get the Sen-Yates-Grundy estimator (\citet{yates1953selection}, \citet{sen1953estimate}):
\begin{equation*}
\widehat{var}_{SYG}(\hat{t}_{yw})= \frac{1}{2}\underset{k\in \mathbb{S}}\sum\underset{l\in \mathbb{S}}\sum \frac{(w_k y_k-w_l y_l)^2}{\pi_{kl}}\Delta_{kl}
\end{equation*}
which is itself a Horvitz-Thomson estimator (but of a different sum). The first and second order inclusion probabilities of these new sampling designs can be calculated by means of the matrix $K$ associated to $\mathbb{S}$, and we can use the classical criteria of convergence of the Horvitz-Thompson estimator \citet{isaki1982survey}, \citet{robinson1982convergence}, \citet{dol1996matrix}, \citet{cardot2010properties}, \citet{chauvet2014note} that depend only on these inclusion probabilities. \\


As previously recorded, from a very different perspective, the work of \citet{Berger1998rate} proves asymptotic normality for fixed size sampling designs under asymptotically maximal entropy conditions (the sampling is asymptotically rejective). Recently, the asymptotic normality has also been studied for more general classes of processes (that include the determinantal ones): processes with negative or positive associations (\citet{patterson2001limit}, \citet{yuan2003central}), and processes that satisfy the strong Rayleigh property \citet{branden2012negative}.
We adapt here Theorem 2.4 of \citet{patterson2001limit} in the case of the Horvitz-Thompson estimator of the total based on determinantal sampling designs. 
The variance of the Horvitz-Thompson estimator decomposes as \[
var(\hat{t}_{yHT})= \overbrace{\underset{k\in U}\sum y_k^2(K_{kk}^{-1}-1)}^{\text{Poisson\, contribution}}-\overbrace{2\underset{k\in U}\sum\underset{l< k }\sum \frac{y_ky_l}{\pi_k\pi_l}|K_{kl}|^2}^\text{off-diagonal contribution}\] 
Set $s^2=\underset{k\in U}\sum y_k^2(K_{kk}^{-1}-1)$, $r=\underset{k\in U}\sum\underset{l< k }\sum \frac{y_ky_l}{\pi_k\pi_l}|K_{kl}|^2$ and $C=\sup_{1\leq k\leq N}|\pi_k^{-1}y_k|$.
\begin{theorem}
If $s^2\to \infty$, $r=o(s^2)$ and $C=o(s)$, then
\[\frac{\hat{t}_{yHT}-t_y}{s}\stackrel{law}{\rightarrow}\mathcal{N}(0,1).\]
\end{theorem}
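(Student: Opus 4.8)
The plan is to reduce the statement to Theorem 2.4 of \citet{patterson2001limit}, exactly as the paragraph preceding it announces. That theorem gives a central limit theorem for sums of negatively associated random variables under a Lindeberg-type condition, and we know from Section~\ref{secDet} that a determinantal sampling design has negative associations. So the real work is bookkeeping: matching the hypotheses $s^2\to\infty$, $r=o(s^2)$, $C=o(s)$ to whatever normalisation and truncation condition appears in \citet{patterson2001limit}.

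First I would write $\hat{t}_{yHT}-t_y=\sum_{k\in U}\pi_k^{-1}y_k(\mathbf{1}_{k\in\mathbb{S}}-\pi_k)=\sum_k X_k$, where $X_k=\pi_k^{-1}y_k(\mathbf{1}_{k\in\mathbb{S}}-\pi_k)$ are centred, and the indicators $(\mathbf{1}_{k\in\mathbb{S}})_k$ are negatively associated (hence so are the $X_k$, up to signs of $y_k$; one has to be slightly careful here, since negative association is not preserved by arbitrary monotone-in-opposite-directions transforms, but multiplying by a constant of either sign is fine for the pairwise covariance bookkeeping, and the relevant limit theorem is typically stated so as to accommodate this—if not, split $U$ into the indices with $y_k>0$ and $y_k<0$). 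Next I would identify the variance: by Proposition~\ref{PropAlg} (the Horvitz-Thompson line), $var(\hat{t}_{yHT})=\sum_k y_k^2(K_{kk}^{-1}-1)-2\sum_{l<k}\frac{y_ky_l}{\pi_k\pi_l}|K_{kl}|^2=s^2-2r$. The hypothesis $r=o(s^2)$ then gives $var(\hat{t}_{yHT})=s^2(1+o(1))$, so normalising by $s$ is asymptotically the same as normalising by $\sqrt{var(\hat{t}_{yHT})}$; Slutsky then lets me pass from one to the other once the CLT is established for the latter.

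Then I would check the Lindeberg/boundedness condition. Each summand satisfies $|X_k|\le 2|\pi_k^{-1}y_k|\le 2C$ almost surely, and $C=o(s)$, so $\sup_k|X_k|/s\to 0$; combined with $\sum_k var(X_k)\asymp s^2\to\infty$, this is precisely the kind of uniform-negligibility-plus-diverging-variance hypothesis that Theorem 2.4 of \citet{patterson2001limit} requires (their condition is typically that the maximal summand is $o$ of the standard deviation, together with the variance tending to infinity and possibly a mild control on the sum of the negative parts of the pairwise covariances, which here are exactly the $-|K_{kl}|^2 y_ky_l/(\pi_k\pi_l)$ terms making up $r$, already controlled by $r=o(s^2)$). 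Plugging in gives $\frac{\hat{t}_{yHT}-t_y}{\sqrt{var(\hat{t}_{yHT})}}\stackrel{law}{\to}\mathcal{N}(0,1)$, and then replacing the denominator by $s$ via $s^2/var=1+o(1)$ finishes it.

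The main obstacle is purely one of citation-matching: making sure the precise hypotheses of the invoked theorem of \citet{patterson2001limit} line up with $s^2\to\infty$, $r=o(s^2)$, $C=o(s)$, and in particular handling the sign issue in negative association (whether the theorem is for nonnegative summands, for which one would truncate/decompose by the sign of $y_k$, or already stated for real-valued negatively associated arrays). No genuinely new probabilistic estimate is needed beyond the variance identity from Proposition~\ref{PropAlg} and the a.s.\ bound $|X_k|\le 2C$; everything else is the black-box CLT plus Slutsky.
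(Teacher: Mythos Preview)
Your proposal is correct and follows exactly the route the paper indicates: the paper does not give a detailed proof but simply states that the result is obtained by adapting Theorem~2.4 of \citet{patterson2001limit} via the negative association of determinantal designs, and your bookkeeping (the decomposition $\hat{t}_{yHT}-t_y=\sum_k X_k$, the identity $var(\hat{t}_{yHT})=s^2-2r$ from Proposition~\ref{PropAlg}, the bound $|X_k|\le C$, and the final Slutsky step) is precisely what is needed to fill in that citation. Your flagging of the sign issue in negative association is the only genuine subtlety, and your suggested fix (split by the sign of $y_k$) is the standard one.
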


For processes satisfying the strong Rayleigh property, \citet{pemantle2014concentration} recently proved concentration and deviation inequalities that extend those of \citet{lyons2003determinantal} for the number of points of determinantal processes in a subdomain. Their application to sampling theory allows derivation of the following finite distance results.
 
\begin{theorem}[Deviation and concentration inequalities]
$\,$\\
Set $\mu=trace(K)$ and set $C=\sup_{1\leq k\leq N}|w_ky_k|.$ For all $a>0$,
\begin{eqnarray*}
pr(\hat{t}_{yw}-E(\hat{t}_{yw}) > a)\leq 3\exp\left(-\frac{a^2}{16\left(aC+2\mu C^2\right)}\right),\\
pr(|\hat{t}_{yw}-E(\hat{t}_{yw})| > a)\leq 5\exp\left(-\frac{a^2}{16^2\left(aC+2\mu C^2\right)}\right).
\end{eqnarray*} 
Moreover, if $DSD(K)$ is of fixed size $\mu=n$, then
\begin{eqnarray*}
pr(\hat{t}_{yw}-E(\hat{t}_{yw})> a)\leq \exp\left(-\frac{a^2}{8nC^2}\right),\\
pr(|\hat{t}_{yw}-E(\hat{t}_{yw})| > a)\leq 2\exp\left(-\frac{a^2}{8nC^2}\right).
\end{eqnarray*} 
\end{theorem}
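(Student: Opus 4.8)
The plan is to recognize $\hat t_{yw}$ as a scalar multiple of a Hamming--Lipschitz functional of the strong Rayleigh vector of membership indicators, and then to invoke directly the deviation and concentration inequalities of \citet{pemantle2014concentration}. First I would set $X_k=\mathbf{1}_{\{k\in\mathbb S\}}$ and $X=(X_1,\dots,X_N)\in\{0,1\}^N$. The excerpt already records that $DSD(K)$ enjoys the strong Rayleigh property (\citet{borcea2009negative}, \citet{pemantle2014concentration}), so $X$ is a strong Rayleigh vector. Assuming $C>0$ (the case $C=0$ being trivial), put $a_k=w_ky_k/C$, so that $|a_k|\le 1$ and $\hat t_{yw}=C\,g(X)$ with $g(x):=\sum_{k\in U}a_kx_k$. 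Flipping a single coordinate of $x$ changes $g(x)$ by $\pm a_k$, hence by at most $1$, so $g$ is $1$-Lipschitz on $\{0,1\}^N$ for the Hamming distance. Moreover $\hat t_{yw}-\mathbb E(\hat t_{yw})=C\bigl(g(X)-\mathbb E\,g(X)\bigr)$, so each event $\{\hat t_{yw}-\mathbb E(\hat t_{yw})>a\}$ coincides with $\{g(X)-\mathbb E\,g(X)>a/C\}$, and similarly for the absolute-value events.

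For the general (random size) case I would apply the Bernstein-type deviation inequality of \citet{pemantle2014concentration} for $1$-Lipschitz functionals of strong Rayleigh measures: writing $\nu:=\mathbb E\bigl[\sum_k X_k\bigr]$, one has for all $t>0$ that $pr\bigl(g(X)-\mathbb E\,g(X)>t\bigr)\le 3\exp\!\bigl(-t^2/(16(t+2\nu))\bigr)$, together with the two-sided companion carrying the constant $5$ and $16^2$ in place of $16$. By Corollary \ref{corEspTaille}, $\nu=\mathbb E(\sharp\mathbb S)=tr(K)=\mu$ (one uses here that for a strong Rayleigh vector $var(\sharp\mathbb S)\le\mathbb E(\sharp\mathbb S)$, which is why the bound may be phrased with $\mu$). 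Substituting $t=a/C$ and clearing $C^2$ from the denominator yields the first two displayed inequalities. For the fixed-size case, Corollary \ref{corEspTaille} says $K$ is then a projection of rank $n=\mu$ and $\sum_k X_k=\sharp\mathbb S=n$ almost surely, so $X$ is an $n$-homogeneous strong Rayleigh vector and $g$ remains $1$-Lipschitz on the slice; the sharper sub-Gaussian concentration bound of \citet{pemantle2014concentration} for homogeneous strong Rayleigh measures gives $pr\bigl(g(X)-\mathbb E\,g(X)>t\bigr)\le\exp(-t^2/(8n))$ and $pr\bigl(|g(X)-\mathbb E\,g(X)|>t\bigr)\le 2\exp(-t^2/(8n))$, and once more the substitution $t=a/C$ finishes it.

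There is no genuinely new difficulty here; the argument is essentially a translation, and the work lies entirely in the set-up and bookkeeping. The one point that must be handled with care is the normalization: dividing by $C=\sup_k|w_ky_k|$ is exactly what forces the functional to have Hamming-Lipschitz constant $1$, matching the hypotheses of Pemantle--Peres, and one must correctly identify their ``size'' parameter with $\mathbb E(\sharp\mathbb S)=tr(K)$. The only real dichotomy is whether $DSD(K)$ has fixed size: a non-projection kernel produces a fluctuating sample size and hence only a Bernstein regime, whereas a projection kernel makes $X$ a homogeneous strong Rayleigh measure and opens the cleaner sub-Gaussian regime; recognizing this distinction through Corollary \ref{corEspTaille} is what separates the two pairs of inequalities.
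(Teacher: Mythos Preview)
Your proposal is correct and follows essentially the same approach as the paper: both recognize $\hat t_{yw}$ as a Hamming-Lipschitz functional of the strong Rayleigh indicator vector and invoke Theorems~3.1 and~3.2 of \citet{pemantle2014concentration}. The only cosmetic difference is that the paper works directly with the $C$-Lipschitz function $s\mapsto\sum_k w_ky_k\mathbf{1}_{\{k\in s\}}$, whereas you first normalize by $C$ to obtain a $1$-Lipschitz $g$ and then substitute $t=a/C$; the two formulations are equivalent.
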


\begin{proof}
Function $s\mapsto \sum_{k=1}^N w_ky_k 1_{\{k \in s\}}$ is $C$-Lipschitz for the Hamming distance. Theorems $3.1$ and $3.2$ of \citet{pemantle2014concentration} apply and yield the stated results. 
\end{proof}

From this concentration inequality, we derive a new criterion for the convergence in probability of $\hat{t}_{yHT}$:

\begin{corollary}\label{CorCardotNew}
If $\frac{\sqrt{trace(K)}}{N}\sup_{1\leq k \leq N}|\frac{y_k}{K(k,k)}|\underset{N\to \infty}{\longrightarrow} 0$ then
$\frac{\left(\hat{t}_{yHT}-t_{y}\right)}{N}\overset{pr}{\underset{N\to \infty}{\longrightarrow}} 0.$
\end{corollary}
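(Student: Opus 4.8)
The plan is to apply the concentration inequality of the preceding Theorem to the Horvitz--Thompson estimator $\hat{t}_{yHT}=\sum_{k\in\mathbb{S}}\pi_k^{-1}y_k$, so that here $w_k=\pi_k^{-1}=K_{kk}^{-1}$ and $E(\hat{t}_{yHT})=t_y$. First I would set $C=\sup_{1\leq k\leq N}|w_ky_k|=\sup_{1\leq k\leq N}|y_k/K_{kk}|$ and $\mu=\operatorname{trace}(K)$, and rewrite the hypothesis as $\sqrt{\mu}\,C/N\to 0$. Then for any fixed $\varepsilon>0$, I would apply the two-sided bound
\[
pr\!\left(|\hat{t}_{yHT}-t_y|>N\varepsilon\right)\leq 5\exp\!\left(-\frac{(N\varepsilon)^2}{16^2\left(N\varepsilon C+2\mu C^2\right)}\right),
\]
and show the right-hand side tends to $0$; dividing the event by $N$ then gives exactly the claimed convergence $ (\hat{t}_{yHT}-t_y)/N\overset{pr}{\to}0$.

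The core of the argument is to verify that the exponent $(N\varepsilon)^2/\bigl(16^2(N\varepsilon C+2\mu C^2)\bigr)\to\infty$. I would split the denominator into its two terms. For the term $N\varepsilon C$: the ratio $(N\varepsilon)^2/(N\varepsilon C)=N\varepsilon/C$, and since $C\leq \sqrt{\mu}\,C$ (as $\mu=\operatorname{trace}(K)\geq K_{kk}$ for each $k$ forces $\mu\geq\max_k K_{kk}$... more carefully, one wants $\mu\geq 1$ in the nondegenerate regime, or one simply notes $C/N\leq \sqrt{\mu}C/N\to 0$ whenever $\mu\geq 1$, and the degenerate case $\mu<1$ is handled separately or is vacuous), we get $N\varepsilon/C\to\infty$. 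For the term $2\mu C^2$: the ratio $(N\varepsilon)^2/(2\mu C^2)=\varepsilon^2 N^2/(2\mu C^2)=(\varepsilon^2/2)\bigl(N/(\sqrt{\mu}\,C)\bigr)^2\to\infty$ directly by hypothesis. Since the denominator $N\varepsilon C+2\mu C^2$ is a sum of two nonnegative terms, and $(N\varepsilon)^2$ divided by each of them separately tends to infinity, the exponent tends to infinity (using $1/(a+b)\geq \tfrac12\min(1/a,1/b)$ for $a,b>0$), hence the bound $5\exp(-\cdot)\to 0$.

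The main obstacle, and the only genuinely delicate point, is the comparison $C=O(\sqrt{\mu}\,C)$, i.e.\ ensuring $\mu$ is bounded below so that control of $N/(\sqrt{\mu}C)$ also yields control of $N/C$ for the linear term $N\varepsilon C$ in the denominator. In the intended asymptotic regime the expected sample size $\mu=\operatorname{trace}(K)$ grows (or is at least bounded away from $0$), so this is immediate; I would either add the harmless standing assumption $\mu\geq 1$ for $N$ large, or observe that if $\mu$ stays bounded then $C=o(N)$ already follows from $\sqrt{\mu}C/N\to 0$ together with $\mu$ bounded below by, say, $\max_k K_{kk}$ — and if some $K_{kk}\to 0$ faster than everything else the estimator variance behaviour is governed instead by the $\mu C^2$ term, which is exactly the hypothesis. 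So the linear term never dominates, and the proof goes through. The remaining manipulations (rescaling the event by $N$, choosing $\varepsilon$ arbitrary) are routine.
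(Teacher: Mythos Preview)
Your approach is essentially identical to the paper's: apply the two-sided concentration inequality to $\hat t_{yHT}$ with $a=N\varepsilon$, and show the exponent $(N\varepsilon)^2/\bigl(16^2(N\varepsilon C+2\mu C^2)\bigr)\to\infty$ by controlling the two denominator terms separately. The paper dispatches this in two lines, simply asserting ``By assumption $C_N=o(N)$ and $\mu_N C_N^2=o(N^2)$''; you are in fact more scrupulous than the paper in flagging that the step $C=o(N)$ requires $\mu$ to be bounded below (e.g.\ $\mu\geq 1$ eventually), a point the paper leaves implicit in the standing asymptotic regime where the expected sample size grows.
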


\begin{proof}
Set $C_N=\sup_{1\leq k\leq N}\frac{|y_k|}{K_N(k,k)}$, $\mu_N=trace(K_N)$. It holds that
\begin{equation*}
pr(|\hat{t}_{yHT}-t_{y}| > Na)\leq 5\exp\left(-\frac{N^2a^2}{16^2\left(NaC_N+2\mu_N C_N^2\right)}\right)
\end{equation*}
By assumption $C_N=o(N)$ and $\mu_N C_N^2=o(N)^2$, and the right hand term above tends to $0$.
\end{proof}

\section{Constructing equal probability determinantal sampling designs}\label{Sec:ConstructingEqual}

In practice, one mainly uses sampling designs with fixed first order inclusion probabilities. It is thus of crucial importance to be able to build such sampling designs with additional interesting properties, such as a fixed size. The parametric description of determinantal sampling designs proves a formidable tool to achieve this goal. 

Sampling designs with constant first order inclusion probabilities are a particular instance of sampling designs with fixed first order inclusion probabilities, called \textit{equal probability sampling designs}. We consider such designs in this section (next section will be devoted to sampling designs with any prescribed probabilities). We first consider the existence of determinantal sampling designs having the same first and second order probabilities as a SRS. Second, we provide an explicit construction of fixed size and equal probability determinantal sampling designs relying on the $N$-th primary unit roots. We finally relax the fixed size constraint to build new examples of equal probability determinantal sampling designs.

\subsection{$(N,n)$-simple determinantal sampling designs}

SRS is not determinantal in general. This negative result does not however settle the question of the existence of a determinantal sampling design with the same first and second order inclusion probabilities as the SRS of size $n$, that is such that $\pi_k=\frac{n}{N}$ and $\pi_{kl}=\frac{n(n-1)}{N(N-1)} (k\neq l).$

\begin{definition}[$(N,n)$-simple designs]
A sampling design $\cP$ (on $U$ of size $N$) is $(N,n)$-simple if its inclusion probabilities satisfy $$\pi_k=\frac{n}{N}\text{ and }\pi_{kl}=\frac{n(n-1)}{N(N-1)}.$$
\end{definition}

Since the variance of $\sharp \mathbb{S}$ only depends on the first and second order probabilities, which are those of SRS, and as SRS is of fixed size, it follows that a $(N,n)$-simple sampling design is of fixed size $n$. In particular, the kernel of a $(N,n)$-simple determinantal sampling design is a projection of rank $n$.

Applying Equations \eqref{pikd} and \eqref{pikld} to a $(N,n)$-simple determinantal sampling design, we get that its kernel $K$ also satisfies
\begin{equation*}
\left\{\begin{array}{lll} K_{kk}&=& \frac{n}{N}, \\ 
|K_{kl}|^2&=& \frac{n(N-n)}{N^2(N-1)}(k\neq l).\end{array}\right. 
\end{equation*}

Let $F$ be a $(n\times N)$ matrix such that $V=(\frac{n}{N})^{1/2}\overline{F}\,^T$ is an orthonormal basis of the range of $K$ ($K=V\overline{V}\,^T=\frac{n}{N}\overline{F}\,^TF$) ,where $K$ is the kernel of a $(N,n)$-simple determinantal sampling design. It holds that: 
\begin{enumerate}
\item For all $l\in 1,\ldots, N$, $\sum_{k=1}^{n} F_{kl}^2=1$,
\item There exists a nonnegative $\alpha$ such that $|\sum_{j=1}^{n} \overline{F_{jk}}F_{jl}|=\alpha=\sqrt{\frac{N-n}{n(N-1)}}$ ($k\neq l$),
\item $F\overline{F}\,^T=\frac{N}{n}I_n$.
\end{enumerate} 
These properties are exactly those defining an \textit{Equiangular Tight Frame (ETF)} according to \citet{tropp2005complex} and \citet{sustik2007existence}. Thus

\begin{theorem}[$(N,n)$-simple designs and ETFs]\label{ThETF}
$DSD(K)$ is a $(N,n)$-simple sampling design iff $K=\frac{n}{N}\overline{F}\,^TF$, where $F=(f_1,\cdots, f_N)$ is an $ETF$ of $\CC^n$.
\end{theorem}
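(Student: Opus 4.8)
The plan is to prove the theorem by showing that the three numbered properties of $F$ displayed just above the statement are \emph{equivalent} to the defining properties of an ETF, so that the already-established computation (``these properties are exactly those defining an ETF'') becomes a genuine ``if and only if''. The forward direction ($DSD(K)$ is $(N,n)$-simple $\Rightarrow$ $K=\frac nN\overline F^{\,T}F$ with $F$ an ETF) is essentially done in the paragraphs preceding the statement: one uses Corollary \ref{corEspTaille}(5) together with the remark that a $(N,n)$-simple design has fixed size $n$ to conclude that $K$ is a rank-$n$ projection, writes $K=V\overline V^{\,T}$ for an orthonormal basis $V$ of its range, normalizes $V=(n/N)^{1/2}\overline F^{\,T}$, and then reads off properties 1--3 from Equations \eqref{pikd} and \eqref{pikld} (property 1 from $K_{kk}=n/N$, property 2 from $|K_{kl}|^2=\frac{n(N-n)}{N^2(N-1)}$, property 3 from $K^2=K$, i.e. $V$ has orthonormal columns). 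So the substantive content left for the proof is the converse.

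For the converse, I would start from an ETF $F=(f_1,\dots,f_N)$ of $\CC^n$, i.e. the $f_k$ are unit vectors, $|\langle f_k,f_l\rangle|=\alpha$ for all $k\neq l$, and $\sum_k f_k\overline{f_k}^{\,T}=\frac Nn I_n$ (the tight-frame condition, equivalent to property 3 after transposing). Set $K=\frac nN\overline F^{\,T}F$, an $N\times N$ Hermitian matrix. First I would check $K^2=K$: $K^2=\frac{n^2}{N^2}\overline F^{\,T}(F\overline F^{\,T})F=\frac{n^2}{N^2}\overline F^{\,T}\,\frac Nn I_n\,F=\frac nN\overline F^{\,T}F=K$, so $K$ is an orthogonal projection, hence contracting, hence $DSD(K)$ is well defined (Macchi--Soshnikov) and of fixed size $\operatorname{tr}(K)=\frac nN\operatorname{tr}(\overline F^{\,T}F)=\frac nN\sum_k\|f_k\|^2=n$. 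Then compute the inclusion probabilities: $K_{kk}=\frac nN\|f_k\|^2=\frac nN=\pi_k$ by \eqref{pikd}; and for $k\neq l$, $\pi_{kl}=K_{kk}K_{ll}-|K_{kl}|^2=\frac{n^2}{N^2}-\frac{n^2}{N^2}|\langle f_k,f_l\rangle|^2=\frac{n^2}{N^2}(1-\alpha^2)$ by \eqref{pikld}. Finally substitute the known Welch-bound value $\alpha^2=\frac{N-n}{n(N-1)}$ (this is forced: it is the value that the ETF literature—\citet{tropp2005complex}, \citet{sustik2007existence}—derives for any ETF, and it also follows directly here from $\operatorname{tr}(K)=n$ combined with $\operatorname{tr}(K^2)=\operatorname{tr}(K)$ expanded entrywise, since $\sum_{k,l}|K_{kl}|^2=n$ gives $\frac{n^2}{N^2}(N+N(N-1)\alpha^2)=n$). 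This yields $\pi_{kl}=\frac{n^2}{N^2}\cdot\frac{N-1-(N-n)/n \cdot \text{(rearranged)}}{\,}$; carrying out the one-line simplification gives $\pi_{kl}=\frac{n(n-1)}{N(N-1)}$, which is exactly the $(N,n)$-simple condition.

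I do not expect a serious obstacle; the proof is a short verification. The only point requiring a little care is the value of the common angle $\alpha$: one must be sure that an ETF of $N$ vectors in $\CC^n$ necessarily has $\alpha^2=\frac{N-n}{n(N-1)}$ rather than merely having \emph{some} constant modulus of inner products. I would handle this by citing the cited references, or, to keep the argument self-contained, by the trace identity sketched above ($\operatorname{tr}(K^2)=\operatorname{tr}(K)=n$ forces $\sum_{k\ne l}|K_{kl}|^2=n-\sum_k K_{kk}^2=n-n^2/N$, whence $N(N-1)\frac{n^2}{N^2}\alpha^2=n-n^2/N$ and $\alpha^2=\frac{N-n}{n(N-1)}$). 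A secondary bookkeeping point is the translation between the frame-theoretic conventions (rows vs.\ columns, conjugates) in \citet{tropp2005complex}/\citet{sustik2007existence} and the notation $K=\frac nN\overline F^{\,T}F$ used here; I would state explicitly that the three numbered conditions are taken as the \emph{definition} of an ETF in this paper, so that the equivalence is then immediate in both directions.
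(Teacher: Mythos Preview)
Your proposal is correct and follows the same route as the paper. The paper's proof is entirely contained in the paragraph preceding the theorem: it records that the three numbered conditions on $F$ are precisely the defining conditions of an ETF in the cited references, so the equivalence is immediate once one observes that the computations leading to properties 1--3 are reversible. You have simply written out the converse direction explicitly (including the verification that $K=\frac nN\overline F^{\,T}F$ is a projection and the trace identity forcing $\alpha^2=\frac{N-n}{n(N-1)}$), which the paper leaves to the reader; this is a difference of exposition, not of method.
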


As a consequence of Theorem \ref{ThETF}, a necessary and sufficient condition for the existence of ETFs would solve the problem of the existence of $(N,n)$-simple determinantal sampling designs. However, such a condition is not known for the moment. But there exist necessary conditions. In Theorem \ref{thETFsimple}, we apply some existing results on ETFs to $(N,n)$-simple determinantal sampling designs (this a only a small part of a rich literature on the subject, going from strongly regular graphs \citet{waldron2009construction} to Gauss sums and finite field theory \citet{strohmer2008note}).

\begin{theorem}\label{thETFsimple}
Let $1<n<N-1$ be two integers.
\begin{enumerate}
\item There exists a $(N,n)-simple$ determinantal sampling design only if $N\leq min\{n^2,(N-n)^2\}$ (\citet{tropp2005complex}).
\item There exists a $(N,n)-simple$ determinantal sampling design with a real kernel $K$ only if $N\leq \min\{\frac{n(n+1)}{2}, \frac{(N-n)(N-n+1)}{2}\}$ (\citet{sustik2007existence} Theorem C).
\item When $N \neq 2n$, a necessary condition of the existence of a $(N,n)$-simple determinantal sampling design with real kernel $K$ is that the following two quantities be odd integers:
$$\alpha=\sqrt{\frac{n(N-1)}{N-n}}, \,\beta=\sqrt{\frac{(N-n)(N-1)}{n}}.$$
When $N=2n$, it is necessary that $n$ be odd and that $N-1$ be the sum of two squares (\citet{sustik2007existence} Theorem A and \citet{casazza2008real} Theorem 4.1)
\end{enumerate}
\end{theorem}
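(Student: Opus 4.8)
The plan is to invoke Theorem~\ref{ThETF}, which reduces the existence of a $(N,n)$-simple determinantal sampling design to the existence of an equiangular tight frame (ETF) of $N$ vectors in $\CC^n$ (or in $\RR^n$ in the real case), and then quote the corresponding necessary conditions from the frame-theory literature verbatim. Each of the three items is simply the translation of a known ETF bound through this equivalence, so the work is bookkeeping: match the frame parameters $(N,d)=(N,n)$ and the common angle $\alpha=\sqrt{\tfrac{N-n}{n(N-1)}}$ computed just before Theorem~\ref{ThETF} to the hypotheses of the cited theorems, and observe that the stated inequalities and integrality conditions are symmetric under $n\leftrightarrow N-n$ (which corresponds, on the sampling side, to passing to the complementary design via Proposition~\ref{propComplement}, whose kernel $I_N-K$ is again a projection of the right rank).

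First I would handle item~1: the Welch bound together with the fact that equality in the Welch bound characterizes tight frames forces $N\le n^2$ for an ETF of $N$ vectors in $\CC^n$ (this is exactly the bound in \citet{tropp2005complex}); applying the same bound to the complementary frame of $N$ vectors in $\CC^{N-n}$ gives $N\le (N-n)^2$, and intersecting the two yields $N\le\min\{n^2,(N-n)^2\}$. Item~2 is the analogous statement for \emph{real} ETFs, where the relevant bound is $N\le \binom{d+1}{2}$ rather than $d^2$ because real symmetric (rather than Hermitian) outer products span only a $\binom{d+1}{2}$-dimensional space; I would cite \citet{sustik2007existence}, Theorem~C, for $d=n$ and apply it again to the complement with $d=N-n$. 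Item~3 is the integrality condition: for a real ETF with $N\ne 2n$, \citet{sustik2007existence} (Theorem~A) and \citet{casazza2008real} (Theorem~4.1) show that the two quantities $\sqrt{\tfrac{n(N-1)}{N-n}}$ and $\sqrt{\tfrac{(N-n)(N-1)}{n}}$ must be odd integers; these arise as the eigenvalue data of the associated strongly regular graph (the Seidel/signature matrix of the ETF), and the special case $N=2n$ reduces to the classical condition that $N-1$ be a sum of two squares with $n$ odd. I would simply state that the kernel $K=\tfrac{n}{N}\overline{F}^T F$ of a $(N,n)$-simple DSD is, up to the scalar $\tfrac{n}{N}$, the Gram matrix of such an ETF, so all these constraints transfer directly.

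The only genuine content beyond citation is verifying that the passage to the complement is legitimate in each case, i.e. that if $F$ is an ETF of $N$ vectors in $\CC^n$ then its ``Naimark complement'' is an ETF of $N$ vectors in $\CC^{N-n}$ with the complementary angle — but this is standard and is precisely the frame-theoretic shadow of Proposition~\ref{propComplement}. The main (very mild) obstacle is purely expository: making sure the parameter dictionary between ``$(N,n)$-simple design'', ``$N$ equiangular unit vectors in dimension $n$ with angle $\alpha=\sqrt{\tfrac{N-n}{n(N-1)}}$'', and the notation used in each of \citet{tropp2005complex}, \citet{sustik2007existence}, \citet{casazza2008real} lines up correctly, especially the distinction between the two conventions (squared inner product $\tfrac{N-n}{n(N-1)}$ versus the angle itself) and the real-versus-complex dimension count ($d^2$ versus $\binom{d+1}{2}$). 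Once that dictionary is pinned down, the theorem follows immediately from Theorem~\ref{ThETF} and the cited results, with no further computation.
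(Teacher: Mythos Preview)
Your proposal is correct and matches the paper's approach exactly: the paper does not even write out a separate proof of this theorem, but simply embeds the citations to \citet{tropp2005complex}, \citet{sustik2007existence}, and \citet{casazza2008real} in the statement, relying on the equivalence of Theorem~\ref{ThETF} to translate the known ETF bounds into conditions on $(N,n)$-simple determinantal sampling designs. Your added remarks on the Naimark complement (via Proposition~\ref{propComplement}) to justify the $n\leftrightarrow N-n$ symmetry are a welcome clarification that the paper leaves implicit.
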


Numerical studies compensate for the absence of general existence conditions. We deduce from \citet{sustik2007existence} and \citet{casazza2008real} the existence of $(N,n)$-simple determinantal sampling designs with real kernel for respectively $N\leq 100$ (\citet{sustik2007existence} Table I) and $n\leq 50$ (\citet{casazza2008real} Table III). Tables II and III in \citet{sustik2007existence} also give the existence of $(N,n)$-simple determinantal sampling designs with complex kernels. Table \ref{simple} summarizes this information for $n<9.$ In the table, the symbol $\mathbb{C}$ indicates that no $(N,n)$-simple determinantal sampling design with real kernel exists, but that one with complex kernel does exist (this justifies \textit{a posteriori} the use of complex matrices). 

\begin{table}[h!]
\begin{center}
\caption{Existence of $(N,n)$-simple determinantal sampling designs, depending on the kernel type (real or complex) for $n<9.$ \label{simple}}
\begin{tabular}{|c|c|c|c|c|c|c|c|c|c|c|c|c|c|c|c|} \hline
$n$&	3&	3&	4&	4&	5&	5&	6&	6&	6&	7&	7&	7&	8&	8&	8\\ \hline
$N$&	6&	7&	7&	13&	10&	11&	11&	16&	31&	14&	15&	28&	15&	29&	57\\ \hline
&	$\mathbb{R}$&	$\mathbb{C}$&	$\mathbb{C}$&	$\mathbb{C}$&	$\mathbb{R}$&	$\mathbb{C}$&	$\mathbb{C}$&	$\mathbb{R}$&	$\mathbb{C}$&	$\mathbb{R}$&	$\mathbb{C}$&	$\mathbb{R}$&	$\mathbb{C}$&	$\mathbb{C}$&	$\mathbb{C}$\\ \hline
\end{tabular}
\end{center}
\end{table}

\begin{example}[(6,3)-simple determinantal sampling design]
Let $$K=\frac{1}{2}\left(\begin{array}{cccccc}
1 & \frac{1}{\sqrt{5}} & \frac{1}{\sqrt{5}}& \frac{1}{\sqrt{5}}& \frac{1}{\sqrt{5}}& \frac{1}{\sqrt{5}}\\ \frac{1}{\sqrt{5}} & 1 & -\frac{1}{\sqrt{5}}& -\frac{1}{\sqrt{5}}& \frac{1}{\sqrt{5}}& \frac{1}{\sqrt{5}}\\ \frac{1}{\sqrt{5}} & -\frac{1}{\sqrt{5}} & 1& \frac{1}{\sqrt{5}}& -\frac{1}{\sqrt{5}}& \frac{1}{\sqrt{5}}\\ \frac{1}{\sqrt{5}} & -\frac{1}{\sqrt{5}} & \frac{1}{\sqrt{5}}& 1& \frac{1}{\sqrt{5}}& -\frac{1}{\sqrt{5}}\\ \frac{1}{\sqrt{5}} & \frac{1}{\sqrt{5}} & -\frac{1}{\sqrt{5}}& \frac{1}{\sqrt{5}}& 1& -\frac{1}{\sqrt{5}}\\ \frac{1}{\sqrt{5}} & \frac{1}{\sqrt{5}} & \frac{1}{\sqrt{5}}& -\frac{1}{\sqrt{5}}& -\frac{1}{\sqrt{5}}& 1 \end{array}\right).$$ $K$ is a projection, and $DSD(K)$ is $(6,3)$-simple. It is not a simple sampling as the samples $\{1,2,3\}$ and $\{4,5,6\}$ do not have the same probabilities ($\frac{1}{8}(1-\frac{3}{5}-\frac{2}{5\sqrt{5}})$ and $\frac{1}{8}(1-\frac{3}{5}+\frac{2}{5\sqrt{5}})$ respectively).
\end{example}

\subsection{Fixed size equal probability sampling designs}\label{secFSEWSD} 

In this section, we construct an explicit family of fixed size, equal probability sampling designs. The matrices involved are special Toeplitz matrices constructed upon primitive $N$th roots of the unity. 

\begin{theorem}\label{propmary}
Let $n,r,N$ be three integers such that $n\leq N$ and $r<N$ with $r,N$ two relatively prime integers. Let $DSD(K^{r,N,n})$ be the determinantal sampling design with kernel $K^{r,N,n}$:
$$\left\{\begin{array}{l}
K^{r,N,n}_{kl}=\frac{1}{N}\frac{\sin(\frac{nr(k-l)\pi}{N})}{\sin(\frac{r(k-l)\pi}{N})}e^{\frac{i r(n-1)(k-l)\pi}{N}},\\
K^{r,N,n}_{kk}=\frac{n}{N}. 
\end{array}\right.$$ $DSD(K^{r,N,n})$ is of fixed size $n$, and its first and second order inclusion probabilities satisfy 
\begin{equation*}
\left\{\begin{array}{lll}
\pi^{r,N,n}_{k} & = & \frac{n}{N}, \\
\pi^{r,N,n}_{kl} &= &\frac{n^2}{N^2} - \frac{1}{N^2}\frac{\sin^2\left(\frac{nr(k-l)\pi }{N}\right)}{\sin^2\left(\frac{r(k-l)\pi }{N}\right)}\,(k\neq l).
\end{array} \right. 
\end{equation*} 
\end{theorem}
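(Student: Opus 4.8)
The claim is that a specific Toeplitz matrix $K^{r,N,n}$ is the kernel of a fixed-size determinantal sampling design, with the stated inclusion probabilities. The natural strategy is to realize $K^{r,N,n}$ as a projection matrix $V\overline{V}\,^T$ for an explicit $N\times n$ matrix $V$ with orthonormal columns, since by Corollary \ref{corEspTaille} being a projection is exactly what forces the fixed size $n$, and the rank of a projection written $V\overline{V}\,^T$ equals the number of columns of $V$.

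\textbf{Step 1: Exhibit $V$.} I would take $V$ to be the submatrix of the discrete Fourier matrix consisting of the columns indexed by $0, r, 2r, \dots, (n-1)r$ (mod $N$), suitably normalized: set $V_{kj} = \frac{1}{\sqrt{N}}\, e^{2\pi i \, r j k / N}$ for $k \in \{1,\dots,N\}$ and $j \in \{0,\dots,n-1\}$. The columns of $V$ are orthonormal because they are distinct columns of the (scaled) unitary DFT matrix — here the hypothesis that $\gcd(r,N)=1$ is used, to guarantee that $r j$ for $j=0,\dots,n-1$ are $n$ \emph{distinct} residues mod $N$, so the chosen columns really are orthonormal. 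Hence $K := V\overline{V}\,^T$ is an orthogonal projection of rank $n$, and $DSD(K)$ is of fixed size $n$ by Corollary \ref{corEspTaille}(5) and the remark that the total number of points equals $rank(K)=n$.

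\textbf{Step 2: Identify the entries of $V\overline{V}\,^T$ with $K^{r,N,n}$.} Compute
$$
(V\overline{V}\,^T)_{kl} = \frac{1}{N}\sum_{j=0}^{n-1} e^{2\pi i\, r j (k-l)/N},
$$
a finite geometric series with ratio $\zeta = e^{2\pi i\, r(k-l)/N}$. Summing it gives $\frac{1}{N}\cdot\frac{\zeta^n - 1}{\zeta - 1}$ when $\zeta \neq 1$, and then the standard manipulation $\frac{\zeta^n-1}{\zeta-1} = \zeta^{(n-1)/2}\,\frac{\sin(n\theta)}{\sin(\theta)}$ with $\theta = \pi r(k-l)/N$ reproduces exactly $\frac{1}{N}\frac{\sin(nr(k-l)\pi/N)}{\sin(r(k-l)\pi/N)}e^{i r(n-1)(k-l)\pi/N}$; when $\zeta = 1$, i.e. $k=l$ (using again $\gcd(r,N)=1$ so $N \mid r(k-l)$ only for $k\equiv l$), the sum is $n/N$. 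So $V\overline{V}\,^T = K^{r,N,n}$.

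\textbf{Step 3: Inclusion probabilities.} These now follow immediately from the general formulas \eqref{pikd} and \eqref{pikld}: $\pi_k = K_{kk} = n/N$, and $\pi_{kl} = K_{kk}K_{ll} - |K_{kl}|^2 = \frac{n^2}{N^2} - \frac{1}{N^2}\frac{\sin^2(nr(k-l)\pi/N)}{\sin^2(r(k-l)\pi/N)}$, since the exponential factor has modulus $1$.

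\textbf{Main obstacle.} There is no deep obstacle; the only delicate points are bookkeeping: verifying orthonormality of the selected DFT columns (which is where coprimality genuinely enters), and carrying the half-angle factorization of the geometric sum carefully enough to match the precise phase $e^{i r(n-1)(k-l)\pi/N}$ in the statement — in particular checking that the formula is well-defined and correct when $\sin(r(k-l)\pi/N)\neq 0$, and that the $k=l$ case is covered separately rather than by a removable-singularity argument.
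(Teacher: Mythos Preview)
Your proof is correct and follows essentially the same route as the paper: both realize $K^{r,N,n}$ as $V\overline{V}\,^T$ for the orthonormal family $V_{kj}=\frac{1}{\sqrt N}e^{2\pi i rjk/N}$ (equivalently, the paper's $v_p=\frac{1}{\sqrt N}\bigl((z^p)^1,\dots,(z^p)^N\bigr)^T$ with $z=e^{2\pi i r/N}$), use coprimality of $r$ and $N$ to ensure orthonormality and that $z^{k-l}\neq 1$ for $k\neq l$, sum the resulting geometric series with the half-angle factorization, and read off the inclusion probabilities from \eqref{pikd}--\eqref{pikld}.
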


\begin{proof}
Let $z=e^{\frac{2i\pi r }{N}}$ be a any primitive $N$th root of the unity with $r,N$ two relatively prime integers. Set $c=n/N$ and define for all $p=0,\ldots,n-1$ the vectors 
$v_p=\frac{\sqrt{c}}{\sqrt{n}}\left(\left(z^p\right)^1,\ldots, \left(z^p\right)^{N}\right)\,^T$.
They define, by construction, an orthonormal family and $K^{r,N,n}=\sum_{p=0}^{n-1} v_p\overline{v_p}\,^T=V\overline{V}\,^T$ is a projection of rank $n$, where $V=(v_0,\cdots, v_{n-1})$. Its diagonal elements satisfy \[P_{kk}^{r,N,n}=\sum_{p=0}^{n-1} V_p(k)\overline{V_p}\,^T(k)=n^{-1}c\sum_{p=0}^{n-1} 1=c\] for all $k=1,\ldots, N$. Its off-diagonal elements satisfy
\begin{eqnarray*}
K^{r,N,n}_{kl}&=\displaystyle\frac{1}{N}\sum_{p=0}^{n-1} z^{(k-l)p}&=\displaystyle\frac{1}{N}\frac{1- z^{(k-l)n}}{1-z^{(k-l)}}\\
&= \displaystyle\frac{1}{N}\frac{1- e^{\frac{2i\pi r(k-l)n}{N}}}{1-e^{\frac{2i\pi r(k-l)}{N}}}&=\displaystyle\frac{1}{N}\frac{\sin\left(\frac{nr(k-l)\pi }{N}\right)}{\sin\left(\frac{r(k-l)\pi}{N}\right)}e^\frac{ir(n-1)(k-l)\pi }{N} \, (k\neq l).
\end{eqnarray*}
The second order inclusion probabilities follows from Equation \eqref{pikld}. 
\end{proof}

Figure \ref{graphe1} shows those probabilities in the following case : $N=18,n=6,r \in (1,5,7).$

\begin{figure}[h]
\begin{center}
\caption{Study of $\cP^{r,18,6}$ for $r \in (1,5,7)$ : $\pi^{r,18,6}_{k}=\pi_k=\frac{6}{18}$}\label{graphe1}
Circle with radius proportional to $\frac{\pi^{r,18,6}_{kl}}{\pi_{k}\pi_{l}}(k\neq l) $ ($0\leq black\leq 0.5$, $0.5\leq blue\leq 0.9$,$0.9\leq red\leq 1$.)
\begin{subfigure}{0.29\textwidth}
\includegraphics[scale=0.29]{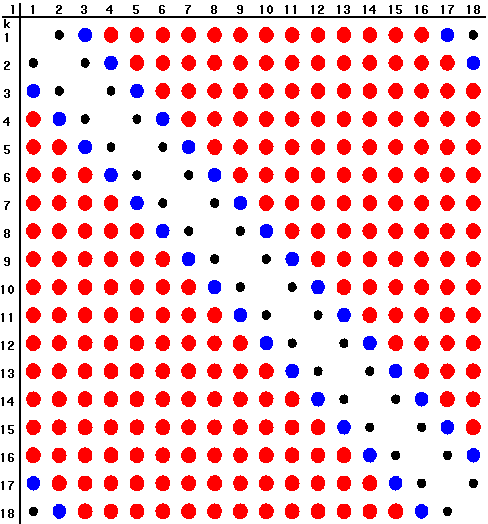}
\caption{$r=1$} \label{fig:r1}
\end{subfigure}
\begin{subfigure}{0.29\textwidth}
\includegraphics[scale=0.29]{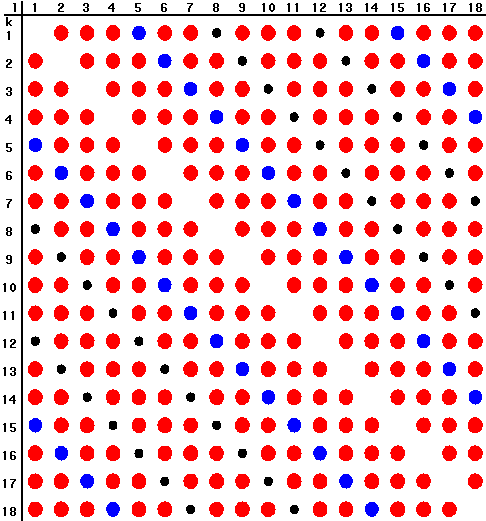}
\caption{$r=5$} \label{fig:r5}
\end{subfigure}
\begin{subfigure}{0.29\textwidth}
\includegraphics[scale=0.29]{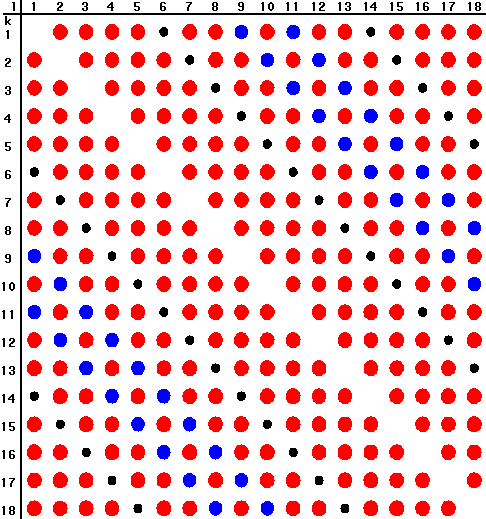}
\caption{$r=7$} \label{fig:r7}
\end{subfigure}
\end{center}
\end{figure}

\begin{corollary}
Let $n, r, N$ be three integers such that $n< N$ and $r<N$ with $r,N$ two relatively prime integers. Then
\begin{equation*}
\displaystyle\sum_{k=1}^{N-1}\frac{\sin^2\left(\frac{nrk\pi}{N}\right)}{\sin^2\left(\frac{rk\pi}{N}\right)}=n(N-n)
\end{equation*}
\end{corollary}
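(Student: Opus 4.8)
The plan is to read off the identity directly from the variance of the sample size of the determinantal sampling design $DSD(K^{r,N,n})$ constructed in Theorem \ref{propmary}. By Corollary \ref{corEspTaille}, for any $\mathbb{S}\sim DSD(K)$ one has $var(\sharp\mathbb{S})=\sum_{k,l\in U}\Delta_{kl}$, and by the last item of that corollary, a fixed size determinantal sampling design has $var(\sharp\mathbb{S})=0$. Theorem \ref{propmary} asserts that $DSD(K^{r,N,n})$ is of fixed size $n$, so its $\Delta$ matrix must sum to zero.

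First I would write out $\sum_{k,l\in U}\Delta_{kl}$ using the explicit entries of $\Delta$ given by Equation \eqref{deltakld}. The diagonal contributes $\sum_{k=1}^N \pi_k(1-\pi_k)=N\cdot\frac{n}{N}\bigl(1-\frac{n}{N}\bigr)=\frac{n(N-n)}{N}$, using $K^{r,N,n}_{kk}=\frac{n}{N}$. The off-diagonal contributes $\sum_{k\neq l}(-|K^{r,N,n}_{kl}|^2)=-\sum_{k\neq l}|K^{r,N,n}_{kl}|^2$. From the formula for $K^{r,N,n}_{kl}$ in Theorem \ref{propmary}, the modulus squared is $\frac{1}{N^2}\frac{\sin^2(nr(k-l)\pi/N)}{\sin^2(r(k-l)\pi/N)}$ (the exponential factor has modulus one). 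Since this depends only on $k-l\bmod N$, summing over the $N$ choices of $k$ for each fixed nonzero difference $m=k-l\bmod N$ gives $\sum_{k\neq l}|K^{r,N,n}_{kl}|^2=\frac{N}{N^2}\sum_{m=1}^{N-1}\frac{\sin^2(nrm\pi/N)}{\sin^2(rm\pi/N)}=\frac{1}{N}\sum_{m=1}^{N-1}\frac{\sin^2(nrm\pi/N)}{\sin^2(rm\pi/N)}$.

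Setting the total to zero, $\frac{n(N-n)}{N}-\frac{1}{N}\sum_{m=1}^{N-1}\frac{\sin^2(nrm\pi/N)}{\sin^2(rm\pi/N)}=0$, and multiplying through by $N$ yields exactly the claimed identity $\sum_{k=1}^{N-1}\frac{\sin^2(nrk\pi/N)}{\sin^2(rk\pi/N)}=n(N-n)$. One should note that $n<N$ and $\gcd(r,N)=1$ guarantee that no denominator $\sin(rm\pi/N)$ vanishes for $1\le m\le N-1$, so every term is well-defined.

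There is really no hard step here; the only thing to be careful about is the bookkeeping in passing from the double sum $\sum_{k\neq l}$ over ordered pairs to the single sum over differences — one must check that each residue class $m\in\{1,\dots,N-1\}$ is hit exactly $N$ times among the $N(N-1)$ ordered pairs $(k,l)$ with $k\neq l$, which holds because $U$ is cyclically indexed. Alternatively, and perhaps more transparently, one could invoke Equation \eqref{eqSDtaillefixe}: for a fixed size design, $\sum_{l\in U}\Delta_{kl}=0$ for each fixed $k$; taking $k$ fixed and summing $-|K^{r,N,n}_{kl}|^2$ over $l\neq k$ together with the diagonal term $\frac{n}{N}(1-\frac{n}{N})$ already gives the identity after multiplying by $N$, without needing to sum over $k$ at all. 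I would present the proof via \eqref{eqSDtaillefixe} for brevity.
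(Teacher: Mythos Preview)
Your proposal is correct and, in the end, lands on exactly the paper's argument: the paper's proof is the one-line application of Equation \eqref{eqSDtaillefixe} (fixing one index, here $l=N$) to the fixed-size design $DSD(K^{r,N,n})$, which is precisely the alternative you single out at the end as the version to present. The longer route via $var(\sharp\mathbb{S})=\sum_{k,l}\Delta_{kl}=0$ is also fine but unnecessary once you invoke \eqref{eqSDtaillefixe}.
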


\begin{proof}
We apply Equation \eqref{eqSDtaillefixe} with $l=N$. 
\end{proof}

\subsection{Other equal probability determinantal sampling designs}

Relaxing the fixed size constraint leads to new families of determinantal sampling designs, as shown in the following results. 

\begin{theorem}
Let $n, N$ be two integers such that $0<n<N$, and let $DSD(K^{N,n})$ be the determinantal sampling design with kernel $K^{N,n}$:
$$\left\{\begin{array}{l}
K^{N,n}_{kl}=\frac{N-n}{N(N-1)}\\
K^{N,n}_{kk}=\frac{n}{N}. 
\end{array}\right.$$ $DSD(K^{N,n})$ is of random size $n\geq 1$, and its first and second order inclusion probabilities satisfy 
\begin{equation*}
\left\{\begin{array}{lll}
\pi^{N,n}_{k} & = & \frac{n}{N}, \\
\pi^{N,n}_{kl} &= &\frac{n^2}{N^2} - \frac{1}{N^2}\,(k\neq l).
\end{array} \right. 
\end{equation*} 
\end{theorem}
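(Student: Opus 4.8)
The plan is to exhibit the matrix $K^{N,n}$ as a contracting Hermitian matrix, identify its spectrum, and then read off all the claimed properties. Since $K^{N,n}$ has constant diagonal $\tfrac nN$ and constant off-diagonal $\tfrac{N-n}{N(N-1)}$, it lies in the two-dimensional algebra spanned by $I_N$ and $J_N$; writing $K^{N,n}=aI_N+bJ_N$ with $a=\tfrac nN-\tfrac{N-n}{N(N-1)}$ and $b=\tfrac{N-n}{N(N-1)}$ makes everything explicit.

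First I would diagonalize. The matrix $J_N$ has eigenvalue $N$ on the constant vector $e=(1,\dots,1)^T$ and eigenvalue $0$ on the orthogonal complement $e^\perp$. Hence $K^{N,n}$ has eigenvalue $a+bN=\tfrac nN-\tfrac{N-n}{N(N-1)}+\tfrac{N-n}{N-1}=1$ with multiplicity one (eigenvector $e$), and eigenvalue $a=\tfrac nN-\tfrac{N-n}{N(N-1)}=\tfrac{n(N-1)-(N-n)}{N(N-1)}=\tfrac{(n-1)N}{N(N-1)}=\tfrac{n-1}{N-1}$ with multiplicity $N-1$. Both eigenvalues lie in $[0,1]$ provided $1\le n\le N-1$ (for $n\ge 1$ we have $\tfrac{n-1}{N-1}\ge 0$, and $\tfrac{n-1}{N-1}<1$ since $n<N$), so $K^{N,n}$ is a contracting Hermitian matrix and $DSD(K^{N,n})$ is well defined by the Macchi--Soshnikov criterion recalled after Definition~\ref{defDSD}.

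Next I would extract the stated conclusions. By Theorem~\ref{thtaille}, $\sharp\mathbb{S}$ is distributed as $B_1+\cdots+B_N$ with one Bernoulli of parameter $1$ and $N-1$ Bernoullis of parameter $\tfrac{n-1}{N-1}$; by Corollary~\ref{corEspTaille}(1), $E(\sharp\mathbb{S})=\mathrm{tr}(K^{N,n})=N\cdot\tfrac nN=n$, which gives the expected size, and since the parameter-$1$ Bernoulli forces at least one point while the variance $(N-1)\cdot\tfrac{n-1}{N-1}\bigl(1-\tfrac{n-1}{N-1}\bigr)=\tfrac{(n-1)(N-n)}{N-1}$ is strictly positive for $1<n<N-1$, the size is genuinely random (and $\ge 1$); the degenerate endpoints $n=1$ and, via the complementary design, $n=N-1$ are the SRS cases. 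For the inclusion probabilities, Equation~\eqref{pikd} gives $\pi^{N,n}_k=K^{N,n}_{kk}=\tfrac nN$, and Equation~\eqref{pikld} gives $\pi^{N,n}_{kl}=K_{kk}K_{ll}-|K_{kl}|^2=\tfrac{n^2}{N^2}-\bigl(\tfrac{N-n}{N(N-1)}\bigr)^2$; I would then simplify $\bigl(\tfrac{N-n}{N(N-1)}\bigr)^2$ — wait, the claimed value is $\tfrac{n^2}{N^2}-\tfrac1{N^2}$, so one must check $\tfrac{N-n}{N(N-1)}=\tfrac1N$? That fails in general, so in fact the off-diagonal constant should be $\tfrac1N\sqrt{\text{something}}$; I would recheck that the intended kernel has $|K_{kl}|$, not $K_{kl}$, equal to $\tfrac{N-n}{N(N-1)}$ is consistent only if additionally a magnitude condition holds, and reconcile this by noting $\bigl(\tfrac{N-n}{N(N-1)}\bigr)$ is not generally $\tfrac1N$ — this is the one place needing care.

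**The main obstacle** is precisely this last arithmetic reconciliation: verifying that the given off-diagonal entry yields exactly $\pi_{kl}=\tfrac{n^2}{N^2}-\tfrac1{N^2}$ forces $\bigl(\tfrac{N-n}{N(N-1)}\bigr)^2=\tfrac1{N^2}$, i.e. $N-n=N-1$, i.e. $n=1$ — so either the off-diagonal entry in the kernel must read $\tfrac1N\cdot\tfrac{\,?\,}{\,?\,}$ with modulus $\tfrac1N$, or the second-order formula in the statement should carry the factor $\tfrac{(N-n)^2}{(N-1)^2}$ rather than $1$; I would resolve this by recomputing from $K^{N,n}=aI_N+bJ_N$ with the eigenvalue constraint that forces the design to be well defined, and state the corrected (but routine) closed form. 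Everything else — contractivity, spectrum, size law, and the first-order probability — follows immediately from Theorem~\ref{thtaille}, Corollary~\ref{corEspTaille}, and Equations~\eqref{pikd}--\eqref{pikld} with no further work.
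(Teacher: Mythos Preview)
Your approach is essentially the same as the paper's: both compute the spectrum of $K^{N,n}$ (you via the $aI_N+bJ_N$ decomposition, the paper by citing the ``Hurwitz determinant'' for the characteristic polynomial $(1-\lambda)\bigl(\tfrac{n-1}{N-1}-\lambda\bigr)^{N-1}$), deduce that the matrix is contracting, and invoke Corollary~\ref{corEspTaille} to get that the sample is nonempty (since $1\in Sp(K^{N,n})$) and of non-fixed size (since $K^{N,n}$ is not a projection for $1<n<N-1$). The first-order inclusion probabilities are read off the diagonal in both.

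On the second-order probabilities you are right and the paper is sloppy. The paper's proof does not actually derive the formula $\pi^{N,n}_{kl}=\tfrac{n^2}{N^2}-\tfrac{1}{N^2}$; it only establishes contractivity and the size properties, leaving $\pi_{kl}$ to follow from Equation~\eqref{pikld}. But Equation~\eqref{pikld} with $K_{kl}=\tfrac{N-n}{N(N-1)}$ gives
\[
\pi^{N,n}_{kl}=\frac{n^2}{N^2}-\frac{(N-n)^2}{N^2(N-1)^2},
\]
which equals the stated $\tfrac{n^2}{N^2}-\tfrac{1}{N^2}$ only when $n=1$. So the discrepancy you flagged is a genuine typo in the statement, not a gap in your reasoning; the correct closed form is the one above, and everything else in your argument (eigenvalues, contractivity, size law, $\pi_k$) is fine and matches the paper.
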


\begin{proof}
The characteristic polynomial of $K^{N,n}$ can be computed as a Hurwitz determinant: $p(K^{N,n})=(1-\lambda)(\frac{n-1}{N-1}-\lambda)^{N-1}$. $K^{N,n}$ is thus a contracting matrix with $1$ as maximal eigenvalue, and by Corollary \ref{corEspTaille}, $pr(\mathbb{S}=\emptyset)=0$. As $K^{N,n}$ is not a projection, $DSD(K^{N,n})$ is not of fixed size. 
\end{proof}

Actually, $K^{N,n}$ is the mean of the previous kernels $K^{r,N,n}$.

\begin{lemma}
Let $N$ be prime and $n<N$. Then $K^{N,n}=\frac{1}{N-1}\sum_{r=1}^{r=N-1}K^{r,N,n}$.
\end{lemma}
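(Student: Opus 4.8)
The plan is to compute the $(k,l)$-entry of $\frac{1}{N-1}\sum_{r=1}^{N-1}K^{r,N,n}$ directly from the off-diagonal formula in Theorem \ref{propmary} and check it equals $\frac{N-n}{N(N-1)}$, the constant off-diagonal value of $K^{N,n}$ (the diagonal entries are both $\frac{n}{N}$, so nothing to check there). The cleanest route is to go back to the exponential form used in the proof of Theorem \ref{propmary}: for $k\neq l$,
\[
K^{r,N,n}_{kl}=\frac{1}{N}\sum_{p=0}^{n-1} z_r^{(k-l)p},\qquad z_r=e^{2i\pi r/N}.
\]
Summing over $r$ and swapping the order of summation gives
\[
\frac{1}{N-1}\sum_{r=1}^{N-1}K^{r,N,n}_{kl}=\frac{1}{N(N-1)}\sum_{p=0}^{n-1}\;\sum_{r=1}^{N-1} e^{2i\pi r p(k-l)/N}.
\]
So everything reduces to evaluating the inner geometric sum $S_p:=\sum_{r=1}^{N-1} e^{2i\pi r p(k-l)/N}$.

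First I would note that since $N$ is prime and $1\le k-l\le N-1$ (so $k-l\not\equiv 0$), the product $p(k-l)$ is divisible by $N$ if and only if $p\equiv 0 \pmod N$, i.e.\ only for $p=0$ in the range $0\le p\le n-1$ (here we use $n<N$ so that no other value of $p$ can be a multiple of $N$). For $p=0$ we get $S_0=N-1$. For $1\le p\le n-1$, $e^{2i\pi p(k-l)/N}$ is a nontrivial $N$th root of unity, so $\sum_{r=0}^{N-1}e^{2i\pi r p(k-l)/N}=0$, whence $S_p=-1$. Therefore
\[
\frac{1}{N-1}\sum_{r=1}^{N-1}K^{r,N,n}_{kl}=\frac{1}{N(N-1)}\Bigl[(N-1)+(n-1)(-1)\Bigr]=\frac{N-n}{N(N-1)}=K^{N,n}_{kl},
\]
which is exactly the claimed identity.

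The argument is essentially routine; the only point that needs care is the primality hypothesis. It is used precisely to guarantee that among $p=0,1,\dots,n-1$ the only index with $N\mid p(k-l)$ is $p=0$ — for composite $N$ one could have $p(k-l)\equiv 0$ with $1\le p\le n-1$ (for suitable $k-l$ sharing a factor with $N$), and then the corresponding $S_p$ would jump to $N-1$ rather than $-1$, breaking the identity. So the main (and really the only) obstacle is to make this divisibility bookkeeping explicit; once that is pinned down the computation closes immediately. I would also remark in passing that, combined with Theorem \ref{propmary}, this lemma recovers the trigonometric identity $\frac{1}{N-1}\sum_{r=1}^{N-1}\frac{\sin^2(nrk\pi/N)}{\sin^2(rk\pi/N)}=\cdots$ consistent with $\pi^{N,n}_{kl}-\pi^2_k=-1/N^2$, providing a sanity check on the constants.
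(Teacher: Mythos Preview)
Your proof is correct and follows the same approach as the paper's: both use the exponential form $K^{r,N,n}_{kl}=\frac{1}{N}\sum_{p=0}^{n-1}e^{2i\pi r(k-l)p/N}$, swap the order of summation, and evaluate the inner sum over $r$. Your write-up is in fact more detailed than the paper's (which simply states the result of the swapped sum), making explicit where the primality of $N$ enters to ensure that only the $p=0$ term contributes $N-1$ while the remaining $n-1$ terms each contribute $-1$.
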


\begin{proof}
$\frac{1}{N-1}\sum_{r=1}^{r=N-1}K^{r,N,n}_{kl}=\frac{1}{N-1}\sum_{r=1}^{r=N-1}\frac{1}{N}\sum_{p=0}^{n-1} z^{(k-l)p}$ $=\frac{1}{N}\frac{1}{N-1}\sum_{p=0}^{n-1}\sum_{r=1}^{r=N-1} z^{(k-l)p}=\frac{N-n}{N(N-1)}$
\end{proof}

This results is also true when $N$ is not prime, but in that case, for some values of $r$, $K^{r,N,n}$ might not be contracting. \\

We conclude this section by providing two general methods to construct (non-fixed size) equal probability determinantal sampling designs. The first one relies on positive definite kernel functions and takes into account auxiliary information in $\mathbb{R}^Q.$ We illustrate the method with the Laplacian kernel, but other kernels are also available (linear, Gaussian).  

\begin{example}[Laplacian kernel]\label{exLaplace}
Set $0<\alpha<1$ and let $x\in \RR^Q$ be an auxiliary variable. For $\beta$ large enough, there exists a determinantal sampling design with first and second order inclusion probabilities 
\begin{equation*}
\left\{\begin{array}{lll}
\pi_{k} & = &\alpha, \\
\pi_{kl} &= & \alpha^2(1-\exp{^{-2\beta(|x_k-x_l|_1)}}) \,(k\neq l).
\end{array}\right. 
\end{equation*} 
Indeed, the Laplacian kernel function $f^{\alpha,\beta}(x)=\alpha\exp^{-\beta|x|_1}$ is positive semidefinite on $R^Q$ with $\alpha>0$ and $\beta>0$. The matrix $L^{\alpha,\beta}$ defined by $L^{\alpha,\beta}_{kl}=f^{\alpha,\beta}(x_k-x_l)$ is thus positive semidefinite. For $\beta$ large enough, its eigenvalues are less than $1$. The quantity $\alpha N$ is the average size of the random sample. If $x_k=x_l (k\neq l)$, then $k$ and $l$ will never be sampled simultaneously. 
\end{example}

The second method relies on (infinite) Hermitian Toeplitz matrices, see for instance \citet{grenander1958szego}. From this theory we directly obtain the following result.
 
\begin{proposition}[Toeplitz Designs]
Let $0\leq f\leq 1$ be a real square integrable function on $[0,2\pi]$. Then for any $N\in \NN$, the matrix $T_N(f)$ with coefficients $T_N(f)(k,l)=\int_{0}^{2\pi} f(\lambda) e^{-i(k-l)\lambda}d\lambda$ is a contracting matrix, with constant diagonal $|f|_1=\int_{0}^{2\pi} f(\lambda) d\lambda$, and $DSD\left(T_N\left(f\right)\right)$ is an equal probability determinantal sampling design of average size $|f|_1\times N$.
\end{proposition}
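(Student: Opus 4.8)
The plan is to check directly that $T_N(f)$ is a Hermitian contracting matrix with constant diagonal, and then to read off the stated sampling properties from the results of Section~\ref{secDet}.

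First I would record that $T_N(f)$ is Hermitian: since $f$ is real-valued, $\overline{T_N(f)(k,l)}=\int_{0}^{2\pi} f(\lambda)e^{i(k-l)\lambda}\,d\lambda=T_N(f)(l,k)$. Its diagonal entries are $T_N(f)(k,k)=\int_{0}^{2\pi} f(\lambda)\,d\lambda=|f|_1$, independent of $k$; so, once the design is shown to exist, \eqref{pikd} immediately gives $\pi_k=|f|_1$ for all $k$, whence $DSD(T_N(f))$ is of equal probability.

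The core step is to show that $T_N(f)$ is contracting, i.e. that its eigenvalues lie in $[0,1]$. For this I would compute the quadratic form associated to $T_N(f)$: for $u=(u_1,\dots,u_N)\in\CC^N$, setting $P_u(\lambda)=\sum_{k=1}^{N}u_k e^{ik\lambda}$ and expanding $\sum_{k,l}\overline{u_k}u_l e^{-i(k-l)\lambda}=|P_u(\lambda)|^2$, one gets
\[
\overline{u}\,^T T_N(f)\,u=\int_{0}^{2\pi} f(\lambda)\,|P_u(\lambda)|^2\,d\lambda .
\]
Since $0\le f$ the right-hand side is nonnegative, so $0\le T_N(f)$ for the order on positive semidefinite matrices; since $f\le 1$ it is bounded above by $\int_{0}^{2\pi}|P_u(\lambda)|^2\,d\lambda$, which by orthogonality of the family $(e^{ik\lambda})_k$ equals $\|u\|^2$ (with the normalized measure $d\lambda/2\pi$, which I would use throughout, or equivalently after absorbing the factor $2\pi$). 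Hence $0\le T_N(f)\le I_N$, so all eigenvalues of $T_N(f)$ belong to $[0,1]$; this is exactly the classical localization of the spectrum of a Hermitian Toeplitz matrix between the essential infimum and the essential supremum of its symbol, as in Grenander--Szegő theory \citet{grenander1958szego}.

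Finally I would invoke Section~\ref{secDet}: being Hermitian and contracting, $T_N(f)$ defines a determinantal sampling design by the Macchi--Soshnikov theorem (\citet{macchi1975coincidence}, \citet{soshnikov2000determinantal}); its first order inclusion probabilities are $\pi_k=T_N(f)(k,k)=|f|_1$ by \eqref{pikd}, so $DSD(T_N(f))$ is of equal probability; and by Corollary~\ref{corEspTaille} its average size is $E(\sharp\mathbb{S})=tr(T_N(f))=N|f|_1$. The only point requiring care is the normalization of the measure on $[0,2\pi]$ that makes $\sup f\le 1$ correspond to eigenvalues $\le 1$ (and $|f|_1\le 1$ to an admissible inclusion probability); everything else is the one-line quadratic-form computation above combined with the already-established general properties of determinantal sampling designs.
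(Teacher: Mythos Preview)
Your proof is correct and is precisely the content behind the paper's one-line appeal to the theory of Hermitian Toeplitz matrices in \citet{grenander1958szego}: the paper gives no detailed argument, and your quadratic-form computation is exactly the standard way to localize the spectrum of $T_N(f)$ between $\mathrm{ess\,inf}\,f$ and $\mathrm{ess\,sup}\,f$. Your flagging of the $2\pi$ normalization is also on point; as written in the statement with Lebesgue measure $d\lambda$, one actually gets $0\le T_N(f)\le 2\pi\,I_N$ and diagonal $|f|_1\le 2\pi$, so the clean conclusion requires either $d\lambda/2\pi$ or the hypothesis $0\le f\le 1/(2\pi)$.
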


\section{Constructing fixed size, unequal probabilities determinantal sampling designs}\label{Sec:ConstructingUnequal}

To derive estimators with a low MSE, it is common to work with \textit{unequal probabilities sampling designs}, that is sampling designs  with unequal first order inclusion probabilities. These designs are preferably of fixed size, for both practical and theoretical reasons. Indeed, as shown for instance by Theorem \ref{thmQDeville} in the determinantal case, the total $t_y$ will be perfectly estimated by the Horvitz-Thompson estimator $\hat{t}_{yHT}$ iff the vector $\pi$ of first order inclusion probabilities is proportional to the vector $y$ and the sampling design is of fixed size. In this section we first give a general existence result of fixed size determinantal sampling designs with prescribed inclusion probabilities. Then we discuss the effective construction of such designs. A major improvement compared to existing methods (systematic designs, cube method,...) is that the second order inclusion probabilities are completely known. This allows a precise description of these sampling designs (Corollary \ref{corovin}). 

\subsection{Theoretical approach}
Let $\Pi$ be a vector of size $N$ such that $ 0\leq \Pi_k\leq 1$. There exists a very simple determinantal sampling design satisfying $\pi_k=\Pi_k$ for all $k$: the Poisson sampling \eqref{poisson} with kernel $K^{\Pi}$ defined by $K^{\Pi}_{kk}=\Pi_k$ and $K^{\Pi}_{kl}=0,\, k\neq l$. Unfortunately this design is not of fixed size. According to Corollary \ref{corEspTaille}, constructing a fixed size determinantal sampling design with prescribed inclusion probabilities is equivalent to constructing a projection matrix with a prescribed diagonal. The latter problem is a particular case of the more general issue of constructing Hermitian matrices with prescribed diagonal and spectrum
that has re-attracted attention over the last years (\citet{schur1911bemerkungen}, \citet{horn1954doubly}, \citet{kadison2002pythagorean}, \citet{fickus2013constructing},  \citet{dhillon2005generalized}). 



Obviously, since $\sum_{k=1}^N \pi_k$ is the expected number of points in the sample, a necessary condition to obtain fixed size sampling designs with $\pi=\Pi$ is that $\sum_{k=1}^N \Pi_k$ is an integer. 
The next theorem proves that this is actually a sufficient condition. 
It is a direct consequence of the Schur-Horn Theorem (\citet{horn1954doubly}).

\begin{theorem}[Fixed size DSD with prescribed unequal probabilities (Existence)]\label{thHorn1}
$\,$\\
Let $\Pi$ be a vector of size $N$ such that $0\leq \Pi_k\leq 1$ and $\sum_{k=1}^N \Pi_k=n\in \mathbb{N}$. There exists a determinantal sampling design $DSD(K^{\Pi})$ of fixed size $n$ whose first order inclusion probabilities satisfy $\pi_k=\Pi_k=K^{\Pi}_{kk}$.
\end{theorem}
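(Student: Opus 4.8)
The plan is to read off the statement as an instance of the Schur--Horn theorem. By item 5 of Corollary~\ref{corEspTaille}, producing a fixed size $DSD$ with $\pi=\Pi$ is the same as producing a Hermitian projection matrix $K$ indexed by $U$ with diagonal $\Pi$: such a $K$ is automatically contracting, so $DSD(K)$ exists by the Macchi--Soshnikov criterion, it has fixed size $\mathrm{rank}(K)=\mathrm{tr}(K)=\sum_k\Pi_k=n$ (items 1 and 5), and $\pi_k=K_{kk}=\Pi_k$ by \eqref{pikd}. So the entire content is: \emph{does there exist a Hermitian projection of rank $n$ with prescribed diagonal $\Pi$?}

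First I would reformulate ``rank-$n$ Hermitian projection'' as ``Hermitian matrix with spectrum $\lambda=(\,\underbrace{1,\dots,1}_{n},\underbrace{0,\dots,0}_{N-n}\,)$'', since a Hermitian matrix whose eigenvalues all lie in $\{0,1\}$ satisfies $K^2=K$. The Schur--Horn theorem (\citet{horn1954doubly}) then states that a Hermitian matrix with diagonal $\Pi$ and spectrum $\lambda$ exists if and only if $\lambda$ majorizes $\Pi$, i.e. $\Pi\prec\lambda$.

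Next I would verify the majorization. Let $\Pi^{\downarrow}$ denote $\Pi$ rearranged in nonincreasing order. For every $1\le k\le N$, the bound $0\le\Pi_j\le 1$ gives $\sum_{j=1}^{k}\Pi^{\downarrow}_j\le k$, while nonnegativity of the entries together with $\sum_{j=1}^{N}\Pi_j=n$ gives $\sum_{j=1}^{k}\Pi^{\downarrow}_j\le n$; hence $\sum_{j=1}^{k}\Pi^{\downarrow}_j\le\min(k,n)=\sum_{j=1}^{k}\lambda_j$, with equality at $k=N$ since $\sum_j\Pi_j=n=\sum_j\lambda_j$. This is exactly $\Pi\prec\lambda$, so Schur--Horn applies and furnishes the desired Hermitian projection $K^{\Pi}$ of rank $n$ with diagonal $\Pi$; combining with the structural facts recalled above finishes the proof.

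I do not expect a genuine obstacle here: the argument is a packaging of Schur--Horn, and the only thing requiring care is matching the eigenvalue multiplicities and checking the (elementary) majorization inequalities. The point worth flagging is that this establishes existence only; an explicit construction of $K^{\Pi}$ is a separate matter, which the paper takes up in the remainder of Section~\ref{Sec:ConstructingUnequal}.
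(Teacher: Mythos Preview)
Your proof is correct and is precisely the argument the paper has in mind: it states that the theorem ``is a direct consequence of the Schur--Horn Theorem (\citet{horn1954doubly})'' without spelling out the majorization check, which you supply cleanly. Your closing remark that this is existence only, with the explicit construction deferred, also matches the paper's own framing.
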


The original proof of the Schur-Horn theorem is non-constructive. In the next section, we give an explicit construction of the sampling design. 

\subsection{Effective construction}\label{SubSecEC}
Up to now, the effective constructions found in the literature are algorithmic and do not provide a closed form for the matrix $K^{\Pi}$. For instance \citet{kadison2002pythagorean} and \citet{dhillon2005generalized} provide algorithms based on plane rotations whereas \citet{fickus2013constructing} recently describe a Top Kill algorithm using frame theory. 
In the two dimensional case, \citet{dhillon2005generalized} explicitly construct a (real) plane rotation $Q_2$ so that the diagonal vector of $K_2=Q_2AQ_2^T(=Q_2A\overline{Q_2}\,^T)$ equals a prescribed vector $(\Pi_1,\Pi_2)\,^T$, while having the same spectrum as $A$. Assuming (without loss of generality) that $a_{1,1} \leq \Pi_1 \leq \Pi_2 \leq a_{2,2}$, then
$$Q_2 \left( \begin{array}{cc}
a_1 & a^*_{21} \\
a_{21} & a^*_{2} \\
  \end{array}\right ) Q_2^T = \left( \begin{array}{cc}
  \Pi_1 & * \\
* & \Pi_2 \\
  \end{array}\right ),$$ where $$ Q_2 = \left( \begin{array}{cc}
\sin \theta & \cos \theta  \\
-\cos \theta & \sin \theta \\
  \end{array}\right ), $$ 
\begin{eqnarray}\label{matrot}
t &=& \frac{Re a_{21}\pm \sqrt{(Re a_{21})^2-(a_1-\Pi_{1})(a_2-\Pi_{1})}}{a_2-\Pi_{1}}, \nonumber \\
\sin \theta &=& \frac{1}{\sqrt{1+t^2}}, \\
\cos \theta &=& t\sin \theta.  \nonumber
\end{eqnarray}  

Using these plane rotations and the ideas behind the proof of Theorem 7 in \citet{kadison2002pythagorean} we are able to derive a closed-form expression of a specific projection matrix $K^{\Pi}$, denoted $P^{\Pi}$ afterwards, and of its joint inclusion probabilities $\pi_{kl}$ for all $(k,l).$

Let $\Pi$ be a vector of size $N$ such that $0< \Pi_k < 1$ and $\sum_{k=1}^{N}\Pi_k=n \in \mathbb{N}^*$. Set $k_0=0$ and for all integer $r$ such that $1\leq r \leq n$, let  
\begin{itemize} 
\item $1 < k_r \leq N$ be the integer such that $\underset{k=1}{\overset{k_r-1}\sum}\Pi_k< r\text{ and }\underset{k=1}{\overset{k_r}\sum}\Pi_k\geq r\}
$,
\item $\alpha_{k_r}=r-\underset{k=1}{\overset{k_r-1}\sum}\Pi_k$ and $\alpha_k = \Pi_k$ if $k \neq k_r$, 
\item $T_k = \underset{i=k}{\overset{k_{r+1}}\sum}\alpha_i$ for $k_r<k\leq k_{r+1}$,
\item $\gamma_r^{r'}=\sqrt{\underset{j=r+1}{\overset{r'}\prod} \frac{(\Pi_{k_{j}}-\alpha_{k_{j}})\alpha_{k_{j}}}{(1-\alpha_{k_{j}})(1-(\Pi_{k_j}-\alpha_{k_j}))}}$ for $r<r'$, $\gamma_r^{r'}=1$ otherwise. 
\end{itemize}

\begin{figure}[h!]
\begin{center}
\caption{Representation of the various quantities, $n=3$}\label{ExplicationTheorem}
\includegraphics[scale=1]{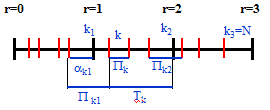}
\end{center}
\end{figure}

Then define a real symmetric kernel $P^{\Pi}$ as follows: 
\begin{itemize}
\item for all $1\leq k\leq N$, $P^{\Pi}_{kk}=\Pi_k$,
\item for all $k>l$ :

\begin{table}[h!]
\begin{center}
\caption{\text{Values of} $P^{\Pi}_{kl}$ : $k>l$}\label{tabpikl}
\begin{tabular}{|c|c|c|} \hline 
& \multicolumn{2}{|c|}{\text{Values of} $l$} \\ \cline{1-3}

  Values of $k$    & $l=k_r$                    & $k_r<l<k_{r+1}$ \\ \hline
$k_{r'}<k<k_{r'+1}$ & $-\sqrt{\Pi_k}\sqrt{\frac{(1-\Pi_l)(\Pi_l-\alpha_l)}{1-(\Pi_l-\alpha_l)}}\gamma_r^{r'}$                      &  $\sqrt{\Pi_k\Pi_l}\gamma_r^{r'}$                                       \\
$k=k_{r'+1}$     &                        $-\sqrt{\frac{(1-\Pi_{k})\alpha_k}{1-\alpha_{k}}}\sqrt{\frac{(1-\Pi_l)(\Pi_l-\alpha_l)}{1-(\Pi_l-\alpha_l)}}\gamma_r^{r'}$  &  $\sqrt{\frac{(1-\Pi_{k})\alpha_k}{1-\alpha_{k}}}\sqrt{\Pi_l}\gamma_r^{r'}$ \\ \hline
\end{tabular}
\end{center}\end{table}

\end{itemize}

\begin{theorem}[Fixed size DSD with prescribed unequal probabilities (Construction)]\label{theovin} 
$\,$\\
The matrix $P^{\Pi}$ is a real projection matrix, and $DSD(P^{\Pi})$ is a fixed size sampling designs with first order inclusion probabilities $\pi_k=\Pi_k, 1\leq k\leq N$.
\end{theorem}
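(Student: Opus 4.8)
The plan is to verify directly that the explicit matrix $P^{\Pi}$ defined by the entries in Table \ref{tabpikl} (together with $P^{\Pi}_{kk}=\Pi_k$) is an orthogonal projection of rank $n$; once this is established, Corollary \ref{corEspTaille}(5) and Equation \eqref{pikd} immediately give that $DSD(P^{\Pi})$ is a fixed size design with $\pi_k=\Pi_k$. Since $P^{\Pi}$ is real symmetric, it suffices to show $(P^{\Pi})^2=P^{\Pi}$, which reduces to checking, for all $k,l$, that $\sum_{m=1}^N P^{\Pi}_{km}P^{\Pi}_{ml}=P^{\Pi}_{kl}$. Rather than attack this monolithic identity head-on, I would follow the inductive construction underlying Theorem 7 in \citet{kadison2002pythagorean}: build $P^{\Pi}$ as a product of the plane rotations $Q_2$ from \eqref{matrot}, applied block-by-block along the chain $0=k_0<k_1<\cdots<k_n\leq N$, so that $P^{\Pi}=Q\,D\,Q^T$ where $D$ is the diagonal $0/1$ matrix (the spectrum forced by $\sum\Pi_k=n$) and $Q$ is orthogonal. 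The real content is to show that the closed form in Table \ref{tabpikl} is exactly what this product of rotations produces.

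The key steps, in order. First, set up the recursion: process the population in the order $1,2,\ldots,N$, maintaining a "current projection" whose diagonal has been partially corrected to $\Pi_1,\ldots,\Pi_{k}$ and whose remaining mass is concentrated in coordinates $k+1,\ldots,N$; each step applies one plane rotation \eqref{matrot} in a coordinate pair to push the next unit of probability mass into place, with the indices $k_r$ marking where the $r$-th "unit" is completed and a fresh rank-one direction is started. Second, identify the bookkeeping quantities: the $\alpha_{k_r}=r-\sum_{k<k_r}\Pi_k$ is the fractional part of mass carried across the boundary $k_r$, the $T_k$ are the running tails $\sum_{i\ge k}\alpha_i$ up to the next boundary, and the products $\gamma_r^{r'}$ accumulate the $\sin\theta$-type factors across the boundaries $k_{r+1},\ldots,k_{r'}$ — this is where the factors $\sqrt{(\Pi_{k_j}-\alpha_{k_j})\alpha_{k_j}/[(1-\alpha_{k_j})(1-(\Pi_{k_j}-\alpha_{k_j}))]}$ come from, each being the product of a cosine on one side of a boundary and a sine on the other. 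Third, prove by induction on $r$ (the number of completed units) that after processing through index $k_{r}$ the matrix restricted to the already-processed block, and its coupling to the block $\{k>k_r\}$, has exactly the entries listed in Table \ref{tabpikl}; the inductive step is a single application of \eqref{matrot} and a simplification of the resulting square roots. Fourth, conclude: after all $n$ units are placed, the construction has produced an orthogonal $Q$ with $P^{\Pi}=Q D Q^T$, hence $(P^{\Pi})^2=P^{\Pi}$, $\operatorname{rank}(P^{\Pi})=n$, and $P^{\Pi}_{kk}=\Pi_k$ by construction; apply Corollary \ref{corEspTaille}(5) and \eqref{pikd}.

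The main obstacle is the bookkeeping in the inductive step: one must show that composing the plane rotation \eqref{matrot} with the partially-built matrix does not disturb the entries already fixed (i.e., the rotation acts as the identity outside the active coordinate pair on the relevant submatrix) and that the newly created entries collapse — after cancelling $a_2-\Pi_1$ denominators and rationalizing the $\sqrt{1+t^2}$ factors — precisely to the $\sqrt{\Pi_k\Pi_l}\,\gamma_r^{r'}$ and $-\sqrt{\Pi_k}\sqrt{(1-\Pi_l)(\Pi_l-\alpha_l)/(1-(\Pi_l-\alpha_l))}\,\gamma_r^{r'}$ forms. The two-case structure of the table (whether $k$ is an interior index $k_{r'}<k<k_{r'+1}$ or a boundary index $k=k_{r'+1}$) reflects exactly whether coordinate $k$ still carries its full mass $\Pi_k$ or only the residual $\alpha_k$ after a boundary rotation, and matching these cases to the recursion is the delicate part; everything else is routine trigonometric simplification. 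One should also check the boundary/degenerate situations (a unit completed exactly at an integer, $k_r$ possibly equal to $k_{r+1}$, the last block), which the convention $\gamma_r^{r'}=1$ for $r\ge r'$ and the empty-product convention handle.
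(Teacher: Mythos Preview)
Your proposal is correct and follows essentially the same route as the paper: build $P^{\Pi}$ as $QDQ^T$ with $D$ the diagonal $0/1$ matrix having $1$'s at the positions $k_r+1$, applying the sequence of plane rotations \eqref{matrot} in consecutive coordinate pairs exactly as in Kadison's Theorem~7, and then identify the resulting entries with the closed forms in Table~\ref{tabpikl} via the explicit values of $\sin\theta_j$ and $\cos\theta_j$. The paper organizes the recursion coordinate-by-coordinate (the chain $P^j=W(j)^TP^{j-1}W(j)$) rather than block-by-block as you do, but this is a cosmetic difference; your reading of $\gamma_r^{r'}$ as the accumulated cross-boundary cosine/sine product is exactly what drives the paper's product $\prod_{j=l}^{k-1}\cos\theta_j$ formula.
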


\begin{proof} 
Let $P_0$ be a $(N\times N)$ matrix whose entries are $0$ apart from $(k_r+1,k_r+1)$ entries, $r=0,\cdots,n-1$ whose values are $1$. The proof of Theorem $7$ in \citet{kadison2002pythagorean} consists in constructing the sequence $P^j=W(j)^TP^{j-1}W(j)$, $j=1,\cdots,N$, where $W(j)$ is the unitary operator whose matrix relative to the canonical basis has $\sin \theta_j$ at the ($j,j$) and ($j+1,j +1$) entries, $\cos \theta_j$ and $-\cos \theta_j$ at the $(j,j+1)$ and $(j+1,j)$ entries, respectively, 1 at all other diagonal entries, and $0$ at all other off-diagonal entries. $\theta_j$ is the angle enabling to have $\Pi_j$ at the $(j,j)$ entry of $P^j$, without changing the first $j-1$ and the last $N-j-1$ diagonal entries. 
We build on this proof and on formulas \ref{matrot} to give explicit calculations of $\sin \theta_j$ and we finally end up with the following coefficients for $P^{N}=P^{\Pi}$: 
 \begin{eqnarray}
  P^{\Pi}_{kl}=&\qquad\sin \theta_k (\alpha_l-1)\sin \theta _l \underset{j=l}{\overset{k=k-1}\prod} \cos \theta_j & ( l=k_r < k) \\   P^{\Pi}_{kl}=&\qquad\sin \theta_k T_l \sin \theta_l \underset{j=l}{\overset{k=k-1}\prod} \cos \theta_j & ( l \neq k_r < k)  
 \end{eqnarray}
where $\sin \theta_j = \sqrt{\Pi_jT_j^{-1}}$ and $\cos \theta_j = \sqrt{T_{j+1}T_j^{-1}}$ when $j=k_r$. $\sin \theta_j=\sqrt{(1-\Pi_j)(1-\alpha_j)^{-1}}$ and $\cos \theta_j=\sqrt{(\Pi_j-\alpha_j)(1-\alpha_j)^{-1}}$ otherwise. 
 \end{proof}

The exact knowledge of the coefficients $P^{\Pi}_{kl}$ enables a precise characterization of the sampling designs so constructed.

\begin{corollary} \label{corovin}
Let $P^{\Pi}$ be the matrix previously constructed, and $DSD(P^{\Pi})$ the associated sampling design. 
\begin{enumerate}
\item If $(k,l) \in ] k_r+1,k_{r+1}-1 [^2$ then $\pi_{kl}=0$.
\item If $i \in ] k_r+1,k_{r+1}-1 [$, $j=k_{r+1}$, $k \in ] k_{r+1}+1,k_{r+2}-1 [$ then $\pi_{ijk}=0$ .
\item Set $B_r=[1,k_{r}+1]$. Then $1$ is an eigenvalue of multiplicity $r$ and $0$ an eigenvalue of multiplicity $k_r-r$ of $K_{|B}$ : the random sample $\mathbb{S}$ has $r$ or $r+1$ elements in $B_r$ ($r\leq \sharp (\mathbb{S}\cap B_r)\leq r+1$). 
\item If $k-l$ is large then $P^{\Pi}_{kl}\approx 0$, and the events $\{k\in \mathbb{S}\}$ and $\{l\in \mathbb{S}\}$ are asymptotically independent. In practice $\pi_{kl}\approx \Pi_k\Pi_l$ also holds for small values of $k-l$.
\item Let $r_{1},\cdots ,r_H$ be the set of values of $1\leq r \leq n$ such that $\sum^{k_r}_{k=1}\Pi_k= r$, and set $r_0=0$. Then $DSD(P^{\Pi})$ is stratified with $H$ strata $]k_{r_{h-1}},k_{r_{h}}]$. In particular, for constant $\Pi_{k}$ and if $n$ divides $N$ then the sampling design picks exactly one element according to the $\Pi_k$ in each of the $n$ strata. 
\end{enumerate}
\end{corollary}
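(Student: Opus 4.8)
The plan is to exploit the explicit closed form for $P^{\Pi}_{kl}$ obtained in Theorem \ref{theovin}, together with the inclusion-probability formulas \eqref{pikd}--\eqref{pikld} and the characterization of stratification (which asserts that $DSD(K)$ is stratified with respect to a partition exactly when $K$ is block-diagonal relative to that partition). Each of the five items is essentially a computation feeding off the table of values of $P^{\Pi}_{kl}$, so I would treat them in order, isolating the structural fact behind each.

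For item 1, if $k_r<l<k<k_{r+1}$ both indices lie strictly between two consecutive ``breakpoints'', so $l\neq k_{r'}$ for any $r'$ and $k$ is not of the form $k_{r'+1}$; reading off Table \ref{tabpikl}, $P^{\Pi}_{kl}=\sqrt{\Pi_k\Pi_l}\,\gamma_r^r=\sqrt{\Pi_k\Pi_l}$ since $\gamma_r^{r}=1$ by the convention $\gamma_r^{r'}=1$ for $r=r'$. Then by \eqref{pikld}, $\pi_{kl}=\Pi_k\Pi_l-|P^{\Pi}_{kl}|^2=\Pi_k\Pi_l-\Pi_k\Pi_l=0$. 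For item 3, I would use Proposition \ref{propDomain}: the restriction of $DSD(P^{\Pi})$ to $B_r=[1,k_r+1]$ is $DSD\bigl((P^{\Pi})_{|B_r}\bigr)$. Inspecting the construction, the vectors $e_1,\dots,e_r$ built in the Kadison-type recursion from the first $r$ ``kill'' steps are supported on $\{1,\dots,k_r\}\subseteq B_r$ and are orthonormal, while the remaining range vectors relevant to $B_r$ contribute; a direct trace and rank count on $(P^{\Pi})_{|B_r}$ shows it is itself a projection of rank $r$ on a space of dimension $k_r$ (diagonal sums give $\operatorname{tr}=\sum_{k=1}^{k_r}\Pi_k$, and by the defining inequality $r\le \sum_{k=1}^{k_r}\Pi_k <r+1$, but since it is a projection the trace is an integer, forcing it to equal $r$; one similarly checks the rank on $[1,k_r]$). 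Then Theorem \ref{thtaille} applied to this restriction gives $\sharp(\mathbb{S}\cap B_r)$ as a sum of Bernoulli variables with parameters the eigenvalues $1$ (multiplicity $r$) and $0$, plus the single Bernoulli at index $k_r+1$, hence $r\le \sharp(\mathbb{S}\cap B_r)\le r+1$.

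Item 5 follows by combining item 3 with the stratification criterion: if $\sum_{k=1}^{k_r}\Pi_k=r$ exactly, then the restriction to $B_r$ is a projection of rank $r$ acting on exactly $k_r$ coordinates with trace $r$, so the ``extra'' coordinate $k_r+1$ carries no mass from the first block, which forces $P^{\Pi}_{kl}=0$ whenever $k\le k_r<l$ (a coordinate in the already-saturated block cannot be coupled to a later one, since the relevant $\gamma$-factor or structural term vanishes); iterating over $r_1<\dots<r_H$ exhibits $P^{\Pi}$ as block-diagonal with blocks indexed by $]k_{r_{h-1}},k_{r_h}]$, and stratification follows from the Stratification Proposition. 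The special case of constant $\Pi_k=n/N$ with $n\mid N$ gives $k_r=rN/n$ and every $r$ satisfies $\sum_{k=1}^{k_r}\Pi_k=r$, so each stratum has $N/n$ elements and the within-stratum design is of fixed size $1$. Items 2 and 4 are of the same flavour: item 2 is a third-order inclusion probability, computed from the $3\times 3$ minor $\det(P^{\Pi}_{|\{i,j,k\}})$ via the determinantal defining relation, which vanishes because the row/column indexed by the middle coordinate $j=k_{r+1}$ together with the table entries makes the minor degenerate; item 4 is the observation that $|P^{\Pi}_{kl}|$ contains the product $\gamma_r^{r'}=\prod (\cdots)^{1/2}$ of factors each strictly less than $1$ once $\Pi_{k_j}\neq\alpha_{k_j}$, so it decays geometrically in the number of breakpoints between $l$ and $k$, giving $\pi_{kl}\to\Pi_k\Pi_l$.

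The main obstacle I anticipate is item 3 (and through it item 5): one must verify that the restriction $(P^{\Pi})_{|B_r}$ really is a projection of the claimed rank, which is not immediate from the table alone but requires tracking which of the orthonormal vectors $e_j$ from the recursive construction in the proof of Theorem \ref{theovin} are supported inside $B_r$. Once that structural claim is secured, Proposition \ref{propDomain} and Theorem \ref{thtaille} do the rest mechanically, and items 1, 2, 4 are routine minor computations from Table \ref{tabpikl} and \eqref{pikld}.
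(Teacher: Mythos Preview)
Your items 1, 2, and 4 follow the paper's line: compute the $2\times2$ and $3\times3$ minors from Table \ref{tabpikl} (the paper writes out the $3\times3$ block explicitly and applies Sarrus' rule for item 2), and for item 4 note that in the proof of Theorem \ref{theovin} the entry $P^{\Pi}_{kl}$ is expressed as a product of cosines, each of modulus at most $1$, hence decays as $|k-l|$ grows.

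Item 3 has a genuine gap. A principal submatrix of a projection is generally \emph{not} a projection: restrict any nontrivial rank-one projection on $\RR^2$ to a single coordinate and you get a scalar strictly in $(0,1)$. So the step ``since it is a projection the trace is an integer, forcing it to equal $r$'' fails; indeed $\mathrm{tr}\bigl((P^{\Pi})_{|[1,k_r]}\bigr)=\sum_{k\le k_r}\Pi_k$ lies strictly between $r$ and $r+1$ unless the partial sum hits $r$ exactly. Correspondingly, the restriction to $B_r$ must carry one eigenvalue strictly in $(0,1)$, which is precisely what allows $\sharp(\mathbb S\cap B_r)$ to be random with the two values $r$ and $r+1$ rather than deterministic. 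The paper's argument is different: it uses that $P^{\Pi}$ is obtained from the diagonal matrix $P_0$ by the product of plane rotations $W(1),\dots,W(N-1)$, and that $W(1),\dots,W(k_r)$ act only on coordinates in $B_r$, so the spectrum of the restriction can be read off from that of $(P_0)_{|B_r}$. Your aside about tracking which orthonormal vectors from the recursion are supported inside $B_r$ is pointing in this direction and could be developed into a correct argument; the trace-and-rank shortcut cannot.

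For item 5 the paper is more direct than your primary route: when $\sum_{k\le k_r}\Pi_k=r$ one has $\alpha_{k_r}=\Pi_{k_r}$, hence $\Pi_{k_r}-\alpha_{k_r}=0$; every entry of Table \ref{tabpikl} crossing this $k_r$ carries a factor $\sqrt{\Pi_{k_r}-\alpha_{k_r}}$ (either explicitly or through $\gamma_r^{r'}$), so all cross-block entries of $P^{\Pi}$ vanish and stratification follows from the block-diagonal criterion. You do mention this parenthetically; make it the argument rather than routing through item 3.
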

 
The first two points follow from the calculations of the respective $2\times 2$ and $3\times 3$ determinant. Using Sarrus rule, we get that $\det(P^{\Pi}_{|ijk})=0$, where:
$$P^{\Pi}_{|ijk}=\left (
\begin{array}{ccc}

  \Pi_i & \sqrt{\frac{(1-\Pi_{j})\alpha_j}{1-\alpha_{j}}\Pi_i} & \sqrt{\Pi_k\Pi_i\frac{(\Pi_j-\alpha_j)\alpha_j}{(1-\alpha_j)(1-(\Pi_j-\alpha_j))}}\\ 

 \sqrt{\frac{(1-\Pi_{j})\alpha_j}{1-\alpha_{j}}\Pi_i} & \Pi_j & -\sqrt{\Pi_k\frac{(1-\Pi_j)(\Pi_j-\alpha_j)}{(1-(\Pi_j-\alpha_j))}} \\ 

 \sqrt{\Pi_k\Pi_i\frac{(\Pi_j-\alpha_j)\alpha_j}{(1-\alpha_j)(1-(\Pi_j-\alpha_j))}} & -\sqrt{\Pi_k\frac{(1-\Pi_j)(\Pi_j-\alpha_j)}{(1-(\Pi_j-\alpha_j))}} & \Pi_k  \\ 
\end{array} \right ).
$$ 
$P^{\Pi}_{|B_r}$ has the same spectrum as $P^0_{|B_r}$ proving point $3$. 
Point $4$ follows from the expression of $P^{\Pi}_{kl}$ as a product of cosine. Finally, point $5$ follows from the following fact: if ${\sum^{k_r}_1}\Pi_k= r$ then $\Pi_{k_r}-\alpha_{k_r}=0$ and $P^{\Pi}_{kl}=0$ for $k,l$ in different strata.\\

These results call for some comments. The construction of $P^{\Pi}$ actually leads to a partition of the population into intervals \[U=\bigcup_{1\leq r\leq n}  ]k_{r-1},k_r]\] such that, if $\mathbb{S}\sim DSD(P^{\Pi})$, then:
\begin{itemize}
\item $\mathbb{S}$ has at most one point into each open interval $]k_{r-1},k_r[$, 
\item $\mathbb{S}$ has at least one and at most three points into each closed interval $[k_{r-1},k_r]$.
\item $\mathbb{S}$ has at most two points into each open interval $]k_{r-1},k_{r+1}[$,

\end{itemize} 
To help understand the way a sample is drawn, Figure \ref{ExplicationTheorem2} shows various feasible and unfeasible samples for $n=3$, giving a graphical representation of the previous properties.

\begin{figure}[h!]
\begin{center}
\caption{Examples of feasible and unfeasible samples $\mathbb{S}\sim DSD(P^{\Pi})$, $n=3$}\label{ExplicationTheorem2}
\includegraphics[scale=0.5]{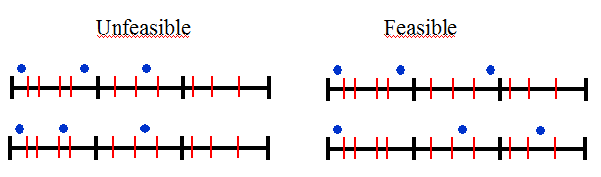}
\end{center}
\end{figure}

We finally provide an example of two matrices built by the previous method. 
\begin{example}
Let $\Pi=(\frac{1}{2},\frac{3}{4},\frac{3}{4},\frac{1}{5},\frac{2}{5},\frac{3}{5},\frac{4}{5})^{T}$ and $\Pi'=(\frac{1}{2},\frac{1}{5},\frac{3}{4},\frac{4}{5},\frac{2}{5},\frac{3}{5},\frac{3}{4})^{T}$. Observe that $\Pi'$ is a permutation of $\Pi$, and that $\Pi_{1}+\Pi_2+\Pi_3=2$. Then  

$$P^{\Pi}=\left(\begin{array}{ccccccc}
\frac{1}{2}	&	\frac{1}{2\sqrt{2}}	&	\frac{1}{2\sqrt{2}}	&	0	&	0	&	0	&	0	\\
\frac{1}{2\sqrt{2}}	&	\frac{3}{4}	&	-\frac{1}{4}	&	0	&	0	&	0	&	0	\\
\frac{1}{2\sqrt{2}}	&	-\frac{1}{4}	&	\frac{3}{4}	&	0	&	0	&	0	&	0	\\
0	&	0	&	0	&	\frac{1}{5}	&	\frac{\sqrt{2}}{5}	&	\frac{2}{5\sqrt{3}}	&	\frac{\sqrt{2}}{5\sqrt{3}}	\\
0	&	0	&	0	&	\frac{\sqrt{2}}{5}	&	\frac{2}{5}	&	\frac{2\sqrt{2}}{5\sqrt{3}}	&	\frac{2}{5\sqrt{3}}	\\
0	&	0	&	0	&	\frac{2}{5\sqrt{3}}	&	\frac{2\sqrt{2}}{5\sqrt{3}}	&	\frac{3}{5}	&	-\frac{\sqrt{2}}{5}	\\
0	&	0	&	0	&	\frac{\sqrt{2}}{5\sqrt{3}}	&	\frac{2}{5\sqrt{3}}	&	-\frac{\sqrt{2}}{5}	&	\frac{4}{5}	\\

 \end{array}\right),$$ $$P ^{\Pi'}=\left(\begin{array}{ccccccc}
\frac{1}{2}	&	\frac{1}{\sqrt{10}}	&	\frac{\sqrt{3}}{2\sqrt{14}}	&	\frac{\sqrt{3}}{\sqrt{70}}	&	\frac{1}{\sqrt{35}}	&	\frac{1}{\sqrt{65}}	&	\frac{1}{2\sqrt{26}}	\\
\frac{1}{\sqrt{10}}	&	\frac{1}{5}	&	\frac{\sqrt{3}}{2\sqrt{35}}	&	\frac{\sqrt{3}}{5\sqrt{7}}	&	\frac{\sqrt{2}}{5\sqrt{7}}	&	\frac{\sqrt{2}}{5\sqrt{13}}	&	\frac{1}{2\sqrt{65}}	\\
\frac{\sqrt{3}}{2\sqrt{14}}	&	\frac{\sqrt{3}}{2\sqrt{35}}	&	\frac{3}{4}	&	-\frac{1}{2\sqrt{5}}	&	-\frac{1}{\sqrt{30}}	&	-\frac{\sqrt{7}}{\sqrt{390}}	&	-\frac{\sqrt{7}}{4\sqrt{39}}	\\
\frac{\sqrt{3}}{\sqrt{70}}	&	\frac{\sqrt{3}}{5\sqrt{7}}	&	-\frac{1}{2\sqrt{5}}	&	\frac{4}{5}	&	-\frac{\sqrt{2}}{5\sqrt{3}}	&	-\frac{\sqrt{14}}{5\sqrt{39}}	&	-\frac{\sqrt{7}}{2\sqrt{195}}	\\
\frac{1}{\sqrt{35}}	&	\frac{\sqrt{2}}{5\sqrt{7}}	&	-\frac{1}{\sqrt{30}}	&	-\frac{\sqrt{2}}{5\sqrt{3}}	&	\frac{2}{5}	&	\frac{2\sqrt{7}}{5\sqrt{13}}	&	\frac{\sqrt{7}}{\sqrt{130}}	\\
\frac{1}{\sqrt{65}}	&	\frac{\sqrt{2}}{5\sqrt{13}}	&	-\frac{\sqrt{7}}{\sqrt{390}}	&	-\frac{\sqrt{14}}{5\sqrt{39}}	&	\frac{2\sqrt{7}}{5\sqrt{13}}	&	\frac{3}{5}	&	-\frac{1}{\sqrt{10}}	\\
\frac{1}{2\sqrt{26}}	&	\frac{1}{2\sqrt{65}}	&	-\frac{\sqrt{7}}{4\sqrt{39}}	&	-\frac{\sqrt{7}}{2\sqrt{195}}	&	\frac{\sqrt{7}}{\sqrt{130}}	&	-\frac{1}{\sqrt{10}}	&	\frac{3}{4}	\\

 \end{array}\right).$$ 
\end{example}

\section{Optimal sampling designs}\label{SecOpt}
\subsection{A generic optimization problem}
In this section, we are given a fixed vector of weights $w$, and we estimate any variable $y$ by the linear estimator of weight $w$, $\hat{t}_y=\sum_{k^\in \mathbb{S}} w_k y_k$. It is common to search for sampling designs providing \textit{representative} or \textit{balanced} samples for a set of $Q$ auxiliary variables, where a sample $S$ is representative for $x$ if $\sum_{k^\in S} w_k x_k=t_x$. \citet{deville2004efficient} provide a general method, called the cube method, for selecting approximately balanced samples with (equal or unequal) fixed inclusion probabilities (usually $\Pi_k=w_k^{-1}$, so that the linear estimator is the Horvitz-Thompson estimator), and any number of auxiliary variables. 

Regarding DSDs, our approach to provide approximately balanced samples is to interpret representativity as follows: a sampling design is representative if $MSE(\hat{t}_{x^qw})=0$ for all $1\leq q\leq Q.$ This approach is for instance considered in \citet{fuller2009some}, but the inclusion probabilities of the optimal sampling design are then unknown. As the MSE is available in a closed form for DSDs, we consider representativity as an optimization problem. Precisely, we minimize the sum of the $\mathrm{MSE}$ over the set of determinantal sampling designs with first order inclusion probabilities $\pi_k=\Pi_k,\, k=1,\ldots, N$, that is over the set \[\Theta=\{K^{\Pi}\,|\,spec(K^{\Pi}) \in \left[0,1\right]^N, K^{\Pi}_{kk}=\Pi_k,\forall 1 \leq k \leq N \}=\{K^{\Pi}\,|\,0\leq K^{\Pi}\leq I_N, diag(K^{\Pi})=\Pi\}\] of contracting matrices of diagonal $\Pi$. Since the diagonal of $K^{\Pi}$ is fixed the bias is constant, and the problem 
$\underset{\Theta}{\mathcal{M}in} \displaystyle\sum_{q=1}^Q \mathrm{MSE}(\hat{t}_{x^qw})$
admits the following formulations (we pose $z^q=w*x^q$):
\begin{problem}\label{ProbP}
Find \[\underset{\Theta}{\arg\min} \sum_{q=1}^{Q} (z^q)\,^T(I_N-K^{\Pi})*\overline{K^{\Pi}})z^q=\underset{\Theta}{\arg\max} \sum_{q=1}^{Q} (z^q)\,^T(K^{\Pi}*\overline{K^{\Pi}}
)z^q.\]
\end{problem}

We now analyze this optimization problem.
\begin{itemize}
\item Problem \ref{ProbP} is well-posed, because we consider the optimization of a continuous function on the (convex) compact set of contracting matrices. 
\item The parameter set $\Theta$ is a \textit{projected spectrahedron}, that is the projection of the intersection of the cone of positive semidefinite matrices and an affine space. The optimization problem we consider here is then a particular case of the \textit{semidefinite optimization problems}.
\item This problem is nonetheless a non-convex problem, because the objective function for the minimization problem is non-convex. Algorithmic difficulties can thus be challenging, notably when $N$ is large.
\item Consider the case of nonnegative variables $x^q$. In this case the objective function of the minimization problem is concave in $K^{\Pi}$, and the mimimum is thus attained at an extremal point of the convex set $\Theta$.
\end{itemize}

The number of studies on semidefinite optimization has been growing rapidly (in the convex and linear setting) since the 90's (see for instance\citet{blekherman2013semidefinite}, \citet{vandenberghe1996semidefinite}). But while efficient algorithms exist in the case of a strictly convex objective function, problems are extremely difficult otherwise. One of the difficulties in the case of linear or concave objective functions is that the extreme points of spectrahedra do not generally admit a simple characterization. Indeed, the problem of deciding whether a given matrix is an extreme point of a given spectrahedron is NP-hard for many spectrahedra. This is for instance the case for the \textit{elliptope of correlation matrices}. 


In practice, existing semidefinite optimization algorithms fail to produce optimal solutions when $N$ is large. Indeed, we have seen in Section \ref{Sec:ConstructingUnequal} that producing a projection element in $\Theta$ (projections are extreme points, but not all extreme points are projections) is in itself a difficult task. However, we will see in our simulation studies (Section \ref{SecSim}) that the construction of a specific DSD (the $DSD(P^{\Pi})$ of Theorem \ref{theovin}) leads to (empirical) optimal solutions for one auxiliary variable. In the following, we treat the following cases:
\begin{itemize}
\item Theoritical results for $\sum_k \Pi_k\leq 1$.
\item Algorithmic minimization results, $N\leq 40$.
\item Presentation of an empirical algorithm.
\end{itemize}

The performances of the empirical algorithm are presented in Section \ref{SecSim}, for $N=5891$.

\subsection{Minimization over sampling designs of average size (less than) one}

We first consider equal-probability determinantal sampling designs of average size one. In this case, the parameter space for Problem \ref{ProbP} is 
\begin{Large}$\displaystyle\sigma$\end{Large}$=\{0\leq K\leq I_N, K_{kk}=\frac{1}{N}\}$, the spectrahedron of positive semidefinite matrices of diagonal $\frac{1}{N}$ (this set is homothetic to the set of correlation matrices, also known as the elliptope, which is the set of positive semidefinite matrices of diagonal $1$). The literature on the elliptope and linear optimization over it is abundant, see for instance  \citet{ycart1985extreme}, \citet{grone1990extremal}, \citet{laurent1995positive}, \citet{laurent1996facial}, \citet{kurowicka2003parameterization}, \citet{laurent2014new}. It is known that (for real matrices): 

\begin{theorem}[Linear optimization over the elliptope]
$\,$\\
\begin{enumerate}
\item For any integer $k$ such that $\begin{pmatrix} k+1\\ 2 \end{pmatrix} \leq N$, there exists a matrix of rank $k$ that is an extreme point of \begin{Large}$\displaystyle\sigma$\end{Large} ( \citet{grone1990extremal} Theorem 2).
\item The \textit{vertices} of \begin{Large}$\displaystyle\sigma$\end{Large} (extreme points where the normal cone to \begin{Large}$\displaystyle\sigma$\end{Large} is of rank $N$) are the projections of \begin{Large}$\displaystyle\sigma$\end{Large} (rank $1$ matrices).
\item It is NP-hard to decide whether the optimum of linear optimization problem $\displaystyle \max_{K\in \displaystyle\sigma}\langle A,K\rangle$ is reached at a vertex.
\end{enumerate}
\end{theorem}

Otherwise stated, the minimization of a linear function over the elliptope can be considerably hard, and the solution may not be a projection matrix.

Surprisingly, for this particular set (equal probability determinantal sampling designs of average size $1$), the quadratic problem is much more simpler than the linear one. Actually, the minimization Problem \ref{ProbP} for all unequal-probability sampling designs of average size less than $1$ (not only the determinantal ones) admits a simple solution. 

\begin{theorem}[Optimal sampling design, average size less than $1$]\label{ThMinSize2}
Let $\Pi$ be a vector of inclusion probabilities such that $\sum_{k=1}^{N}\Pi_k\leq 1$, and $x^1,\cdots, x^Q$ be nonnegative variables. There exists a unique sampling design that minimize $\sum_{q=1}^{Q}MSE(\hat{t}_{x^qw})$ within all sampling designs with fixed first order inclusion probabilities $\pi_k=\Pi_k$. It is the determinantal sampling design $\mathcal{P}=DSD(K^{\Pi})$, where $K^{\Pi}$ is any rank $1$ matrix with the prescribed diagonal. This sampling design $\mathcal{P}$ consists in sampling no element with probability $1-\sum_{k=1}^{N}\Pi_k$, and the single element $k$ with probability $\Pi_k$.
\end{theorem}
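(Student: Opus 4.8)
The plan is to show that, once the first-order inclusion probabilities are pinned to $\Pi$, the objective function is, up to an additive constant, a linear form in the second-order inclusion probabilities with non-negative coefficients; hence its minimum over the whole class of sampling designs is reached exactly when all the relevant $\pi_{kl}$ vanish, and such a design exists precisely because $\sum_k\Pi_k\le 1$.

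First I would eliminate the bias. Every feasible design shares the first-order probabilities $\pi_k=\Pi_k$, so by \eqref{EqMSE1} the bias $\sum_k(w_k\Pi_k-1)x^q_k$ is the same constant for each $q$, and minimizing $\sum_{q=1}^{Q}\mathrm{MSE}(\hat{t}_{x^qw})$ amounts to minimizing $\sum_{q=1}^{Q}\mathrm{var}(\hat{t}_{x^qw})$. Writing $z^q=w*x^q$ and using \eqref{eqVarDelta} with the values of $\Delta_{kl}$, I expand
\[
\sum_{q=1}^{Q}\mathrm{var}(\hat{t}_{x^qw})=\underbrace{\sum_{q=1}^{Q}\Big(\sum_{k}(z^q_k)^2\,\Pi_k(1-\Pi_k)-\sum_{k\neq l}z^q_k z^q_l\,\Pi_k\Pi_l\Big)}_{=:D}+\sum_{k\neq l}\Big(\sum_{q=1}^{Q}z^q_k z^q_l\Big)\pi_{kl},
\]
where $D$ depends only on the prescribed diagonal $\Pi$ and is therefore constant over the feasible set.

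Then I would use non-negativity. Since $\sum_{q}z^q_k z^q_l=w_kw_l\sum_q x^q_k x^q_l\ge 0$ (the weights are positive and the $x^q$ non-negative) and $\pi_{kl}=pr(k,l\in\mathbb{S})\ge 0$ for \emph{every} sampling design whatsoever, the second sum above is non-negative, so $\sum_q\mathrm{var}(\hat{t}_{x^qw})\ge D$, with equality iff $\pi_{kl}=0$ for all $k\neq l$ with $\sum_q x^q_k x^q_l>0$. To attain the bound, consider $\mathcal{P}$ drawing $\emptyset$ with probability $1-\sum_k\Pi_k$ and the singleton $\{k\}$ with probability $\Pi_k$: this is a probability on $2^U$ because $0\le\Pi_k\le 1$ and $\sum_k\Pi_k\le 1$, it has $\pi_k=\Pi_k$, and, never selecting two points, $\pi_{kl}=0$; hence it realizes $D$ and is optimal. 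Finally I would identify $\mathcal{P}$ with $DSD(K^{\Pi})$ for any rank-one $K^{\Pi}$ of diagonal $\Pi$: writing $K^{\Pi}=v\overline{v}\,^T$ with $|v_k|^2=\Pi_k$, its only nonzero eigenvalue is $\|v\|^2=\sum_k\Pi_k\le 1$, so $K^{\Pi}$ is contracting, and by Theorem \ref{thtaille} $\sharp\mathbb{S}$ is Bernoulli of parameter $\sum_k\Pi_k$; thus $\mathbb{S}$ has at most one point, and \eqref{pikd} gives $pr(\mathbb{S}=\{k\})=pr(k\in\mathbb{S})=\Pi_k$, which is exactly the law of $\mathcal{P}$ (and shows all rank-one choices yield the same design).

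For uniqueness, if a feasible design $\mathcal{P}'$ also attains $D$, then $\sum_{k\neq l}(\sum_q z^q_k z^q_l)\pi'_{kl}=0$ is a sum of non-negative terms, so $\pi'_{kl}=0$ for every pair with $\sum_q x^q_k x^q_l>0$; under the non-degeneracy of the $x^q$ this forces $\pi'_{kl}=0$ for all $k\neq l$. Then $\sum_{k<l}\pi'_{kl}=E\big(\binom{\sharp\mathbb{S}'}{2}\big)=0$ forces $\sharp\mathbb{S}'\le 1$ almost surely, and $pr(\mathbb{S}'=\{k\})=\pi'_k=\Pi_k$ pins down $\mathcal{P}'=\mathcal{P}$. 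I expect the only delicate point to be this last one: the hypothesis ``non-negative'' makes the coefficients $\sum_q z^q_k z^q_l$ non-negative, but only \emph{strict} positivity for every pair $k\neq l$ (which holds as soon as, e.g., one of the $x^q$ has all coordinates positive) actually yields uniqueness; existence and optimality of $\mathcal{P}$, by contrast, need only non-negativity.
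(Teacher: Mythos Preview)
Your argument is correct and essentially the same as the paper's: both rewrite the variance using $\Delta_{kl}=\pi_{kl}-\Pi_k\Pi_l$ and use the nonnegativity of $\pi_{kl}$ together with that of $w_kx^q_kw_lx^q_l$ to obtain a lower bound attained exactly when all $\pi_{kl}=0$, then identify the minimizer with the rank-one DSD via the sample-size result (Theorem~\ref{thtaille}/Corollary~\ref{corEspTaille}). Your decomposition into $D+\sum_{k\neq l}(\sum_q z^q_kz^q_l)\pi_{kl}$ is just the paper's inequality written with $\pi_{kl}$ isolated.

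Where you add something is the uniqueness discussion: the paper asserts ``the only sampling design that satisfies these equalities is $\mathcal{P}$'' without comment, whereas you correctly observe that the equality $\sum_{k\neq l}(\sum_q z^q_kz^q_l)\pi'_{kl}=0$ forces $\pi'_{kl}=0$ only for pairs with $\sum_q z^q_kz^q_l>0$, so full uniqueness of the design needs a mild non-degeneracy (e.g.\ one $x^q$ strictly positive). Your further step---$\sum_{k<l}\pi'_{kl}=E\binom{\sharp\mathbb{S}'}{2}=0\Rightarrow \sharp\mathbb{S}'\le 1$---then cleanly pins down $\mathcal{P}'=\mathcal{P}$. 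This is a genuine refinement of the paper's proof; the existence and optimality parts coincide.
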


\begin{proof}
Let $\cP$ be any sampling design with fixed first order inclusion probabilities $\pi_k=\Pi_k$. As for $k\neq l$, $\Delta_{kl}\geq -\pi_k\pi_l$ then \begin{eqnarray*} 
\displaystyle\sum_{q=1}^Q var(\hat{t}_{x^qw})&=& \sum_{q=1}^{Q}\left[\displaystyle\sum_{k\in U} (w_k x^q_k)^2(\Pi_k-\Pi_k^2) +\displaystyle\sum_{k\neq l\in U} w_kx^q_k w_l x^q_l \Delta_{kl}\right]\\
&\geq& \displaystyle\sum_{q=1}^{Q}\left[\sum_{k\in U} (w_k x^q_k)^2(\Pi_k-\Pi_k^2) -\displaystyle\sum_{k\neq l\in U} w_k x^q_k w_l x^q_l \pi_k\pi_l\right],
\end{eqnarray*} with equality iff for $k\neq l$, $\Delta_{kl}= -\pi_k\pi_l$ that is $\pi_{kl}=0$.
The only sampling design that satisfies these equalities is $\cP$, which is thus the optimal design.\\ 
Consider now $K^{\Pi}=bb\,^T$ a rank one matrix with the prescribed diagonal. Then $||b||^2=\sum_{k=1}^{N}\Pi_k\leq 1$, and $K^{\Pi}$ is a contraction of rank $1$. it follows that $DSD(K^{\Pi})$ exists, and has no more than $1$ element by Corollary \ref{corEspTaille}, so that $\pi_{kl}=0$, $k\neq l$. Finally $DSD(K^{\Pi})$ achieves this lower bound, and 
$DSD(K^{\Pi})=\mathcal{P}$.
\end{proof}

If $\sum_{k=1}^{N}\Pi_k= 1$ (in particular if $\Pi_k=\frac{1}{N}$) we get the following corollaries:
\begin{corollary}[Minimization over the elliptope]\label{CorMinSize1}
Assume the variables $x^1,\cdots, x^Q$ are nonnegative.
Then the solutions of Problem \ref{ProbP} over \begin{Large}$\displaystyle\sigma$\end{Large} are the rank one projections with diagonal $\frac{1}{N}$ (vertices of \begin{Large}$\displaystyle\sigma$\end{Large}).\\
More generally, the solutions of Problem \ref{ProbP} over \begin{Large}$\displaystyle\Theta$\end{Large} with $\sum_{k=1}^{N}\Pi_k= 1$ are the rank one projections with diagonal $\Pi_k$.
\end{corollary}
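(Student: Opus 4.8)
The plan is to deduce everything from Theorem \ref{ThMinSize2} together with a description of which kernels in the feasible set realize the optimal design it identifies. Since every $K\in\Theta$ shares the diagonal $\Pi$, the matrix $K*I_N$ is fixed, so by \eqref{eqMSE} the bias is constant over $\Theta$ and minimizing $\sum_{q=1}^{Q}\mathrm{MSE}(\hat t_{x^q w})$ over $\Theta$ is the same as minimizing $\sum_q \mathrm{var}(\hat t_{x^q w})$. Applying Theorem \ref{ThMinSize2} (legitimate because $\sum_k\Pi_k=1\leq 1$ and the $x^q$ are nonnegative), there is a unique optimal sampling design $\mathcal P$ among \emph{all} designs with first-order inclusion probabilities $\Pi$: it is the design that selects the single element $k$ with probability $\Pi_k$ (the empty sample having probability $1-\sum_k\Pi_k=0$), and it satisfies $\pi_{kl}=0$ for all $k\neq l$. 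Consequently a kernel $K\in\Theta$ solves Problem \ref{ProbP} if and only if $DSD(K)=\mathcal P$.

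The main step is then to show that $DSD(K)=\mathcal P$ (equivalently, for $K\in\Theta$, that $\pi_{kl}=0$ for all $k\neq l$) holds exactly when $K$ is a rank-one projection with diagonal $\Pi$. By \eqref{pikld}, for $K\in\Theta$ we have $\pi_{kl}=\Pi_k\Pi_l-|K_{kl}|^2$, so $\pi_{kl}=0$ for all $k\neq l$ amounts to $|K_{kl}|^2=\Pi_k\Pi_l$, i.e. to the vanishing of every $2\times2$ principal minor $\det(K_{|\{k,l\}})$. Writing the positive semidefinite $K$ as a Gram matrix $K_{kl}=\langle m_l,m_k\rangle$ with $\|m_k\|^2=\Pi_k$, this vanishing is the equality case of Cauchy--Schwarz, so all the vectors $m_k$ are pairwise colinear; after deleting the indices with $\Pi_k=0$ (whose rows and columns of $K$ are zero), they are all multiples of a single unit vector, which forces $\mathrm{rank}(K)\leq 1$, and since $\mathrm{tr}(K)=\sum_k\Pi_k=1$ the matrix $K$ is then a rank-one orthogonal projection with diagonal $\Pi$. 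Conversely such a projection $K=v\overline{v}\,^T$ with $|v_k|^2=\Pi_k$ exists (take $v_k=\sqrt{\Pi_k}$), lies in $\Theta$, and has $|K_{kl}|^2=\Pi_k\Pi_l$, hence $\pi_{kl}=0$ for all $k\neq l$; and by Corollary \ref{corEspTaille} its design has at most one point and first-order probabilities $\Pi$, so it is $\mathcal P$.

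Combining the two parts, the solutions of Problem \ref{ProbP} over $\Theta$ with $\sum_k\Pi_k=1$ are precisely the rank-one projections with diagonal $\Pi$; taking $\Pi_k=1/N$ and invoking the preceding theorem on linear optimization over the elliptope, which states that the vertices of $\sigma$ are exactly its rank-one (projection) elements, yields the first assertion. Note that all of these kernels give the \emph{same} determinantal sampling design $\mathcal P$, so the uniqueness of the optimal design from Theorem \ref{ThMinSize2} is perfectly consistent with the existence of a whole family of optimal kernels. I expect the only genuinely non-routine ingredient to be the linear-algebra fact that a positive semidefinite matrix whose $2\times2$ principal minors all vanish has rank at most one (plus the innocuous handling of zero diagonal entries); everything else is bookkeeping on top of Theorem \ref{ThMinSize2} and \eqref{pikld}.
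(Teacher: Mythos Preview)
Your argument is correct and follows the paper's intended route: the corollary is stated as an immediate consequence of Theorem \ref{ThMinSize2} (the paper gives no separate proof), and you carry out exactly that deduction. The only addition you make is the explicit linear-algebra verification that a kernel $K\in\Theta$ satisfies $\pi_{kl}=0$ for all $k\neq l$ precisely when $K$ is a rank-one projection with diagonal $\Pi$; the paper leaves this implicit, so your write-up is in fact a little more complete than the paper's.
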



\begin{corollary}[SRS$(1)$ is optimal]\label{CorMinSize2}
The sampling design with equal first order inclusion probabilities $\pi_k=\frac{1}{N}$ that minimizes the sum of the MSEs for nonnegative variables is the SRS of size $1$, which is determinantal.
\end{corollary}

Nonnegativity is crucial in the previous results. Consider the following example:
\begin{example}\label{Ex1-1}
Let U=\{1,2\}, $x_1=-1, x_2=1$ and $\Pi_1=\Pi_2=\frac{1}{2}$. Then the variance of any equal-weighted sampling design of average size one that satisfies the Sen-Yates-Grundy conditions is $var(\hat{t}_{xw})=2-8\Delta_{1,2}\geq 2$, which is the variance of the estimator under Poisson sampling.
\end{example}

For more complex spectrahedra ($\sum_k \Pi_k>1$), a characterization of the solutions of the problem \ref{ProbP} is unknown. In particular, the question whether the solutions are always projections for integer sums remains open. 

\subsection{Results of the minimization algorithm, $N\leq 40$}
We performed nonlinear semidefinite optimization using specific algorithms (\citet{polyak1992modified}, (\citet{tutuncu2001sdpt3}), for various vectors of inclusion probabilities, integers $N\leq 40$ (size of the population) and auxiliary variables . And our empirical conclusions are:
\begin{itemize}
\item When $\sum_{k} \Pi_k=n$ is an integer, the minimizer is always a projection.
\item When $\Pi_k=\frac{n}{N}$ and $n$ divides $N$, and for one auxiliary variable only ($q=1$), the optimal determinantal sampling design is stratified, that is the solution matrix is (up to a change of base) a block diagonal matrix, with each block a rank one projection.
\item However, for more than one auxiliary variable ($q>1$) the solution is generally not stratified, for $\Pi_k=\frac{n}{N}$ and $n$ divides $N$.
\end{itemize}

\subsection{An empirical algorithm}

The previous results suggest that the following method will produce a low MSE/variance estimator, but for $1$ nonnegative auxiliary variable $x$ only. As before, we search for a sampling design with fixed first order inclusion probabilities $\Pi_k$ such that $\sum_{k=1}^N \Pi_k=n$, and the vector of weights $w$ is fixed.

\begin{algo}\label{algOpt}
Perform the following steps:
\begin{enumerate}
\item Rank the elements by the auxiliary variable $z=w*x$. This produces a permutation $\sigma$ on the population.
\item Construct the projection matrix with diagonal $\Pi^{\sigma}_{k}=\Pi_{\sigma(k)}, 1\leq k\leq N$ as in Theorem \ref{theovin}.
\item Sample $\mathbb{S}$ from $DSD(\Pi^{\sigma})$ on the population $\{1,\cdots, N\}$ by Algorithm \ref{alg1}.
\item Set $\hat{t}_{x^w}=\sum_{k\in \mathbb{S}}w_{\sigma(k)} x_{\sigma(k)}$.
\end{enumerate}
\end{algo}

Indeed, the algorithm will actually produce a perfect estimator for a stratified auxiliary variable (Theorem \ref{thmQDeville} and Corollary \ref{corovin}). And if the conditions of Theorem \ref{thmQDeville} are not fulfilled, then the matrix will nonetheless be quasi-stratified, and a rank $1$ contraction matrix on each stratum. Therefore, restricted to each stratum, the solution will achieve the minimal variance by Theorem \ref{ThMinSize2}. \\

We finally describe a method that improves Algorithm \ref{algOpt} by performing rotations that decrease the objective function (MSE of the estimator in our case). In Subsection \ref{SubSecEC}, we described a rotation $Q_2$ that changes two diagonal coefficients $a_1$ and $a_2$ into two new elements $\Pi_1$ and $\Pi_2$. By letting $a_1=\Pi_1$ and $a_2=\Pi_2$ in the formulas (with $\Pi_1\neq \Pi_2$), we end up with two rotation matrices. The trivial solution $I_2$ ($t=0$) and a second non-trivial solution that changes the off-diagonal elements without altering neither the diagonal nor the spectrum:     

\[Q_2^t = \left( \begin{array}{cc}
\sin \theta & \cos \theta  \\
-\cos \theta & \sin \theta \\
  \end{array}\right ), 
\text{ with }t = \frac{2Re (a_{21})}{\Pi_2-\Pi_{1}}, \sin \theta = \frac{1}{\sqrt{1+t^2}}, \cos \theta = t\sin \theta.\]  

For a given matrix $K$ with diagonal $\Pi$, this provides for any pair $(k,l)$ such that $\Pi_k\neq \Pi_l$ a rotation matrix $W(k,l)$ that changes only the elements $K_{kl}$ and $K_{lk}$. We thus derive a ``greedy'' improvement of Algorithm \ref{algOpt}.
\begin{algo}\label{algOpt2}
Perform the following steps:
\begin{enumerate}
\item Construct the matrix $\Pi^{\sigma}$ as in Algorithm \ref{algOpt}. 
\item For all $(k,l)$ such that $\Pi_k\neq \Pi_l$, update $\Pi^{\sigma}$ into $W(k,l)\Pi^{\sigma}W(k,l)^T$ iff this decreases the objective function.
\item Conclude as in Algorithm \ref{algOpt}.
\end{enumerate}
\end{algo}

\section{Simulation studies}\label{SecSim}

We present below our simulations studies. They are based on real data sets of auxiliary variables. We consider the problem of selecting primary units (PUs) for the French master sample. The population consists of $N=5891=43*137$ geographical entities partitioning the French Continental territory (figure \ref{france}). 

\begin{figure}[h!]
\begin{center}
\caption{Partitioning the French continental territory in 5891 primary units}\label{france}
 \includegraphics[scale=0.3]{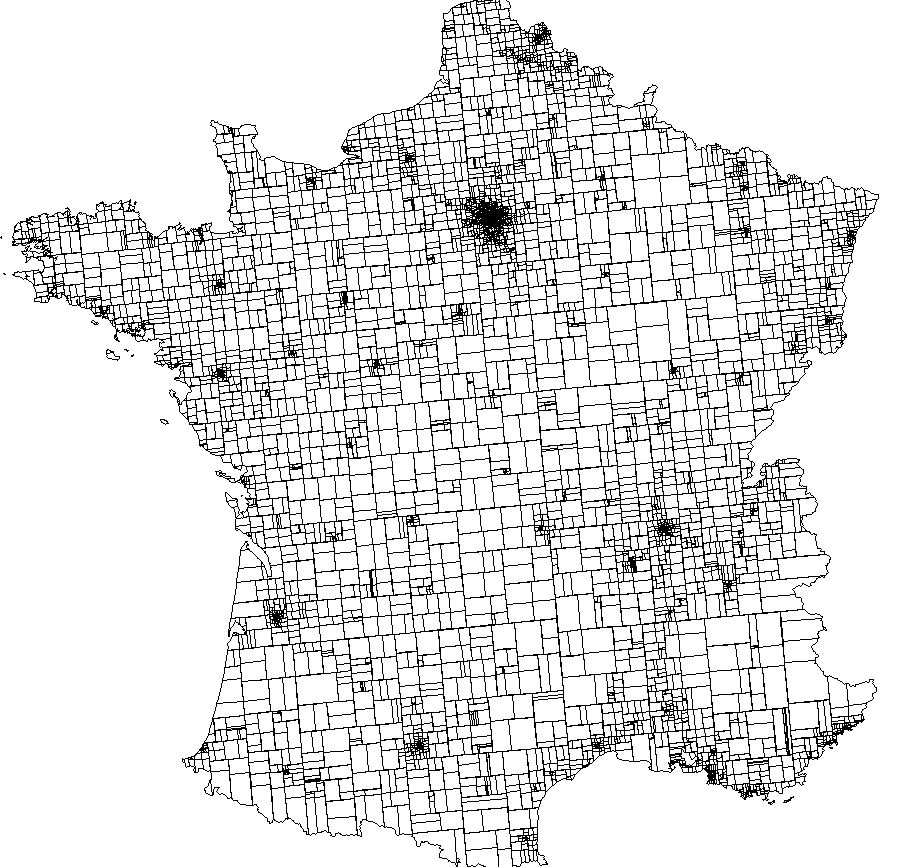}
\end{center}
\end{figure}

For each PU we observe the number of main dwellings ($x^1$), the total amount of pensions paid to the inhabitants ($x^2$), of unemployment benefit ($x^3$), of revenues from economic activity ($x^4$), normalized so that each total is $1$. Computing the variance of these normalized variables is then equivalent with computing the coefficient of variation (CV) of the original variables. We then consider two sets of inclusion probabilities:  $\Pi_k=nN^{-1}$ and $\Pi_k'=nx^1_kt_{x^1}^{-1}$ for $n$ =$30,43,100,137,200,400,600.$ \\

We perform (approximated) balanced sampling on each single variable. Precisely, for each variable $x^q$, $q=2,3,4$, each sample size and each set of inclusion probabilities we perform the first two steps of Algorithm \ref{algOpt} with the weights corresponding to those of the Horvitz-Thompson estimator: $w_k=\Pi_k^{-1}$ and $w_k'=\Pi_k^{'-1}$. We then compute the exact variance of this estimator for $x^q$ using Proposition \ref{PropAlg}.  We compare the result with the variance obtained for other popular sampling designs: a same size systematic sampling design controlled by $x^q$, a second one controlled by $x^qw$, and a sampling design resulting from the cube method balanced on $x^q$ (\citet{deville2004efficient}). For these other designs, the variance has to be computed by Monte-Carlo simulations. Systematic sampling is both simple to implement and known to achieve low variance for ranked data, and the cube method is considered as one of the best method to produce near optimal estimators. These two types of sampling are now commonly used in official statistics.\\

In each case the determinantal sampling design leads to the lowest CV (figure \ref{fig:1}). For the second set of inclusion probabilities, the systematic design has better results when controlled by $x^qw'$ rather than by just $x^q$. Moreover, for the latter, the CV might locally increase with the sample size. \\

\begin{figure}[h!] 
\caption{Empirical Studies} \label{fig:1}
\begin{subfigure}[h!]{0.48\textwidth}
\includegraphics[scale=0.42]{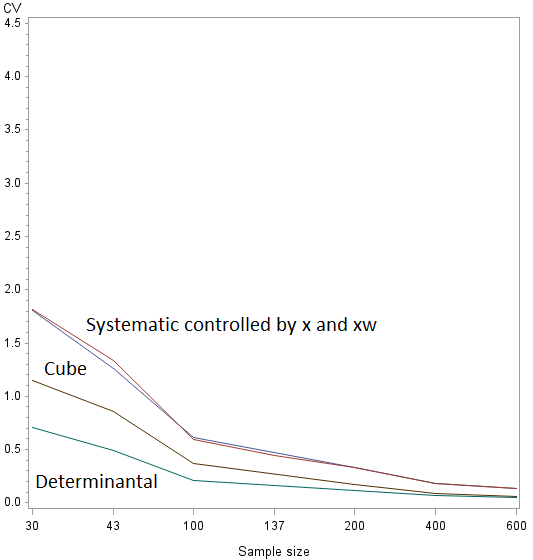}
\caption{$x^2$ and $\Pi_k=nN^{-1}$} \label{fig:a}
\end{subfigure}\hspace*{\fill}
\begin{subfigure}[h!]{0.48\textwidth}
\includegraphics[scale=0.42]{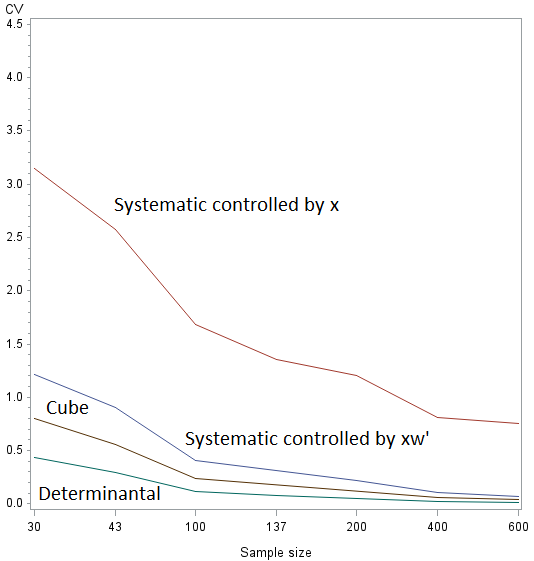}
\caption{$x^2$ and $\Pi_k'=nx^1_kt_{x^1}^{-1}$} \label{fig:b}
\end{subfigure}

\medskip
\begin{subfigure}[h!]{0.48\textwidth}
\includegraphics[scale=0.42]{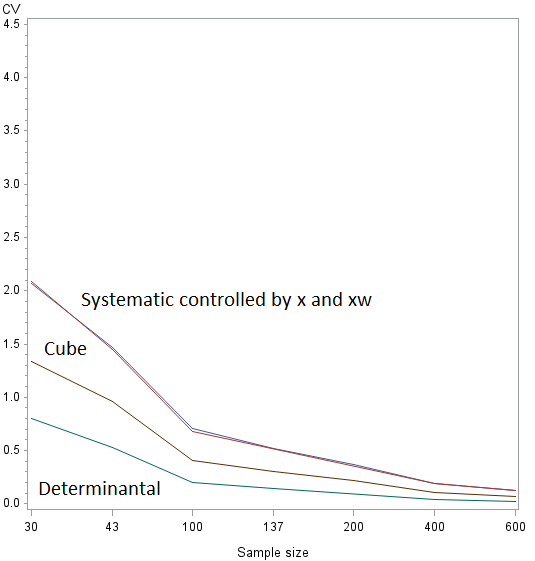}
\caption{$x^3$ and $\Pi_k=nN^{-1}$} \label{fig:c}
\end{subfigure}\hspace*{\fill}
\begin{subfigure}[h!]{0.48\textwidth}
\includegraphics[scale=0.42]{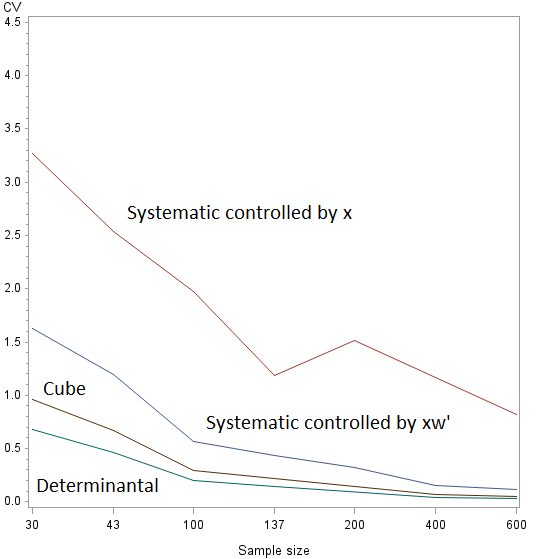}
\caption{$x^3$ and $\Pi_k'=nx^1_kt_{x^1}^{-1}$} \label{fig:d}
\end{subfigure}

\medskip
\begin{subfigure}[h!]{0.48\textwidth}
\includegraphics[scale=0.42]{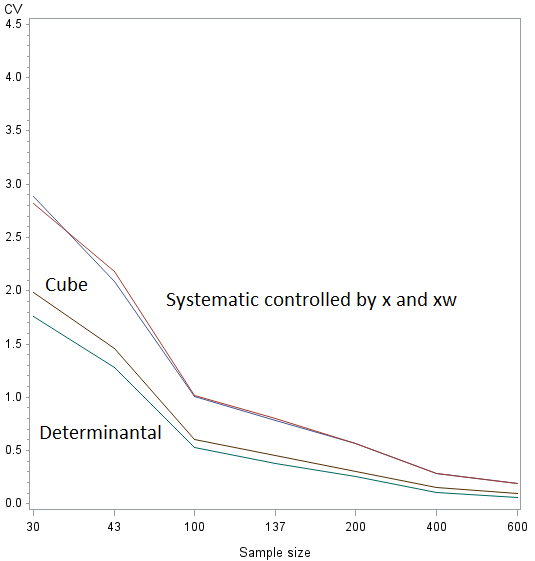}
\caption{$x^4$ and $\Pi_k=nN^{-1}$} \label{fig:e}
\end{subfigure}\hspace*{\fill}
\begin{subfigure}[h!]{0.48\textwidth}
\includegraphics[scale=0.42]{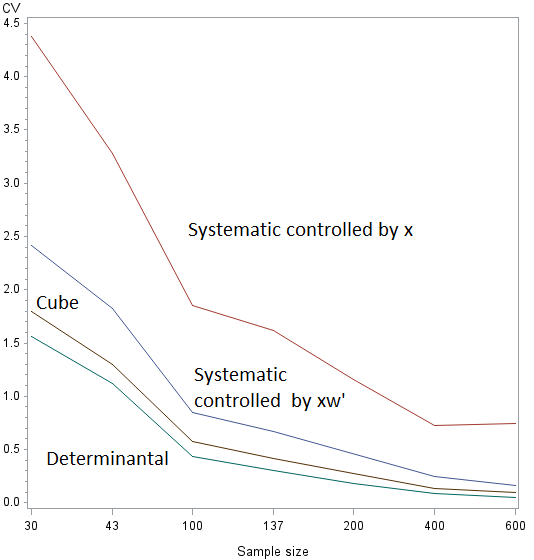}
\caption{$x^4$ and $\Pi_k'=nx^1_kt_{x^1}^{-1}$} \label{fig:f}
\end{subfigure}

\end{figure}
\newpage

\section{Acknowledgments}
We acknowledge Xu Kai from the French National School of Statistics and Economic Administration (ENSAE) for his helpful implementation of semidefinite optimization with MATLAB. We also thank Martin Brady from the Australian Bureau of Statistics (ABS) and Antoine Chambaz from Modal'X (Universit\'e Paris-Nanterre) for their useful comments. 
\section*{References}
\bibliographystyle{apalike}
\bibliography{biblio}
\end{document}